\newtheorem{theorem}{Theorem}[section]
\newtheorem{lemma}[theorem]{Lemma}
\newtheorem{proposition}[theorem]{Proposition}
\newtheorem{corollary}[theorem]{Corollary}
\theoremstyle{definition}
\newtheorem{remark}[theorem]{Remark}
\newtheorem{definition}[theorem]{Definition}
\newcommand{\todo}[1]{\textbf{\color{red} ToDo: \emph{ #1\/}}}
\renewcommand{\todo}[1]{}
\newcommand\E{\mathcal{E}}
\newcommand\V{\mathcal{V}}
\renewcommand\L{\mathcal{L}} 
\newcommand\Cut{\mathcal{C}}
\newcommand\genN{\mathcal{G}_N} 
\newcommand\Cnct{\mathfrak{C}}
\newcommand\opH{\mathcal{H}}
\newcommand\R{\mathbb{R}}
\newcommand\Q{\mathbb{Q}}
\newcommand\T{\mathbb{T}}
\newcommand\C{\mathbb{C}}
\newcommand\Z{\mathbb{Z}}
\newcommand\ba{\boldsymbol{\alpha}}
\newcommand\bx{\boldsymbol{x}}
\newcommand\Lmat{\boldsymbol{L}}
\newcommand\Smat{\boldsymbol{S}}
\newcommand\Id{\boldsymbol{I}}
\newcommand\tS{\widetilde{S}}
\newcommand\cc[1]{\overline{#1}}
\newcommand\tx{\varkappa} 
\newcommand\gradx{\nabla_\varkappa} 
\newcommand\lv{\vec{l}}
\newcommand\xv{\vec{\tx}}
\newcommand\av{\vec{\alpha}}
\renewcommand\S{\Sigma}
\newcommand\Sreg{\Sigma^{reg}}
\newcommand\Sgen{\Sigma^{g}}
\newcommand\bgm{\mu_{\lv}^{g}} 
\newcommand\CR{\mathcal{R}}
\newcommand\siloc[1]{\sigma^{(#1)}}
\newcommand\beloc[1]{\beta^{(#1)}}
\renewcommand\Pr[1]{p_{#1}}
\newcommand\Prc[2]{\mathbb{P}\left(#1\left|#2\right.\right)}
\newcommand\M{\mathcal{M}} 
\newcommand\inv{\mathcal{I}}
\newcommand\Diff[2]{\frac{d#1}{d#2}}
\DeclareMathOperator{\diag}{diag}
\DeclareMathOperator{\normal}{\hat{n}}
\DeclareMathOperator{\Bin}{Bin}
\begin{document}

\title{Nodal statistics on quantum graphs}

\author{Lior Alon}
\address{Department of Mathematics, Technion --- Israel Institute of
  Technology, Haifa, Israel}
\author{Ram Band}
\address{Department of Mathematics, Technion --- Israel Institute of
  Technology, Haifa, Israel}
\author{Gregory Berkolaiko}
\address{Department of Mathematics, Texas A\&M University, College
  Station, TX 77843-3368, USA}

\begin{abstract}
  It has been suggested that the distribution of the suitably
  normalized number of zeros of Laplacian eigenfunctions contains
  information about the geometry of the underlying domain.  We study
  this distribution (more precisely, the distribution of the ``nodal
  surplus'') for Laplacian eigenfunctions of a metric graph.
  The existence of the distribution is established, along with its
  symmetry.  One consequence of the symmetry is that the graph's first
  Betti number can be recovered as twice the average nodal surplus of
  its eigenfunctions.  Furthermore, for graphs with disjoint
  cycles it is proven that the distribution has a universal form ---
  it is binomial over the allowed range of values of the surplus.  To
  prove the latter result, we introduce the notion of a local nodal
  surplus and study its symmetry and dependence properties,
  establishing that the local nodal surpluses of disjoint cycles
  behave like independent Bernoulli variables.
\end{abstract}

\maketitle

\section{Introduction}
\label{sec:introduction}

Studying various properties of the nodal sets of Laplacian
eigenfunctions is a subject with a long history in mathematical
physics.  The number of the zeros or the nodal domains (depending on
the context and the dimension) of the $n$-th eigenfunction is one of
the simplest quantities one can observe experimentally. Yet, analytical
study of this quantity as a function of $n$ is complicated
by its non-locality, which can be appreciated by observing that
different nodal domains of the same eigenfunction can vary wildly in
size and shape.  Classical results in estimating this number include
those of Sturm \cite{Stu_jmpa36}, Courant \cite{Cou_ngwg23} and
Pleijel \cite{Ple_cpam56}, with notable recent contributions by Ghosh,
Reznikov and Sarnak \cite{GhoRezSar_gafa13} and by Jung and Zelditch
\cite{JunZel_ma16,JunZel_jde16}.  In a series of works of Smilansky
and co-authors
\cite{BluGnuSmi_prl02,GnuSmiWeb_wrm04,GnuSmiSon_jpa05,GnuKarSmi_prl06,BanShaSmi_jpa06,KarSmi_jpa08,BanOreSmi_incoll08}
it has also been proposed that studying the \emph{distribution} of the
appropriately rescaled number of nodal domains can reveal much about
the geometry of the underlying system.  This line of thought has lead
to such results as the Bogomolny and Schmit \cite{BogSch_prl02}
prediction for the average number of the nodal domains (by analogy
with a percolation model), a proof by Nazarov and Sodin
\cite{NazSob_ajm09} that the average rescaled number of the nodal domains for
random waves is non-zero, as well as a slew of high-precision
numerical studies \cite{Nas_thesis11,Kon_thesis12,BelKer_jpa13}.

In this paper we will investigate the
distribution of the nodal count of Laplacian eigenfunctions on
\emph{metric graphs}, a class of models which was used to study the
nodal count distributions from the very start
\cite{GnuSmiWeb_wrm04,BanShaSmi_jpa06,BanOreSmi_incoll08}.  We will
show, first of all, that the statistical distribution of the nodal
count is a well defined object.  The nodal count of the $n$-th
eigenfunction, when shifted down by $n-1$, takes values in a bounded
range of integers; we call the shifted count the \emph{nodal surplus}.
For any graph we will show that the limiting frequency of the
appearance of a given surplus in the spectral sequence can be
calculated as an integral of a piecewise constant function over an
analytic subvariety of a torus which is called \emph{secular
  manifold}.

The nodal count distribution is shown to be symmetric around its mean,
which is equal to half the first Betti number $\beta$ of the graph;
conversely, the first Betti number can be recovered from the nodal
statistics.  Furthermore, for a class of graphs whose cycles are
disjoint, we will prove that, despite knowing neither the individual
eigenvalues nor the zero count of individual eigenfunctions, we can
predict the limiting nodal count distribution.  It takes a universal
form --- the binomial distribution over the a priori allowed range of
values, from 0 to $\beta$.

To prove the latter, we introduce a new concept of a \emph{local nodal
  surplus}.  That such a quantity can be defined at all is very
interesting in itself, due to the issue of non-locality mentioned
above.  To explain this concept informally, recall that the global
nodal surplus can be viewed as a deviation of the number of zeros from
the baseline $n-1$, attributable to the presence of cycles in the
graph.  One therefore expects that the extra number of zeros is
``localized'' on the cycles and, if the graph has block structure (can
be disconnected by cutting bridges, for example), one should be able
to talk about the local nodal surplus of an individual block.  This
notion will be rigorously defined in this paper by analytic means.
Its geometric meaning is far from obvious: the global nodal
surplus is the difference between the number of zeros and $n-1$, and
while the local meaning of the number of zeros is obvious enough,
there is no local analogue of the eigenfunction's number $n$.  Our
analytic definition, however, allows us to prove that for a graph with
disjoint cycles, the local surpluses of the cycles behave like
\emph{independent} Bernoulli variables, hence the binomial
distribution of the global nodal surplus.

\section{Definitions and Main Results}
\label{sec:definitions_results}

Let $\Gamma\left(\V,\E,l\right)$ be a finite connected metric graph
with a set of vertices $\V$ and a set of edges $\E$.  The sizes of the
sets $\V$ and $\E$ are denoted $V$ and $E$ correspondingly.  The last
entry of the triple is the length function $l: \E \to \R_+$ which
associates to each edge $e\in\E$ a positive length which we will
denote $l_e$.  We will identify each edge with an interval $[0,l_e]$
of the corresponding length.  In doing so one needs to choose an
orientation for the edge, but this can be done arbitrarily and does
not affect the results in any way.  Note that we allow multiple edges
between the same pair of vertices and also edges with both ends at the
same vertex (\emph{loops}).

A \emph{quantum graph }is a metric graph $\Gamma\left(\V,\E,l\right)$
equipped with a \emph{Schr\"odinger type operator}
acting on the Hilbert space $\bigoplus_{e\in E} L^2([0,l_e])$ with a
suitable domain.  We will not consider potentials, restricting our
attention to Laplace operator
\begin{equation}
  \label{eq:def_laplacian}
  \opH=-\Delta,\qquad
  \opH:\,f \mapsto -\Diff{^{2}f}{x_{e}^{2}}.
\end{equation}
The \emph{magnetic Schr\"odinger operator} will also play an important
role; we will define it in Section~\ref{sec:mag_nodal}.

In this paper we will consider the most common vertex conditions
for which these operators are self-adjoint.  We say that a function
obeys the \emph{Neumann boundary conditions} if at any vertex $v\in\V$
it is continuous and
\begin{equation}
  \label{eq:current_cons}
  \sum_{e\in E_{v}} \frac{d}{dx}f(v) = 0
\end{equation}
where $E_{v}$ is the set of edges incident to the vertex $v$, and by
convention the derivatives are taken into the edge $e$.  At a
vertex of degree one, the above conditions reduce to the standard
Neumann condition $f'(v)=0$.  At such vertices we will also allow
Dirichlet conditions (i.e.\ $f(v)=0$).  A connected quantum graph
different from a circle or a polygon and with the above vertex
conditions will be called a \emph{nontrivial standard graph}.

Further details about theory of quantum graphs can be found in the
books \cite{GnuSmi_ap06,BerKuc_graphs,Mugnolo_book} as well as the
recent elementary notes \cite{Ber_prep16}.

\subsection{The nodal surplus}
\label{sec:nodal_surplus}

Since our quantum graph is compact, the operator $\opH$ has a discrete
spectrum of eigenvalues $\left\{ \lambda_{n}\right\} _{n=1}^{\infty}$
and corresponding eigenfunctions
$\left\{ f_{n}\right\}_{n=1}^{\infty}$.  For the operators presented
here, the spectrum is non-negative, and we will use the notation
$\lambda_{n}=k_{n}^{2}$. From here on we will also refer to
$k_{n}\ge0$ as the eigenvalue of the graph. The eigenfunctions of
\eqref{eq:def_laplacian} can be chosen to be real and, if the
eigenfunction does not vanish on entire edges (which is possible on
graphs due to failure of unique continuation principle), one can count
the number of the zeros of the $n$-th eigenfunction.  This quantity
will be denoted by $\phi_n$ and will be the main object of our study.

From now on we will call $k_n$ and $f_n$ \emph{generic eigenvalue
  and eigenfunction} if the eigenvalue $k_n$ is simple and the
eigenfunction $f_n$ does not vanish on the vertices (and therefore
edges) of the graph.  We will routinely assume that the edge lengths
are independent over the field $\Q$ of rational numbers (or
\emph{rationally independent}).  This will be shown to guarantee that
a majority of the eigenvalues are generic\footnote{unless the
  graph is a circle or a polygon, which we specifically excluded when
  defining a nontrivial standard graph} (see
\cite{Fri_ijm05,BerLiu_jmaa17} and Appendix~\ref{sec:volume_Sgen}).
Furthermore, if the graph has no loops, for a choice of
rationally independent edge lengths \emph{all} eigenvalues are generic, hence the name.

The nodal count of a tree graph
is $\phi_{n}=n-1$ which is a generalization of Sturm's oscillation
theorem that was obtained in \cite{PokPryObe_mz96,Sch_wrcm06}
(interestingly, the converse result has also been established
\cite{Ban_ptrsa14}: if the nodal count is $\phi_n=n-1$ then the graph
is a tree). For graphs which are not trees $n-1$ provides a baseline
from which the actual number of zeros does not stray very far.
Defining the $n$-th \emph{nodal surplus} by $\sigma_n = \phi_n -
(n-1)$ we have the following bounds
\begin{equation}
  \label{eq:surplus_bounds}
  0 \leq \sigma_n \leq \beta,
\end{equation}
where $\beta$ is the number of independent cycles on the graph (i.e.\
the number of generators of the first homology group of the graph ---
the first Betti number), and is equal to
\begin{equation}
  \beta = E - V + 1.
  \label{eq:def_beta}
\end{equation}
We remark that the $1$ above stands for the number of connected components
of the graph.  The upper bound was proven in \cite{GnuSmiWeb_wrm04}
and the lower bound in \cite{Ber_cmp08} (see also
\cite{BanBerSmi_ahp12}).

An interesting insight on the nature of these bounds comes from the
characterization of the nodal surplus in terms of the Morse index of
the eigenvalue as a function of certain parameters
\cite{Ber_apde13,Col_apde13,BerWey_ptrsa14}.  This theorem will play a
central role in our study and we review it in
Section~\ref{sec:mag_nodal}.

\subsection{Main Results}
\label{sec:main_results}

In order to investigate the nodal surplus of a graph we wish to define
a \textit{surplus distribution} that will give the density of a given
value in the nodal surplus sequence.
\begin{theorem}
  \label{thm:existence_of_distrib}
  Let $\Gamma$ be a nontrivial standard graph with rationally
  independent lengths.  Then the \emph{nodal surplus distribution} of
  $\Gamma$ is a well defined probability distribution on
  the set $\{0,1,\ldots,\beta\}$ given by
  \begin{equation}
    \Pr{s} = \lim_{N\to\infty}
    \frac{\left|\left\{n\in\genN|\,\sigma_n=s\right\}\right|}
    {\left|\genN\right|},
    \qquad 0 \leq s \leq \beta,
    \label{eq:surplus_distr_defn}
  \end{equation}
  where $\genN$ is the set of indices $1 \leq n \leq N$ such that
  $k_n$ is generic. Furthermore, the distribution is symmetric, in the
  sense that
  \begin{equation}
    \Pr{s}=\Pr{\beta-s}, \quad 0\leq s \leq \beta,
    \label{eq:symmetric_distr}
  \end{equation}
  and, therefore, the value of $\beta$ can be recovered as twice the
  average nodal surplus
  \begin{equation}
    \label{eq:recover_beta}
    \beta = 2\,\mathbb{E}(\sigma_n)
    := \lim_{N\to\infty} \frac{2}{\left|\genN\right|}
    \sum_{n\in\genN} \sigma_n.
  \end{equation}
\end{theorem}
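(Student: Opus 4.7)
The plan is to split the theorem into three sequential claims: existence of the limit, symmetry of the resulting distribution, and recovery of $\beta$ from its mean.

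For existence, I would encode the spectrum via the secular equation on a torus. The generic eigenvalues $k_n > 0$ correspond to the times at which the linear flow $t \mapsto (t l_e)_{e \in \E} \pmod{2\pi}$ on $\T^E$ meets an analytic \emph{secular manifold} $\S \subset \T^E$, the object promised in the introduction. Rational independence of $\{l_e\}$ makes this flow equidistributed on $\T^E$ by Weyl's theorem, and a Barra--Gaspard-type transfer argument turns linear equidistribution on $\T^E$ into equidistribution of the intersection points on $\S$ with respect to the natural measure $\bgm$ supported on $\Sgen$. Using the magnetic Morse-index characterization of $\sigma_n$ due to \cite{Ber_apde13, BerWey_ptrsa14} (to be reviewed in Section~\ref{sec:mag_nodal}), one shows that $\sigma$ descends to a well-defined, piecewise constant function $\Sgen \to \{0, 1, \ldots, \beta\}$. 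Formula \eqref{eq:surplus_distr_defn} then reads $\Pr{s} = \bgm(\{\sigma = s\})$, with the limit existing by equidistribution and with $|\genN|/N \to 1$ handling the denominator.

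For symmetry, the strategy is to produce a $\bgm$-preserving involution $\tau : \Sgen \to \Sgen$ with $\sigma \circ \tau = \beta - \sigma$. Given such $\tau$, a change of variables $\bgm(\{\sigma = s\}) = \bgm(\{\sigma = \beta - s\})$ yields \eqref{eq:symmetric_distr} at once. The natural candidate for $\tau$ arises from the Morse-theoretic picture: at each point of $\Sgen$, $\sigma$ equals the Morse index of a nondegenerate magnetic Hessian $Q$, a symmetric $\beta \times \beta$ matrix whose positive and negative indices sum to $\beta$. One therefore seeks $\tau$ so that $Q(\tau(\kappa))$ is congruent to $-Q(\kappa)$; a plausible choice is a torus shift by a half-period in appropriate phase coordinates, which acts on $\S$, preserves $\bgm$ by the built-in torus symmetries of the secular equation, and flips the sign of the relevant Hessian. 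The Betti-number formula \eqref{eq:recover_beta} then drops out immediately: symmetry gives $\mathbb{E}[\sigma] = \beta - \mathbb{E}[\sigma]$, hence $\mathbb{E}[\sigma] = \beta/2$, and the boundedness $0 \leq \sigma_n \leq \beta$ upgrades the weak convergence on the finite set $\{0, \ldots, \beta\}$ established in the first step to Ces\`aro-mean convergence.

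The main obstacle is the second step: constructing the symmetry involution and verifying pointwise that it realizes $\sigma \mapsto \beta - \sigma$ while simultaneously preserving $\bgm$. The equidistribution machinery of the first step is by now standard for quantum graphs with rationally independent lengths, and the third step is a one-line algebraic consequence. The delicate analysis is that of how the magnetic Hessian of a given eigenvalue branch transforms under the natural torus actions on $\S$ --- one must identify exactly which phase shift produces the desired sign flip while respecting the secular-manifold structure and avoiding the non-generic stratum.
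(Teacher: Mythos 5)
Your overall architecture coincides with the paper's: existence via Barra--Gaspard equidistribution on the secular manifold combined with the magnetic Morse-index characterization of the surplus, symmetry via a measure-preserving involution that sends $\sigma$ to $\beta-\sigma$, and \eqref{eq:recover_beta} as an immediate algebraic consequence of the symmetry. Your first and third steps are sound, modulo one harmless inaccuracy: $|\genN|/N$ tends to $1-\frac{\L_{loops}}{2\L}$, not to $1$, when the graph has loops; this does not matter because \eqref{eq:surplus_distr_defn} is already normalized by $|\genN|$.

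The genuine gap is in the symmetry step, which you correctly flag as the main obstacle but then resolve with a candidate map that fails. A half-period shift in the phase coordinates does not preserve the secular manifold: replacing $\xv$ by $\xv+\pi\vec{1}$ turns $\det\left(\Id-e^{i\bx}\Smat\right)$ into $\det\left(\Id+e^{i\bx}\Smat\right)$, whose zero set is a different variety, so the shifted points are not even in $\S$. The involution that works is the inversion $\inv(\xv)=-\xv$ of Lemma~\ref{lem:total_symmetry}. That it preserves $\S$ follows from the reality of the normalized secular function $F_R$ of \eqref{eq:Real_sec_func}: conjugating gives $F_R\left(\xv;\av\right)=\pm F_R\left(-\xv;-\av\right)$. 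Equivalently, on the level of eigenfunctions, replacing each edge component $f_e(x)=C_e\cos(x-\theta_e)$ by $f_e(-x)$ produces a canonical eigenfunction for the graph with lengths $2\pi-l_e$, preserving multiplicity, genericity, and the squared amplitudes $C_e^2$ that constitute the normal vector \eqref{eq:normalS} --- hence preserving $\bgm$. The same conjugation identity shows that under $\inv$ the magnetic Hessian at $\av=\vec{0}$ and the gradient $\gradx F_R$ acquire opposite signs, so the matrix in \eqref{eq:surplus_function_def_alt} changes sign, and $\M[-A]=\beta-\M[A]$ for a nondegenerate symmetric $\beta\times\beta$ matrix yields $\sigma\circ\inv=\beta-\sigma$. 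Without identifying this specific map and the reality/conjugation mechanism behind it, the symmetry claim --- and with it \eqref{eq:symmetric_distr} and \eqref{eq:recover_beta} --- remains unproved.
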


\begin{remark}
  Equation~\eqref{eq:recover_beta} generalizes the inverse result of
  \cite{Ban_ptrsa14}: $\Pr{0}=1$ implies $\beta=0$.
\end{remark}

\begin{figure}
  \centering
  \includegraphics[scale=1]{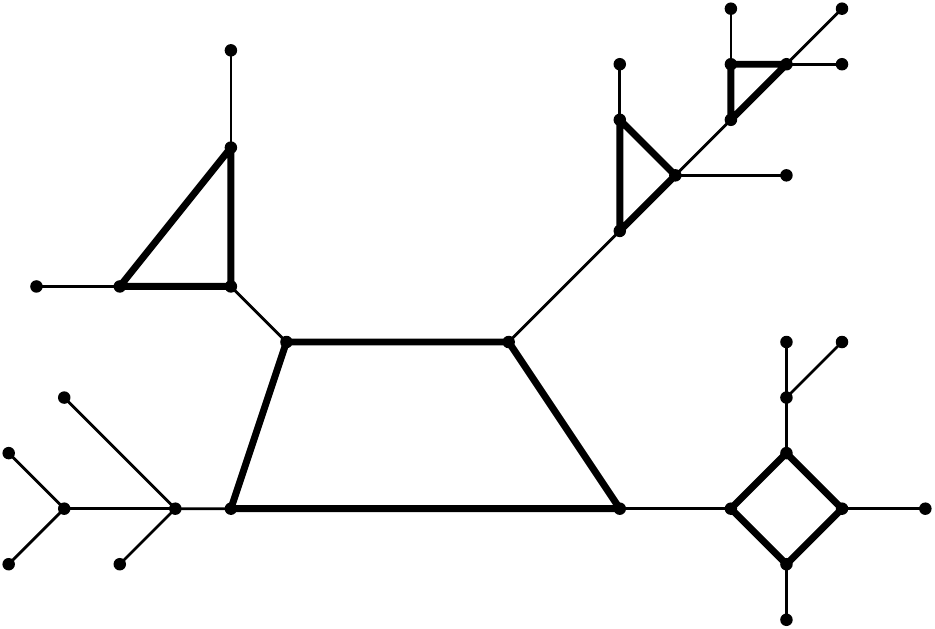}
  \caption{An example of a graph with five disjoint cycles.}
  \label{fig:graph_disjoint_cycles}
\end{figure}

We now define the special structure of the graphs where we can say
more about the form of the distribution $\Pr{s}$.  A \emph{simple
  cycle} in the graph $\Gamma$ is a sequence of vertices
$[v_1, \ldots, v_k]$, such that there is an edge connecting vertex
$v_j$ to $v_{j+1}$ for all $j$ (including $v_k$ to $v_1$) and no
vertex appears more than once.  A graph is said to have \emph{disjoint
  cycles} if there is a basis set of $\beta$ cycles such that each
vertex $v\in \V$ is traversed by at most one cycle, see
Figure~\ref{fig:graph_disjoint_cycles} for an example.

\begin{theorem}
  \label{thm:binomial_distr}
  Let $\Gamma$ be a nontrivial standard graph with rationally
  independent lengths.  If the cycles of $\Gamma$ are disjoint, then
  the nodal surplus distribution of $\Gamma$ is Binomial with parameters
  $\beta$ and $1/2$. That is,
  \begin{equation}
    \Pr{s} =  \frac1{2^{\beta}} \binom{\beta}{s}, \qquad
    0 \leq s \leq \beta.
    \label{eq:binomial_distr}
  \end{equation}
\end{theorem}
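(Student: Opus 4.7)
The plan is to reduce the Binomial conclusion to two structural facts about a new object, the \emph{local nodal surplus}: for disjoint cycles, (i) the global nodal surplus splits as a sum $\sigma_n = \sum_{i=1}^{\beta} \sigma_n^{(i)}$ of local surpluses $\sigma_n^{(i)} \in \{0,1\}$, one per cycle; and (ii) the random variables $\sigma_n^{(i)}$, as $n$ ranges over generic indices, are jointly independent Bernoulli$(1/2)$. Granting these, the total surplus is a sum of $\beta$ i.i.d.\ Bernoulli$(1/2)$ variables, giving the Binomial$(\beta,1/2)$ distribution of \eqref{eq:binomial_distr}.

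The decomposition (i) will come from the magnetic-flux characterization reviewed in Section~\ref{sec:mag_nodal}: $\sigma_n$ equals the Morse index at $\ba = 0$ of the $n$-th eigenvalue $\lambda_n(\ba)$ of the magnetic Schr\"odinger operator, viewed as a function of the fluxes $\ba = (\alpha_1,\ldots,\alpha_\beta)$ through a chosen cycle basis. I would compute the Hessian $\opH_n = \nabla_\ba^2 \lambda_n(0)$ via the standard formula in terms of currents of the eigenfunction $f_n$ along the cycle generators. The key point of the disjointness hypothesis is that distinct basis cycles share no edges (one can choose the basis this way when vertices are traversed by at most one cycle), so for $i\neq j$ the off-diagonal current overlap integrals vanish and $\opH_n$ is diagonal. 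Hence $\sigma_n = \mathrm{Morse}(\opH_n) = \sum_i \mathbf{1}_{\opH_{n,ii} < 0}$, and we define $\sigma_n^{(i)} := \mathbf{1}_{\opH_{n,ii} < 0} \in \{0,1\}$.

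Independence and Bernoulli$(1/2)$ marginals (point (ii)) will be obtained by lifting the whole analysis to the secular manifold $\Sigma$ used in Theorem~\ref{thm:existence_of_distrib}. By rational independence of lengths, the equidistribution argument behind \eqref{eq:surplus_distr_defn} implies that, for any joint event $\{\sigma^{(1)}=s_1,\ldots,\sigma^{(\beta)}=s_\beta\}$, its limiting frequency equals the $\bgm$-measure of the piecewise constant indicator on $\Sigma^g$ determined by the signs of the diagonal entries of $\opH$. For disjoint cycles I would show that each diagonal entry $\opH_{ii}$ depends only on the restriction of the secular data to the edges of the $i$-th cycle (plus an ``auxiliary'' tree part that cancels). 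Combined with a local flux-reversal involution $\alpha_i \mapsto -\alpha_i$ on the $i$-th cycle coordinates only—which preserves the spectrum and the measure but flips the sign of $\opH_{ii}$ without touching the other diagonal entries—this gives the marginal symmetry $\mathbb{P}(\sigma^{(i)}=0)=\mathbb{P}(\sigma^{(i)}=1)=1/2$ and, applied coordinate-wise, the factorization
\begin{equation}
\mathbb{P}\bigl(\sigma^{(1)}=s_1,\ldots,\sigma^{(\beta)}=s_\beta\bigr)
  = \prod_{i=1}^{\beta} \mathbb{P}\bigl(\sigma^{(i)}=s_i\bigr)
  = 2^{-\beta},
\end{equation}
from which \eqref{eq:binomial_distr} follows by summing over compositions of $s$.

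The main obstacle is the last step: upgrading marginal symmetry to full independence. Symmetry per cycle follows from a single flux involution, but independence requires that the joint sign pattern of the $\beta$ diagonal entries is equidistributed over $\{\pm 1\}^\beta$. The delicate point is that while the off-diagonal entries of $\opH_n$ vanish, the diagonal entries $\opH_{n,ii}$ are not literally functions of disjoint variables on $\Sigma$; one must show that under the joint involution $(\alpha_1,\ldots,\alpha_\beta)\mapsto(\epsilon_1\alpha_1,\ldots,\epsilon_\beta\alpha_\beta)$ each sign flips independently while the ambient measure $\bgm$ is preserved. Establishing this measure-preserving, sign-permuting $(\Z/2)^\beta$-action on $\Sigma^g$—which is where the disjoint-cycle geometry is essentially used—is the heart of the argument; everything else is bookkeeping on top of Theorem~\ref{thm:existence_of_distrib} and the magnetic Morse-index formula.
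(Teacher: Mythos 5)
Your overall architecture matches the paper's (local surpluses defined as Morse indices of the diagonal blocks of the magnetic Hessian, a measure-preserving involution on the secular manifold giving the symmetry, independence as the crux), but the mechanism you propose for the crucial independence step does not work, and a weaker issue affects your step (i) as well. For (i): the vanishing of the mixed second derivatives $\partial^2\lambda_n/\partial\alpha_i\partial\alpha_j$ does not follow from the flux-carrying edges (or even the cycles) being edge-disjoint --- the set $\Cut$ always consists of $\beta$ distinct edges, yet the Hessian is not diagonal for general graphs, because the second-order perturbation term couples the two fluxes through the full eigenbasis. The paper instead derives block-diagonality from the identity $F(\xv;\av_1,\av_2)=F(\xv;\av_1,-\av_2)$, proved via a scattering-matrix factorization of the secular function across a cut vertex together with the transposition symmetry $Z(\xv,-\av)=Z(\xv,\av)^{T}$; this genuinely uses vertex separation, which the disjoint-cycles hypothesis provides. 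This part of your sketch is repairable.

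The fatal gap is in (ii). The ``local flux-reversal involution $\alpha_i\mapsto-\alpha_i$'' does nothing: the Hessian is evaluated at $\av=\vec 0$, and relabelling $\alpha_i\to-\alpha_i$ conjugates it by a diagonal sign matrix, leaving every diagonal entry and every Morse index unchanged (moreover $F$ is already even in $\av_i$ here). The sign flip that produces $\sigma\mapsto\beta-\sigma$ in the paper comes from inverting the \emph{torus} coordinates, $\inv:\xv\mapsto-\xv$, via complex conjugation of $F_R$; and localizing this inversion to a single cycle is exactly the hard point, because reflecting only one block's $\xv$-coordinates takes you off the secular manifold. The paper's solution is the map $R(\xv_1,\tx_0,\xv_2)=(\xv_1,\,-\tx_0+2\theta_0(\xv_1),\,-\xv_2)$, where $\tx_0$ is the coordinate of a \emph{bridge} edge and $\theta_0$ is the scattering phase of the complementary subgraph; the bridge coordinate absorbs the phase so that the image stays on $\Sgen$, and this is precisely where vertex-disjointness of the cycles (which guarantees an edge separation with one cycle per block) enters. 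Without it the claim is false: the figure-of-8 graph has two edge-disjoint cycles, a diagonal Hessian, and symmetric Bernoulli$(1/2)$ marginals, yet its local surpluses are perfectly anti-correlated and the total surplus is identically $1$, not $\mathrm{Bin}(2,1/2)$. Since your argument as written uses only edge-disjointness in steps (i)--(ii), it would ``prove'' the binomial law for this graph; you correctly identify the $(\Z/2)^{\beta}$-action as the heart of the matter, but the construction you offer for it cannot supply it, and the paper in fact settles for the weaker (and sufficient) statement that each local surplus is conditionally symmetric given all the others.
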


Our approach to analyzing the nodal surplus sequence $\{\sigma_n\}$ is
interpreting it as a sample of a certain function defined on a certain
manifold endowed with a probability measure.  The manifold and the
measure go back to an idea of Barra and Gaspard \cite{BarGas_jsp00};
the fact that the nodal surplus can be read off the manifold was shown
by Band \cite{Ban_ptrsa14} who converted the nodal-magnetic connection
of Berkolaiko and Weyand \cite{BerWey_ptrsa14} into a function on the
manifold.  Existence of the limit in (\ref{eq:surplus_distr_defn})
follows from the ergodicity of the sampling process (which goes back
to Weyl \cite{Wey_ma16}). The symmetry (\ref{eq:symmetric_distr}) is a
consequence of a simple symmetry of the nodal surplus function, the
underlying manifold and the measure.

Our second main result requires a much more detailed analysis of the
manifold and the nodal surplus function.  Considering a more general
situation, a graph consisting of disjoint blocks of cycles, we show
that the total nodal surplus is a sum of the ``local surpluses'' of
individual blocks.  These local surpluses also have symmetry similar
to (\ref{eq:symmetric_distr}), but with $\beta$ equal to the number of
cycles in their block.  Moreover this symmetry is independent of the
values taken by other local surpluses.  If each block has just one
cycle, the local surpluses become independent (rather than merely
``independently symmetric''), thus producing the binomial
distribution.

Our proofs require several technical tools and results.  To avoid
tiring the reader we describe these results on the ``as needed''
basis.  Section~\ref{sec:defining_distro} introduces magnetic
Laplacian, the magnetic-nodal connection, secular equation and secular
manifold, and the Barra-Gaspard measure before proceeding to prove
Theorem~\ref{thm:existence_of_distrib}.
Section~\ref{sec:graphs_block_structure} defines the notion of a block
of a graph, introduces scattering from a graph, a factorization of the
secular equation by splitting a graph into two parts, defines the
local nodal surplus and studies its properties, culminating in a proof
of Theorem~\ref{thm:binomial_distr}.  In
Appendices~\ref{sec:volume_Sgen}, \ref{sec:zero_length_edge} and
\ref{sec:proof_splitting} we prove some auxiliary results used in the
paper.  Finally (and perhaps most interestingly for readers looking
for open problems), in Appendix~\ref{sec:examples} we present some
simple examples of nodal surplus distributions, both numerical and
analytical, mostly of the graphs falling outside the assumptions of
Theorem~\ref{thm:binomial_distr}.  This helps us to understand to what
extent the assumptions are optimal.

\section{Defining the distribution}
\label{sec:defining_distro}

To get an analytic handle on the nodal distribution we combine two
techniques of quantum graphs analysis: the magnetic-nodal connection
of Berkolaiko, Colin de~Verdi\`ere and Weyand
\cite{Ber_apde13,Col_apde13,BerWey_ptrsa14} and the secular manifold
of Barra and Gaspard \cite{BarGas_jsp00} further developed in
\cite{BerWin_tams10,BanBer_prl13,CdV_ahp15}.  We lay out the required
foundations in the next subsections.

\subsection{The magnetic Laplacian and the magnetic-nodal connection}
\label{sec:mag_nodal}


The \emph{magnetic Laplacian} is the operator on
$\bigoplus_{e\in E} L^2([0,l_e])$ acting as
\begin{equation}
  \opH_A : f \mapsto \left(i\Diff{}{x}+A(x)\right)^{2}f,
  \label{eq:def_magnetic_laplacian}
\end{equation}
where $A(x)$ is a piecewise continuous 1-form called the \emph{magnetic
potential}. The vertex conditions are modified by substituting
\eqref{eq:current_cons} with
\begin{equation}
  \label{eq:current_cons_mag}
  \sum_{e\in E_v} \left(i\Diff{}{x}+A(x)\right)f(v) = 0.
\end{equation}
Naturally, when $A\equiv0$ we recover the non-magnetic Laplacian of
\eqref{eq:def_laplacian}.

A \emph{magnetic flux} of a magnetic potential $A$
through an oriented cycle $\gamma$ is defined as
\begin{equation}
  \alpha_{\gamma} := \left[\oint_{\gamma}A(x)\right]\mod2\pi.
\end{equation}
The fluxes through independent cycles of the graph $\Gamma$ completely
determine the magnetic Laplacian in the following sense.

\begin{lemma}
  Two magnetic Laplacian operators $\opH_{A},\opH_{A'}$ are unitarily
  equivalent if the magnetic potentials $A$ and $A'$ have the same
  flux through every cycle $\gamma$.
\end{lemma}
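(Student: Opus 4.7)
The plan is to exhibit an explicit gauge transformation. Define a unitary $U$ on $\bigoplus_{e\in\E} L^2([0,l_e])$ by pointwise multiplication $U:f \mapsto e^{i\chi} f$, where $\chi$ is a real-valued function on $\Gamma$ constructed so that on every edge $\chi'(x) = A(x) - A'(x)$. A direct computation gives the intertwining relation
\begin{equation*}
  \left(i\Diff{}{x} + A(x)\right)\bigl(e^{i\chi} f\bigr)
  = e^{i\chi}\left(i\Diff{}{x} + A'(x)\right) f,
\end{equation*}
and squaring yields $\opH_A U = U \opH_{A'}$ at the level of differential expressions. So the whole game is to construct $\chi$ and check that $U$ maps the domain of $\opH_{A'}$ onto the domain of $\opH_A$.

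To build $\chi$, pick a spanning tree $T$ of $\Gamma$ together with a root vertex $v_0$, and set $\chi(v_0)=0$. For any other vertex $v$, define $\chi(v)$ by integrating $A-A'$ along the unique path in $T$ from $v_0$ to $v$; this extends to a function on each tree edge by integration along the edge. For each of the $\beta$ edges $e \notin T$, whose addition to $T$ closes a cycle $\gamma_e$ that forms a basis of the cycle space, fix $\chi$ at one endpoint of $e$ using the tree construction and extend along $e$ by integrating $A-A'$. The resulting $\chi$ is well defined as a function on each edge; the subtle point is what happens at the \emph{second} endpoint of the non-tree edge $e$.

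The main obstacle, and the place where the flux hypothesis is used, is precisely this consistency check. For $U$ to preserve continuity at vertices we need $e^{i\chi}$ to be single-valued at every $v \in \V$. By construction this is automatic along tree edges; for a non-tree edge $e$ closing $\gamma_e$, the two candidate values of $\chi$ at the second endpoint differ by $\oint_{\gamma_e}(A - A') = \alpha_{\gamma_e} - \alpha'_{\gamma_e}$, which vanishes modulo $2\pi$ precisely under our hypothesis. Hence $e^{i\chi}$ is single-valued at every vertex, and continuity of $f$ at $v$ is equivalent to continuity of $Uf$ at $v$. Unitarity of $U$ is immediate from $|e^{i\chi}|=1$, Dirichlet conditions at degree-one vertices (if any) are preserved because $e^{i\chi(v)} f(v) = 0 \Leftrightarrow f(v)=0$, and the magnetic Neumann condition \eqref{eq:current_cons_mag} for $A$ follows from the same condition for $A'$ by the intertwining identity applied edge by edge at $v$. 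Thus $U$ conjugates $\opH_{A'}$ to $\opH_A$, proving unitary equivalence.
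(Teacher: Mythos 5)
Your proof is correct and is exactly the standard gauge-transformation argument: the paper itself does not prove this lemma but refers to \cite{KosSch_cmp03,BerWey_ptrsa14}, where the same construction (multiplication by $e^{i\chi}$ with $\chi'=A-A'$, built along a spanning tree, with the flux hypothesis guaranteeing single-valuedness of $e^{i\chi}$ at the vertices closing the non-tree edges) is used. You correctly identify the only point where the hypothesis enters and check the vertex conditions, so nothing further is needed.
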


The above lemma is well-known with proofs in the quantum graph setting
appearing, for example, in \cite{KosSch_cmp03,BerWey_ptrsa14}.  Due to
additivity of fluxes, it is enough to know them for a fundamental set
of cycles, $\beta$ in number.  Fixing a particular fundamental set of
cycles (together with orientation) we collect the corresponding fluxes
into the \emph{flux vector} $\vec{\alpha}\in\T^{\beta}$.  Here
$\T^\beta$ is the $\beta$-dimensional flat torus
$\R^\beta / (2\pi \Z)^\beta$.

We can thus speak of the eigenvalues of the ``operator'' $\opH^{\av}$
(which is actually an equivalence class of operators $\opH_A$).
Consider these eigenvalues
$\left\{k_n\left(\av\right)\right\}_{n=1}^\infty$ as functions of the
fluxes $\av$.  At the point $\av=0$ they are equal to the eigenvalues
of the non-magnetic operator $\opH$.  What is far from obvious is that
the behavior of $k_n(\av)$ around $\av=0$ determines the nodal surplus
of the $n$-th eigenfunction of $\opH$.



\begin{theorem}[Berkolaiko--Weyand \cite{BerWey_ptrsa14}]
  \label{thm:morse}
  Let $\Gamma$ be a quantum graph, $k_n>0$ a generic eigenvalue of
  $\opH$ on $\Gamma$ and $\sigma_{n}$ its nodal surplus. Consider the
  eigenvalue $k_n(\av)$ of the corresponding magnetic Schr\"odinger
  operator as a function of $\av$.  Then $\av=0$ is a smooth
  non-degenerate critical point of $k_{n}\left(\av\right)$ and its
  Morse index is equal to the nodal surplus $\sigma_n$.
\end{theorem}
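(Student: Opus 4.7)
The plan is to separate the theorem into three assertions of increasing difficulty: smoothness, the critical point property, and the Morse index identification. The first two follow from general principles, while the third is the substantive content and will require carefully relating a Hessian computed by perturbation theory to a combinatorial count of nodal points.

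For smoothness, I would note that because $k_n$ is a simple eigenvalue of $\opH=\opH^{0}$ (by genericity) and the family $\{\opH^{\av}\}$ depends analytically on $\av$, Kato--Rellich perturbation theory immediately gives that the function $\av \mapsto k_n(\av)$ is real analytic in a neighborhood of $\av=0$, with an analytic choice of normalized eigenfunction. For the critical point property, I would exploit time-reversal symmetry: complex conjugation intertwines the magnetic Schr\"odinger operator with flux vector $\av$ and the one with $-\av$, since the potential enters through $i\Diff{}{x}+A$. Hence $k_n(\av)=k_n(-\av)$ as functions on $\T^{\beta}$, so $k_n$ is even in $\av$ near the origin, forcing $\nabla_{\av}k_n(0)=0$.

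The bulk of the work is the identification of the Morse index. Using a convenient gauge, I would take $A$ to be supported at one chosen point on each element of a fundamental cycle basis, so that the $\beta$-dimensional perturbation simply inserts a phase factor $e^{i\alpha_j}$ at the $j$-th distinguished point. Standard second-order perturbation theory then yields a closed form for the Hessian of $k_n^2(\av)$ at $\av=0$: a diagonal piece involving $|f_n|^2$ at the distinguished points and an off-diagonal piece involving a reduced resolvent applied to the commutator $[\opH,A]f_n$. The Hermiticity and evenness imply that this Hessian is real symmetric, and I would check nondegeneracy by arguing that degeneracy would contradict simplicity of $k_n$ (a direction of zero Hessian would lead to a second eigenfunction at $\av=0$).

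The hard part is showing that the signature count of this explicit Hessian equals $\sigma_n$. My plan is to reformulate the Hessian as the Schur complement of a larger matrix associated with a ``surgery'' of the graph: cut each distinguished cycle edge to obtain an auxiliary tree-like graph and express the magnetic spectral condition as a finite-dimensional secular equation in the fluxes. The number of negative eigenvalues of the resulting matrix then admits two interpretations. On one side, by implicit differentiation applied to the secular equation near $\av=0$, it equals the Morse index of $k_n(\av)$. On the other side, it can be computed by interlacing/deformation: continuously deform the cycle lengths toward a reference configuration (for example one where all but one cycle is shrunk or where the graph degenerates to a tree) and track how many eigenvalues cross $k_n$ during the deformation; this crossing count, by a Sturm-type argument on each edge, coincides with the number of extra zeros of $f_n$ beyond the baseline $n-1$, i.e.\ with $\sigma_n$. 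The main obstacle is keeping the two sides synchronized through the deformation --- showing genericity is preserved along the path, controlling eigenvalues that may cross neighboring $k_m$, and accounting precisely for the boundary contribution at each cut so that the count yields exactly $\sigma_n$ rather than a related but different topological invariant.
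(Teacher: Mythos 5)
First, a point of comparison: the paper does not prove this theorem at all --- it is imported verbatim from Berkolaiko--Weyand \cite{BerWey_ptrsa14} (see also \cite{Ber_apde13,Col_apde13}), so there is no in-paper argument to measure your proposal against. Judged against the published proof, your first two steps are correct and are essentially what is done there: analyticity of $k_n(\av)$ near $\av=0$ follows from simplicity of $k_n$ via Kato--Rellich, and criticality follows from the time-reversal symmetry $k_n(\av)=k_n(-\av)$.

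The gaps are in your third step. (a) Your nondegeneracy argument is wrong as stated: a direction in which the Hessian of $k_n(\av)$ vanishes does \emph{not} produce a second eigenfunction at $\av=0$; it only means the eigenvalue is flat to second order in that flux direction, which is perfectly compatible with simplicity. Nondegeneracy is precisely where the genericity hypothesis (non-vanishing of $f_n$ on vertices, hence on the cycles) must enter, and it requires an actual computation, not a contradiction with simplicity. (b) Your route to the Morse index --- deforming the cycle lengths toward a tree and counting eigenvalue crossings of the level $k_n$ --- is not the mechanism used and is not workable as described: the nodal count and the index $n$ are not continuous in the edge lengths, genericity is destroyed along such paths, and tracking $\sigma_n$ through the deformation is as hard as the original problem; you name these obstacles yourself but supply no device to overcome them. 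The actual proof works at \emph{fixed} geometry: one cuts the graph at $\beta$ points on the cycles where $f_n\neq 0$, introduces $\delta$-type coupling parameters $\gamma_1,\ldots,\gamma_\beta$ at the cuts, and shows that the pair $(\av,\vec{\gamma})=(\vec 0,\vec\gamma^*)$ is a nondegenerate critical point of the eigenvalue on the combined parameter space whose total Morse index is forced (the flux block and the cut-parameter block are in a Legendre-type duality, so their indices sum to $\beta$); the cut-parameter index is then computed by eigenvalue interlacing between the graph and the cut tree together with the Sturm oscillation theorem $\phi=n-1$ on trees, which is what converts the analytic index into the zero count $\sigma_n$. Without this duality and the interlacing against the tree at fixed lengths, your outline does not close.
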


The Morse index of a function is the number of negative eigenvalues of the
Hessian evaluated at a critical point of this function.  Because
Hessians will play a large role in our proofs, we should set up
notation carefully.

\begin{definition}
  \label{def:Hessian}
  Let $f(\vec{x},\vec{y})$ be a twice differentiable function of a
  finite number of variables $\vec{x}=(x_1,\ldots,x_n)$ and
  $\vec{y}=(y_1,\ldots,y_m)$.  The \emph{Hessian} of $f$ with respect
  to variables $\vec{y}$ evaluated at the point $\vec{x} = \vec{x}^*$
  and $\vec{y} = \vec{y}^*$ is a matrix of second derivatives
  \begin{equation}
    \label{eq:Hessian_def}
    H_{\vec{y}}(f)(\vec{x}^*, \vec{y}^*)
    := \left[ \frac{\partial^2 f }
      {\partial y_i \partial y_j}(\vec{x}^*, \vec{y}^*) \right]_{i,j=1}^m.
  \end{equation}
  For a symmetric matrix $A$ we will denote by $\M[A]$ the number of
  its negative eigenvalues (\emph{Morse index}).
\end{definition}

With this notation in hand, Theorem \ref{thm:morse} can be summarized as
\begin{equation}
  \label{eq:magnetic_nodal_formula}
  \sigma_n = \M\left[H_{\av}(k_n)\left(\vec{0}\right)\right],
\end{equation}
where $k_n = k_n\left(\av\right)$.

\subsection{Bond scattering matrix and secular equation.}
\label{sec:Reduction}

Solving the eigenvalue equation $\opH_Af = k^2f$ for $k>0$ on every edge
and applying the vertex condition we arrive at the \emph{secular
  equation} on the eigenvalues $k$
\cite{Bel_laa85,KotSmi_ap99,KotSmi_prl97}.  In this subsection we
review this procedure.

Taking, without loss of
generality, the magnetic potential to be constant on each edge, the
solution on the edge $e$ is given by
\begin{equation}
  \label{eq:amp}
  f(x_e) = a_{e}e^{-i(k+A_e)(l_e-x_e)}+a_{\hat{e}}e^{-i(k-A_e)x_e}.
\end{equation}
Thus each edge corresponds to two ``directed'' coefficients, $a_e$ and
$a_{\hat{e}}$, which can be viewed as the amplitudes\footnote{More
  precisely, the coefficients $a_e$ and $a_{\hat{e}}$ are the amplitudes
  of the waves measured just before they hit the vertex they are
  traveling to; this causes the slightly unusual form of equation
  \eqref{eq:amp} but fits with the form we choose for our secular
  equation \eqref{eq:sec_condition}.} of waves traveling in and
against the chosen direction of $e$.  The label $\hat{e}$ is called
the \emph{reversal} of the (label) $e$.  The hat can be viewed as a
permutation (fixed-point free involution) on the set of labels if we
extend it by $\hat{\hat{e}}=e$.  If we introduce variables
$x_{\hat{e}} := L-x_e$ and $A_{\hat{e}}=-A_e$, expression
\eqref{eq:amp} becomes nicely symmetric.

Let us fix a particular representative of the equivalence class of
operators $\opH_A$ (see \cite{BerWey_ptrsa14} for more detail).
Choose a set $\Cut$ of $\beta$ edges whose removal from $\E$ does not
disconnect the graph.  The remaining graph is a \emph{spanning tree}.
We set the magnetic potential to be non-zero only on the edges
$e_j\in\Cut$, with $A_{e_j} = \alpha_j / l_{e_j}$.

Assume $f$ is an eigenfunction of $\opH_A$ corresponding to the
eigenvalue $k>0$. According to \eqref{eq:amp}, $f$ is uniquely
determined by a vector of coefficients
\begin{equation}
  \label{eq:vector_amplitudes}
  \vec{a}\in\C^{2E}, \qquad
  \vec{a}=\left(a_{1},a_{\hat{1}},...,a_{E},a_{\hat{E}}\right).
\end{equation}
Note that we have chosen a specific order of the edge labels.
Imposing the vertex conditions on $f$ and simplifying the result we
arrive to the condition
\begin{equation}
  \left(\Id -
    e^{i\ba}e^{ik \Lmat}\Smat\right)\vec{a} = 0,
  \label{eq:sec_condition}
\end{equation}
where $\Lmat$ is the diagonal matrix of the edge lengths and $\ba$ is
the diagonal matrix of edge-integrated magnetic potential values,
which in our chosen representation of potential are given by 0,
$\alpha_e$ or $\alpha_{\hat{e}} = -\alpha_e$.  Assuming for the moment
that the edges in $\Cut$ are the first in the order established by
\eqref{eq:vector_amplitudes}, we have
\begin{equation*}
  \Lmat = \diag(l_1,l_1,l_2,l_2, \ldots, l_E, l_E)
  \qquad
  \ba = \diag(\alpha_1, -\alpha_1, \alpha_2, -\alpha_2, \ldots, 0, 0).
\end{equation*}
The matrix $\Smat$ (called the \emph{bond-scattering matrix}) is
unitary.  For a graph with Neumann (or Dirichlet) vertex conditions it
has constant coefficients given by the following rules.  Let $e$ be
the label corresponding to a directed edge terminating at vertex $v$
and let $\deg(v)$ denote the degree of the vertex $v$.  Then the
elements of $\Smat$ are
\begin{equation}
  \label{eq:def_S_matrix}
  S_{e',e} =
  \begin{cases}
    \frac{2}{\deg(v)} - 1, & \mbox{if }e'=\hat{e}, \\
    \frac{2}{\deg(v)}, & \mbox{if $d'$ originates at $v$ and }e'\neq\hat{e},\\
    0 & \mbox{otherwise}.
  \end{cases}.
\end{equation}
Finally, if we choose to impose Dirichlet condition at a vertex $v$
of degree 1 then the corresponding element of $\Smat$ changes from $1$
given by \eqref{eq:def_S_matrix} to $-1$.

It is an explicit computation that each term in the product
$e^{i\ba}e^{ik \Lmat}\Smat$ is a unitary matrix.  The multiplicity of
$k^2\neq 0$ as an eigenvalue in the spectrum of $\opH$ is equal to the
dimension of the kernel of $\Id-e^{i\ba}e^{ik \Lmat}\Smat$.  In other
words, the geometric multiplicity of $k^2$ is equal to the algebraic
multiplicity of $k>0$ as the root of the \emph{secular equation}
\begin{equation}
  \label{eq:secular_eq}
  \tilde{F}\left(k;\av\right) := \det\left(\Id -
    e^{i\ba}e^{ik \Lmat}\Smat\right) = 0,
\end{equation}
For more complete understanding of the scattering approach we refer
the reader to \cite{BerKuc_graphs,GnuSmi_ap06}.

\subsection{The torus flow}
\label{sec:torus_approach}

We now describe an approach pioneered by Barra and Gaspard
\cite{BarGas_jsp00} in their study of eigenvalue spacing of small
quantum graphs.

\begin{figure}
  \centering
  (a)\includegraphics[scale=0.75]{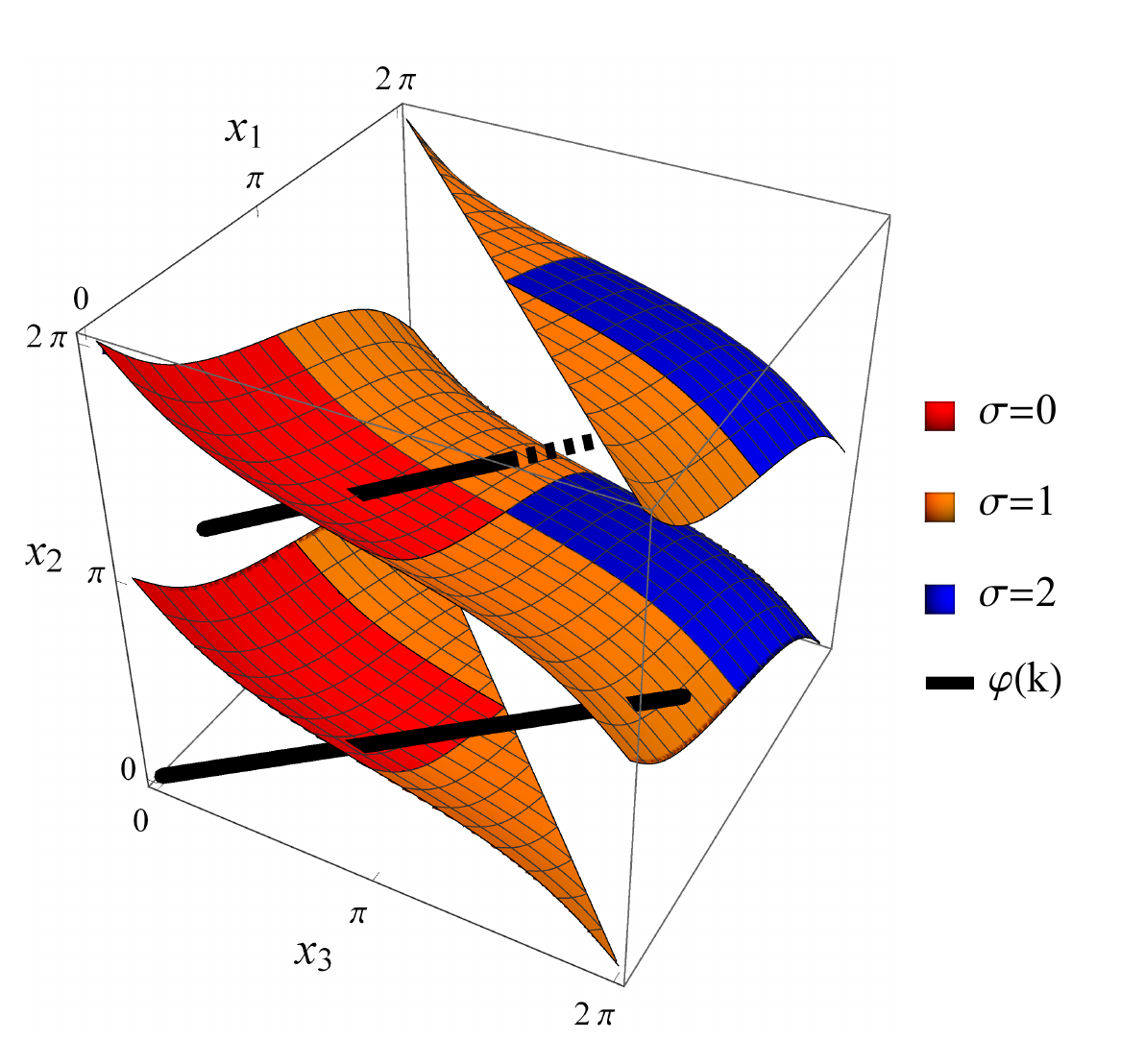}
  (b)\includegraphics[scale=1.5]{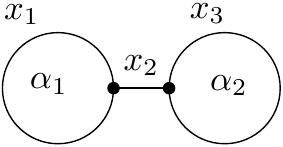}
  \caption{(a) The torus flow defined by (\ref{eq:def_flow}) hits the
    secular manifold.  The values of $k$ for which this happens are
    the eigenvalues of the graph.  The secular manifold in the figure
    is of the ``dumbbell graph'' graph as analyzed in Appendix
    \ref{sec:dumbbell}. The secular manifold is colored according to
    the values of the nodal surplus function, equation
    (\ref{eq:surplus_function_def}). (b) The ``dumbbell'' graph with
    torus coordinates marked on corresponding edges.}
  \label{fig:flow_hits_manifold}
\end{figure}

Let $\Gamma$ be a metric graph
and define
\begin{equation}
  F: \T^{E}\times\T^{\beta}\to\C,
  \qquad
  F\left(\xv;\av\right) = \det\left(\Id-e^{i\ba}e^{i\bx}\Smat\right),
  \label{eq:secfun_torus}
\end{equation}
where $\Smat$ has been calculated according to prescription
\eqref{eq:def_S_matrix} and
\begin{equation*}
  \bx = \diag\left(x_1,x_1,...x_E,x_E\right).
\end{equation*}
Consider the linear flow on the torus $\T^E$
\begin{equation}
  \label{eq:def_flow}
  \varphi:\R\rightarrow\T^{E},
  \qquad
  \varphi\left(k\right) = k\lv \mod 2\pi ,
\end{equation}
where $\lv = (l_1,\ldots,l_E)$ is the vector of lengths of $\Gamma$.
Observe that
\begin{equation}
  \tilde{F}\left(k;\av\right)
  = F\left(\varphi(k);\av\right).
\end{equation}
For the rest of the paper, $F$ will be referred to as the
\textit{secular function} of $\Gamma$.  Define the \emph{secular
  manifold},
\begin{equation}
  \S:=\left\{ \xv : F\left(\xv;0\right) = 0\right\}
  \subset\T^{E}
  \label{eq:secMan}
\end{equation}
(note that it is a slight misnomer, as generally $\S$ is an algebraic
variety with singularities and not a smooth manifold). The spectrum of
$\opH$ on $\Gamma$ can be described as the values of $k$ for which the
flow $\phi$ hits the secular manifold,
\begin{equation}
  \label{eq:hitting_times}
  \left\{k_n\right\}_{n=1}^\infty \setminus \{0\}
  = \left\{ k>0 : \varphi(k)\in\S \right\},
\end{equation}
see Figure~\ref{fig:flow_hits_manifold} for an example.  Moreover, the
multiplicity of the eigenvalue $k$ is the same as the algebraic
multiplicity of the root $\xv = \varphi(k)$ of $F(\xv;0)$.  We remark
that we took some pains to exclude zero eigenvalue from
\eqref{eq:hitting_times}.  Zero may or may not be an eigenvalue of the
graph (it is not an eigenvalue if we have some Dirichlet vertices),
and in general its multiplicity is different from the multiplicity of
$\xv=\vec{0}$ as a root of $F(\xv;0)$; this topic is studied in some
detail in \cite{FulKucWil_jpa07}.

We can similarly define $\S(\av)$ whose piercings by the flow will
give the eigenvalues $k_n\left(\av\right)$ of the magnetic operator.

A surprising consequence of Theorem~\ref{thm:morse}, pointed out in
\cite{Ban_ptrsa14}, is that one can read off the nodal surplus information
directly off the secular manifold $\S$.

\begin{theorem}
  \label{thm:surplus_function}
  For a graph $\Gamma$, define the \emph{nodal surplus function} by
  \begin{equation}
    \label{eq:surplus_function_def}
    \sigma:\S \to \{0,\ldots,\beta\} \qquad
    \sigma\left(\xv\right) := \M\left[-\frac{H_{\av}(F)\left(\xv;\vec0\right)}{\gradx
        F\cdot\lv}\right].
  \end{equation}
  Then the function $\sigma$ is independent of $\lv$ and if $k_n>0$ is
  generic, it gives the nodal surplus of the corresponding
  eigenfunction,
  \begin{equation}
    \sigma\left(k_{n}\lv\right)=\sigma_{n}.
    \label{eq:surplus_relation}
  \end{equation}
\end{theorem}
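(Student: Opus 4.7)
The plan is to combine the magnetic--nodal identity of Theorem~\ref{thm:morse} with implicit differentiation of the secular equation. Setting $\tilde F(k,\av) := F(k\lv,\av)$, the eigenvalue $k_n(\av)$ is defined implicitly by $\tilde F(k_n(\av),\av) = 0$. Since Theorem~\ref{thm:morse} gives $\nabla_{\av}k_n(\vec 0) = \vec 0$, applying implicit differentiation twice and using the vanishing of the first derivatives collapses the usual formula into
\[
H_{\av}(k_n)(\vec 0) \;=\; -\frac{H_{\av}(\tilde F)}{\tilde F_k}\bigg|_{(k_n,\vec 0)} \;=\; -\frac{H_{\av}(F)(k_n\lv,\vec 0)}{\gradx F(k_n\lv,\vec 0)\cdot\lv},
\]
the second equality using $H_{\av}(\tilde F) = H_{\av}(F)$ and $\tilde F_k = \gradx F\cdot \lv$ by the chain rule. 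Taking the Morse index of both sides and invoking Theorem~\ref{thm:morse} yields $\sigma(k_n\lv) = \sigma_n$, which is~\eqref{eq:surplus_relation}.

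The delicate issues are that $\sigma(\xv)$ should be a well-defined Morse index for every $\xv \in \Sgen$ and that the result should not depend on the lengths $\lv$. For well-definedness I would first establish the evenness relation $F(\xv,-\av) = F(\xv,\av)$. Letting $P$ be the reversal permutation on edge labels, a direct check yields $P\bx P = \bx$, $P\ba P = -\ba$, and, crucially, $P\Smat P = \Smat^T$ (immediate from~\eqref{eq:def_S_matrix} once one matches incoming and outgoing scattering at each vertex). The identity then follows from $\det(\Id - M) = \det(\Id - M^T)$ and cyclic rearrangement inside the determinant. Evenness in $\av$ forces $\nabla_{\av} F(\xv,\vec 0) = 0$ on $\S$. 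Secondly, the unitarity of $U = e^{i\ba}e^{i\bx}\Smat$ gives $\overline F = \det(U)^{-1} F$, producing a factorization $F = e^{i\Theta} G$ with $G$ real. Combined with the vanishing of $\nabla_{\av} G$ on $\S$, this implies that both $H_{\av}(F)$ and $\gradx F$ carry the common phase $e^{i\Theta}$ at $\av = \vec 0$, which cancels from the ratio and leaves the real symmetric matrix $-H_{\av}(G)/(\gradx G\cdot\lv)$.

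For $\lv$-independence I would compute $\gradx F(\xv,\vec 0)$ explicitly via the adjugate formula. At a simple zero, the unitarity of $U_0 := U(\xv,\vec 0)$ implies that the left and right nullvectors of $\Id - U_0$ coincide, both being the eigenvector $\vec a$ of $U_0$, and a short computation gives
\[
\partial_{x_e} F(\xv,\vec 0) \;=\; c\cdot i\bigl(|a_e|^2 + |a_{\hat e}|^2\bigr),
\]
with the scalar $c$ independent of $e$. Every component of $\gradx F$ thus shares the same phase $ic$, so $\gradx F\cdot\lv = ic\sum_e l_e(|a_e|^2+|a_{\hat e}|^2)$ keeps a single sign---once this phase is stripped---as $\lv$ varies over the positive cone. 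Multiplying the denominator of the ratio by a positive scalar does not change its Morse index, so the $\lv$-dependence drops out. The principal obstacle is the evenness lemma $F(\xv,-\av) = F(\xv,\av)$: it is short once the identity $P\Smat P = \Smat^T$ has been verified, but it is exactly what promotes the construction from a formula valid only at torus hits to a well-defined function on the entire secular manifold; the remaining steps are routine implicit differentiation and a cofactor computation.
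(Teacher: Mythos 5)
Your proposal is correct and follows essentially the same route as the paper: the identity $H_{\av}(k_n)(\vec 0) = -H_{\av}(F)/(\gradx F\cdot\lv)$ obtained by implicit differentiation at the critical point $\av=\vec 0$, combined with Theorem~\ref{thm:morse}, and the observation that all entries of $\gradx F$ share a common phase (so only the sign of $\gradx F\cdot\lv$ matters, giving $\lv$-independence). The only difference is that you carry out explicitly the evenness, phase-factorization and adjugate computations that the paper outsources to \cite{Ban_ptrsa14} and \cite{CdV_ahp15}, and these details check out.
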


See Figure \ref{fig:flow_hits_manifold}(a) for a demonstration of the nodal surplus function for a particular graph.

\begin{remark}
  \label{rem:oracle}
  Before we sketch the proof of the Theorem, let us discuss its
  significance.  We have defined an ``oracle'' function which
  calculates the nodal surplus from $\xv$ alone.  For different
  choices of $\lv$ a given point $\xv\in\S$ may be reached by the flow
  $\varphi(k)$ at very different values of $k$, if it is reached at all.
  The corresponding eigenfunctions have very different numbers of
  zeros and come at different sequence numbers in the spectrum of
  their graph, yet the nodal surplus remains the same!
\end{remark}

\begin{proof}[Proof of Theorem~\ref{thm:surplus_function}]
  Equation~\eqref{eq:surplus_relation} follows directly from
  Theorem~\ref{thm:morse} by calculating the Hessian in terms of the function $F$,
  \begin{equation*}
    H_{\av}(k_n)\left( \vec{0} \right)
    = -\frac{H_{\av}(F)\left(\xv;\vec0\right)}{\gradx
      F\cdot\lv}.
  \end{equation*}
  This was performed in \cite{Ban_ptrsa14}, where it was also pointed
  out that $\gradx F$ has all entries of the same sign (up to an
  overall phase) or $0$.  Moreover, if $k_n$ is simple then at
  $\xv = k_n\vec{l}$ the gradient $\gradx F \neq 0$, so the function
  $\sigma$ is well defined.  Since only the sign of $\gradx F\cdot\lv$
  is important in calculating the Morse index, and $\vec{l}$ entries
  are all positive, the value of $\sigma$ remains the same whatever
  the lengths of the graph's edges are.
\end{proof}

We can now sketch out the path to proving our first main result,
Theorem~\ref{thm:existence_of_distrib}.  We will first explain that
the surplus function $\sigma$ is well defined on a large subset of
$\Sigma$ (Section~\ref{sec:Sigma_reg_generic}).  The surplus
distribution will then be represented as an integral over $\Sigma$
with an appropriate measure (Section~\ref{sec:BarraGaspard_measure}).
Finally, we will exhibit a symmetry in the function $\sigma$ which
will give us the symmetry of the surplus distribution
(Section~\ref{sec:total_symmetry}).

\subsection{Regular and generic subsets of $\Sigma$}
\label{sec:Sigma_reg_generic}

To effectively use Theorem~\ref{thm:surplus_function} we need to
understand its domain of applicability.  First, the function $\sigma$
is not defined if $\gradx F = 0$.  Second, it would be
convenient to be able to tell if $k_n$ is going to be generic just by
looking at the point $\xv$ on the torus.  This motivates us to define
and study properties of two subsets of $\Sigma$, $\Sreg$ (regular) and
$\Sgen$ (generic).

\begin{theorem}[Colin~de~Verdi\`ere \cite{CdV_ahp15}]
  \label{thm:Sreg_def_prop}
  Let $\Gamma$ be a nontrivial standard graph.  Then the set
  \begin{equation}
    \label{eq:Sigma_reg_def}
    \Sreg= \left\{\xv\in\S :
    \gradx F\left(\xv;0\right)\ne0\right\},
  \end{equation}
  has the following properties.
  \begin{enumerate}
  \item The algebraic variety $\S \setminus \Sreg$ is of co-dimension
    at least one in $\S$, which in turn has co-dimension one in $\T^E$.
  \item $\Sreg$ is an open manifold (possibly disconnected)
    with the normal at a point $\xv$ given by
    \begin{equation}
      \label{eq:normalS}
      \normal_j(\xv) = C\left(\bigl|a_j\bigl|^2
        + \bigl|a_{\hat{j}}\bigl|^2\right),
    \end{equation}
    where $C$ is a normalization constant and $\vec{a}$ is the
    eigenvector of the eigenvalue 1 of the matrix
    $e^{i\bx}\Smat$.
  \item \label{item:eigenvalue1}
    $k>0$ is a simple eigenvalue of the
    graph $\Gamma$ if and only if $\varphi(k) \in \Sreg$.
    Equivalently, $\xv\in\Sreg$ if and only if $k=1$ is a
    simple eigenvalue of the graph $\Gamma$ with the lengths
    $\vec{l}=\xv$; the corresponding eigenfunction will be called the
    \emph{canonical eigenfunction}.
  \end{enumerate}
\end{theorem}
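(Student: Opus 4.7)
The plan is to reduce everything to a direct analysis of the complex-valued map $\xv \mapsto F(\xv;0) = \det(\Id - M(\xv))$, where $M(\xv) := e^{i\bx}\Smat$ is a $2E\times 2E$ unitary matrix analytic in $\xv$. First observe that $F(\xv;0)=0$ holds iff $1$ is an eigenvalue of $M(\xv)$, and that the order of vanishing of $F$ at $\xv$ equals the algebraic multiplicity of $1$ in the spectrum of $M(\xv)$ (which, by unitarity and hence diagonalizability, coincides with the geometric multiplicity). Combined with the derivation of the secular equation in Section~\ref{sec:Reduction}, according to which a positive $k$ is an eigenvalue of $\opH$ of multiplicity $m$ exactly when $1$ is an eigenvalue of $M(\varphi(k))$ of multiplicity $m$, this reduces item~(\ref{item:eigenvalue1}) to showing that $\Sreg$ coincides with the locus where $1$ is a \emph{simple} eigenvalue of $M$.

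The central calculation is Jacobi's formula for the derivative of the determinant at a zero,
\begin{equation*}
  \partial_{x_j} F(\xv;0) = -\,\mathrm{tr}\!\left(\mathrm{adj}(\Id - M(\xv))\cdot i\,E_j M(\xv)\right),
\end{equation*}
where $E_j$ is the diagonal matrix with a $1$ in the two positions of $\bx$ carrying $x_j$ and zeros elsewhere. Suppose $1$ is a simple eigenvalue of $M(\xv)$ with normalized eigenvector $\vec a$; then unitarity forces $\vec a^{\,*} M = \vec a^{\,*}$, so
\begin{equation*}
  \mathrm{adj}(\Id-M) = c\,\vec a\, \vec a^{\,*},
  \qquad c = \prod_{\lambda \in\mathrm{spec}(M)\setminus\{1\}}(1-\lambda) \neq 0.
\end{equation*}
Substituting this and using $M\vec a = \vec a$ gives
\begin{equation*}
  \partial_{x_j} F(\xv;0) = -ic\,\bigl(|a_j|^2 + |a_{\hat j}|^2\bigr).
\end{equation*}
Hence $\gradx F(\xv;0)$ is a single non-zero complex scalar times the non-zero non-negative real vector with entries $|a_j|^2 + |a_{\hat j}|^2$. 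This is precisely formula~\eqref{eq:normalS}, and, viewing $F\colon \T^E\to\C\cong\R^2$, it shows that the real differential of $F$ has image of real dimension one at every such point; consequently $\Sreg$ is a smooth codimension-one submanifold of $\T^E$.

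Conversely, if $1$ is an eigenvalue of $M(\xv)$ of multiplicity $\geq 2$, then $\mathrm{rank}(\Id - M(\xv)) \leq 2E-2$, its adjugate vanishes identically, and $\gradx F(\xv;0) = 0$. Thus $\Sreg$ is \emph{exactly} the simple-eigenvalue locus, finishing item~(\ref{item:eigenvalue1}) and identifying the bad set $\Sigma \setminus \Sreg$ with the collision set of eigenvalue branches of $M(\xv)$ at the value $1$. The canonical eigenfunction attached to $\xv\in\Sreg$ is then produced from the unique-up-to-scalar $\vec a$ by formula~\eqref{eq:amp}.

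The main obstacle will be the codimension estimate in~(1). The set $\Sigma\setminus\Sreg$ is an analytic subvariety (cut out by the vanishing of $F$ and of all entries of $\mathrm{adj}(\Id-M)$), and for a generic analytic family of unitary matrices a collision of two eigenvalue branches at a prescribed value is a real-codimension-$2$ phenomenon in the parameter space, hence real-codimension-$1$ inside the hypersurface $\Sigma$. To upgrade this heuristic to a theorem I would exhibit, at any candidate point of the double locus, an explicit direction in $\T^E$ along which the two colliding branches split, thus ruling out the possibility of a whole component of $\Sigma$ lying in the double locus. The standing assumption that $\Gamma$ be a nontrivial standard graph — excluding the circle and the polygon, where \emph{every} eigenvalue is forced to be double by the reversal symmetry of the one-dimensional system — is precisely what guarantees the existence of such a splitting direction. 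With this last step in place, the three items of the theorem follow.
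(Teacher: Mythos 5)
The paper does not prove this theorem itself --- it is imported from Colin de Verdi\`ere \cite{CdV_ahp15} --- and the only internal commentary is the remark that follows it, which singles out precisely the ingredient your argument omits. Your central computation is correct and is the standard one: Jacobi's formula together with $\mathrm{adj}(\Id-M)=c\,\vec a\,\vec a^{\,*}$ at a simple eigenvalue of the unitary matrix $M=e^{i\bx}\Smat$ gives $\partial_{x_j}F=-ic\bigl(|a_j|^2+|a_{\hat j}|^2\bigr)$, which yields formula \eqref{eq:normalS}, identifies $\Sreg$ with the locus where $1$ is a simple eigenvalue of $M(\xv)$, and (via the secular-equation correspondence) gives item (3). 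The gap is the step ``the real differential of $F$ has image of real dimension one \ldots\ consequently $\Sreg$ is a smooth codimension-one submanifold.'' For a $\C$-valued function this is a non sequitur: $G(x,y)=x+i(x^{2}-y^{2})$ has differential of real rank one at the origin, yet its zero set near the origin is a single point, of codimension two. What rescues the conclusion is exactly what the paper's remark emphasizes: $F$ is a nonvanishing smooth phase times the \emph{real-valued} function $F_R$ of \eqref{eq:Real_sec_func} (equivalently, by unitarity the perturbed eigenvalue branch $\lambda(\xv)=e^{i\theta(\xv)}$ stays on the unit circle, so near a simple point $\Sigma$ is the zero set of a real analytic function $\theta$ with $\nabla\theta\neq0$); only then does the implicit function theorem give the manifold structure and the normal direction. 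You have the needed ingredient (unitarity) in hand but never invoke it, so item (2) is not actually established as written.

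Item (1) is also not proved. You state the codimension-two heuristic for eigenvalue collisions at a prescribed value and then say you ``would exhibit'' a splitting direction at every point of the double locus, asserting without argument that nontriviality of the graph guarantees one. That is the genuinely hard part of the theorem --- it is where \cite{CdV_ahp15}, and relatedly \cite{Fri_ijm05,BerLiu_jmaa17}, do real work --- and the statement that no irreducible component of $\Sigma$ lies inside the non-simple locus does not follow from anything you wrote. Relatedly, you never exclude the degenerate possibilities $F\equiv0$ (which would make $\Sigma=\T^E$, destroying the codimension-one claim) and $\Sreg=\emptyset$; the paper's remark disposes of these using discreteness, continuity in $\lv$ and generic simplicity of the spectrum. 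In summary: your treatment of items (2)--(3) is essentially correct modulo the missing realness/unitarity step, but item (1) remains a sketch rather than a proof.
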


\begin{remark}
  We would like to comment here on one important aspect of the proof
  of Theorem~\ref{thm:Sreg_def_prop}.
  Since we defined $\S$ as the zero set of a complex function, one
  would expect the co-dimension to be 2.  It is 1 because the function
  is actually real up to a smooth phase factor.  More precisely, the function
  \begin{equation}
    F_{R}\left(\xv;\av\right) :=
    \frac{e^{-i(x_1+\ldots+x_E)}}{\sqrt{\det(\Smat)}} F\left(\xv;\av\right)
    \label{eq:Real_sec_func}
  \end{equation}
  is real and share the same zero set as $F$ \cite{KotSmi_ap99}.  The
  degenerate cases $F_R\equiv0$ and $\Sreg=\emptyset$ can be excluded
  because the spectrum $k_n$ is discrete, continuous in $\lv$ and
  generically simple (on nontrivial graphs).
\end{remark}

\begin{remark}
  \label{rem:FR_surplus function}
  The real version of $F$ we defined in \eqref{eq:Real_sec_func} can
  be used in place of $F$ in the definition of the surplus function on
  $\Sigma$,
  \begin{equation}
    \label{eq:surplus_function_def_alt}
    \sigma\left(\xv\right) = \M\left[-\frac{H_{\av}(F_R)\left(\xv;\vec0\right)}{\gradx
        F_R\cdot\lv}\right].
  \end{equation}
  Indeed,
  \begin{align*}
    H_{\av}(F_R)
    &= \frac{e^{-i(x_1+\ldots+x_E)}}{\sqrt{\det(\Smat)}}
    H_{\av}(F), \\
    \gradx F_R
    &= \frac{e^{-i(x_1+\ldots+x_E)}}{\sqrt{\det(\Smat)}}
    \gradx F
    + \frac{\gradx e^{-i(x_1+\ldots+x_E)}}{\sqrt{\det(\Smat)}} F
    = \frac{e^{-i(x_1+\ldots+x_E)}}{\sqrt{\det(\Smat)}}
    \gradx F,
  \end{align*}
  where the second term in $\gradx F_R$ disappears since on $\S$ we have
  $F(\xv;0)=0$.  The prefactors then cancel when taking the quotient
  in \eqref{eq:surplus_function_def_alt}.
\end{remark}

To have a well defined nodal count (and to apply
Theorem~\ref{thm:surplus_function}) we need $k_n$ to be generic, a
quality that is determined by looking at the corresponding
eigenfunction.  We would like to be able to determine it by looking
directly at the secular manifold $\Sigma$.  An observant reader would
protest that there is no such thing as the ``corresponding
eigenfunction'' at $\xv\in\S$: the eigenfunction depends on the choice
of lengths $\vec{l}$, as was pointed out in Remark~\ref{rem:oracle}.
However, it turns out that all eigenfunctions that arise\footnote{that
  is eigenfunctions $f_n$ corresponding to an eigenvalue $k_n$ of a
  graph $\Gamma$ with lengths $\vec{l}$ such that
  $k_n \vec{l} = \xv \mod 2\pi$} from a given $\xv\in\Sreg$ share many
properties, such as the values they take on the vertices.  In
particular one can just check the genericity of the canonical
eigenfunction, defined in
Theorem~\ref{thm:Sreg_def_prop}(\ref{item:eigenvalue1}).

The generic eigenvalues need to be simple, therefore we are looking
for a subset of $\Sreg$ (see Theorem \ref{thm:Sreg_def_prop}(\ref{item:eigenvalue1})).
Next we need to exclude the points on $\Sigma$ where the corresponding
eigenfunctions vanish on a vertex.  The point $\xv \in \Sreg$
uniquely determines the one-dimensional null space of
$\Id-e^{i\bx}\Smat$.  If $\vec{a}$ is a vector spanning this null
space, it is proportional to any non-zero column of the adjugate of
$\Id-e^{i\bx}\Smat$ and therefore each entry of $\vec{a}$ is a
trigonometric polynomial in $\xv$.  From \eqref{eq:amp}, the value of
any eigenfunction at a vertex $v$ is given by
\begin{equation*}
  f(v) =  e^{-i kl_e}a_e + a_{\hat{e}},
\end{equation*}
where $e$ is any vector coming out of $v$ and $kl_e$ is the $e$-th
component of the point $\xv$ on the torus.  Defining
\begin{equation}
  \label{eq:zero_Sigma}
  \Sigma_0 := \left\{ \xv\in\Sreg :
    \prod_{e} \left(e^{-i \tx_e}a_e + a_{\hat{e}} \right) = 0 \right\},
\end{equation}
we then have the following theorem.

\begin{theorem}
  \label{thm:Sigma_generic}
  If $\Gamma$ is a nontrivial standard graph, the set
  \begin{equation}
    \label{eq:Sigma_generic_defn}
    \Sgen = \Sreg \setminus \Sigma_0
  \end{equation}
  is a non-empty submanifold of $\T^E$ of co-dimension 1.  An
  eigenvalue $k_n$ of $\Gamma$ with lengths $\vec{l}$ is generic if
  and only if
  \begin{equation*}
    \varphi(k_n) \in \Sgen,
  \end{equation*}
  where $\varphi(k) = k\lv \mod 2\pi$.

  The surplus function $\sigma$ is constant on every connected
  component of the manifold $\Sgen$.
\end{theorem}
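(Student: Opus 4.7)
The plan is to establish the three claims of the theorem in order — non-emptiness and co-dimension~1 of $\Sgen$, the biconditional characterization of genericity, and constancy of $\sigma$ on components — using Theorems~\ref{thm:Sreg_def_prop} and~\ref{thm:morse} as the main inputs.

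First, since $\Sreg$ is already an open co-dimension-1 submanifold of $\T^E$ by Theorem~\ref{thm:Sreg_def_prop}, it suffices to show that $\Sigma_0$ is a proper closed subvariety of $\Sreg$ of positive co-dimension. On $\Sreg$ the one-dimensional null space of $\Id-e^{i\bx}\Smat$ can be represented, up to a non-vanishing scalar, by any non-zero column of its adjugate, and so every factor $e^{-i\tx_e}a_e+a_{\hat e}$ in \eqref{eq:zero_Sigma} is a trigonometric polynomial in $\xv$. Hence $\Sigma_0$ is analytically cut out and has co-dimension at least~$1$ in any component of $\Sreg$ on which no single factor vanishes identically. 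For non-emptiness I would invoke the known density of generic eigenvalues for nontrivial standard graphs with rationally independent lengths (see \cite{Fri_ijm05,BerLiu_jmaa17} and Appendix~\ref{sec:volume_Sgen}): any such $k_n$ furnishes a point $\varphi(k_n)\in\Sreg$ at which every vertex value of $f_n$, and therefore every factor in \eqref{eq:zero_Sigma}, is non-zero, ruling out identical vanishing on any component.

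For the biconditional I would use the explicit amplitude representation \eqref{eq:amp}. On $\Sreg$ the null vector $\vec a$ of $\Id-e^{i\bx}\Smat$ is one-dimensional, so when $\xv=\varphi(k_n)=k_n\lv \bmod 2\pi$ it is proportional to the amplitude vector of $f_n$, and the vertex values reduce to $f_n(v)=e^{-ikl_e}a_e+a_{\hat e}=e^{-i\tx_e}a_e+a_{\hat e}$ up to the overall scalar. Therefore $f_n$ vanishes on some vertex if and only if at least one factor in the product of \eqref{eq:zero_Sigma} is zero, i.e.\ $\xv\in\Sigma_0$. Combined with Theorem~\ref{thm:Sreg_def_prop}(\ref{item:eigenvalue1}), which characterizes simplicity of $k_n$ by $\varphi(k_n)\in\Sreg$, this gives $k_n$ generic if and only if $\varphi(k_n)\in\Sgen$.

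For constancy of $\sigma$ on connected components of $\Sgen$, the crux is non-singularity of $H_{\av}(F)(\xv;\vec0)$ throughout $\Sgen$. I would obtain this by reading $\xv\in\Sgen$ through the canonical-eigenfunction interpretation of Theorem~\ref{thm:Sreg_def_prop}(\ref{item:eigenvalue1}): $k=1$ is a generic eigenvalue of the graph with lengths $\lv=\xv$, because it is simple by $\xv\in\Sreg$ and its eigenfunction does not vanish at vertices by $\xv\notin\Sigma_0$ and the biconditional just established. Theorem~\ref{thm:morse} then forces $\av=\vec0$ to be a non-degenerate critical point of $k(\av)$, and the Hessian identity used in the proof of Theorem~\ref{thm:surplus_function} translates this into non-singularity of $H_{\av}(F)(\xv;\vec0)$. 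Since the Morse index of a continuous family of non-singular symmetric matrices is locally constant and $\sigma$ is integer-valued, $\sigma$ is constant on each connected component. The main obstacle I anticipate is the non-emptiness step: one needs to rule out the pathological scenario in which a single vertex factor in \eqref{eq:zero_Sigma} is identically zero on a full component of $\Sreg$. Citing the density of generic eigenvalues sidesteps this elegantly, but a self-contained verification would have to exploit the exclusion of cycles and polygons from the class of nontrivial standard graphs.
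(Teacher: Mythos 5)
Your proposal is correct and follows essentially the same route as the paper: the biconditional comes from the vertex-value formula $f(v)=e^{-i\tx_e}a_e+a_{\hat e}$ defining $\Sigma_0$ together with Theorem~\ref{thm:Sreg_def_prop}(\ref{item:eigenvalue1}), non-emptiness comes from the density of generic eigenvalues via \cite{BerLiu_jmaa17} and the loop-eigenstate count of Appendix~\ref{sec:volume_Sgen}, and constancy of $\sigma$ comes from non-degeneracy of the Hessian guaranteed by Theorem~\ref{thm:morse} at generic (canonical) eigenvalues. The only point worth noting is that your worry about a vertex factor vanishing identically on some component of $\Sreg$ is not actually needed for the statement: since $\Sgen$ is open in the co-dimension-1 manifold $\Sreg$, a single generic eigenvalue already makes it a non-empty co-dimension-1 submanifold, which is all the theorem claims.
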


\begin{proof}
  The set $\Sgen$ is the set consisting of all the generic eigenvalues
  by its construction.  To show that it is non-empty we use the
  results of \cite{BerLiu_jmaa17}: for a typical choice of lengths
  $\lv$, every eigenvalue is either generic or its eigenfunction is
  supported on a single loop.  But for graph which is not a cycle, the
  proportion of the loop eigenstates in the spectrum is
  $\L_{loops} / 2\L \leq 1/2$, where $\L_{loops}$ is the total length
  of all loops and $\L$ is the total length of all edges of the graph
  (including loops).  This result easily follows from the Weyl
  estimate for the number of eigenvalues combined with our explicit
  knowledge of the loop eigenvalues (see
  Appendix~\ref{sec:volume_Sgen}).  Since the set $\Sigma_0$ is a
  compact subset of $\Sreg$, the set $\Sreg \setminus \Sigma_0$ is a
  submanifold of $\Sreg$ of the same dimension.

  Finally, the surplus function is constant on every connected
  component of $\Sgen$ because the eigenvalues of the matrix in the
  definition of $\sigma$ vary continuously with $\xv$.  To change the
  Morse index, one of them has to become zero or $F_R\cdot\lv$ has to
  vanish, both of which are impossible on generic eigenvalue, by
  Theorems~\ref{thm:morse} and \ref{thm:Sreg_def_prop}.
\end{proof}

\subsection{Ergodicity and the Barra-Gaspard measure}
\label{sec:BarraGaspard_measure}

The main idea of Barra and Gaspard \cite{BarGas_jsp00} was that if one
wants to calculate the average of a certain function of the spectrum
of a quantum graph, it is often possible to redefine this function in
terms of the $\xv$ torus coordinates instead and then integrate over
the secular manifold $\Sigma$ with an appropriate measure.  This idea
was applied to eigenvalue statistics in the original paper
\cite{BarGas_jsp00}, used to study eigenfunction statistics
\cite{BerWin_tams10}, eigenfunction scarring \cite{CdV_ahp15},
band-gap statistics of periodic structures
\cite{BanBer_prl13,GalBan_jmps16,ExnTur_prep17} and statistics of
topological resonances \cite{ColTru_prep16}.

\begin{definition}[Barra-Gaspard measure \cite{BarGas_jsp00,CdV_ahp15}]
  Let $\Gamma$ be a quantum graph with lengths $\lv$.  The
  \emph{Barra-Gaspard} measure on the smooth manifold $\Sgen$ is the
  lengths dependent probability measure
  \begin{equation}
    \label{eq:BG_measure_def}
    d\bgm:=\frac{\left|\normal(\xv)\cdot\lv\right|}{C}d\sigma,
  \end{equation}
  where $\normal$ is the unit vector field normal to $\Sgen$,
  $d\sigma$ is the surface element of $\Sgen$ induced by the Euclidean
  metric and $C=\int_{\Sgen}\left|\normal\cdot\lv\right|d\sigma$ is
  the normalization constant which depends on the lengths $\lv$.
\end{definition}

\begin{theorem}[Barra--Gaspard \cite{BarGas_jsp00}, Berkolaiko-Winn
  \cite{BerWin_tams10}, Colin~de~Verdi\'ere \cite{CdV_ahp15}]
  \label{thm:BG_measure_ergodicity}
  Let $\Gamma$ be a nontrivial standard graph. Then
  $\bgm$ satisfies the following properties:
  \begin{enumerate}
  \item It is a Radon measure on $\Sgen$.
  \item \label{item:ergodic} If the lengths $\lv$ are rationally
    independent, then for any \emph{Riemann integrable} function
    $f:\Sgen\rightarrow\R$
      \begin{equation}
        \label{eq:ergodic_thm_torus}
        \lim_{N\to\infty} \frac{1}{\left|\genN\right|}
        \sum_{n\in\genN} f\left(\varphi(k_n)\right)
        = \int_{\Sgen} fd\bgm,
      \end{equation}
      where $\varphi(k) = k\lv \mod 2\pi$ and $\genN$ is the set of
      indices $1 \leq n \leq N$ such that $k_n$ is generic.
  \end{enumerate}
\end{theorem}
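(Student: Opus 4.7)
My proof plan separates the two assertions.

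For (1), I would establish that $\bgm$ is Radon by exhibiting it as a probability measure with bounded density against the induced surface measure $d\sigma$ on $\Sgen$. Since $\normal(\xv)$ is a unit vector, the density $|\normal(\xv)\cdot\lv|/C$ is pointwise bounded by $|\lv|/C$. The measure $d\sigma$ itself is Radon because $\Sgen$ is an open submanifold of the compact algebraic variety $\S\subset\T^E$, which by Theorem~\ref{thm:Sreg_def_prop} is smooth outside a lower-dimensional singular set and therefore has finite $(E-1)$-dimensional Hausdorff measure. The normalization constant $C$ is strictly positive: if $\normal\cdot\lv$ vanished a.e.\ on $\Sgen$, the linear flow $\varphi(k)=k\lv$ would be tangent to $\Sgen$ everywhere, which is incompatible with the fact that $\Sgen$ is hit by $\varphi$ at the discrete generic eigenvalues. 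Finiteness of $C$ is automatic from the above bound.

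For (2), the starting point is Weyl's theorem: when the entries of $\lv$ are rationally independent, the linear flow $\varphi(k)=k\lv\mod 2\pi$ is uniquely ergodic on $\T^E$, so that for every continuous $g:\T^E\to\R$,
\[
\lim_{K\to\infty}\frac{1}{K}\int_{0}^{K}g(\varphi(k))\,dk=\frac{1}{(2\pi)^{E}}\int_{\T^E}g\,d\mathrm{vol}.
\]
The task is to convert this time average of a bulk quantity into the \emph{counting} average of $f$ sampled at hitting times $\{k_n\}_{n\in\genN}$. My plan is a standard tubular neighborhood/coarea argument: extend $f$ to a smooth function $\tilde f$ on a neighborhood of $\Sgen$, pick a smooth bump $\rho_\epsilon$ on $\R$ supported in $(-\epsilon,\epsilon)$ with $\int\rho_\epsilon=1$, and apply Weyl's theorem to $g_\epsilon(\xv):=\tilde f(\xv)\rho_\epsilon(\mathrm{dist}(\xv,\Sgen))$. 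The coarea formula, together with the fact that in tubular coordinates the Euclidean volume factors as $d\sigma\,dt$ to leading order, shows that $(2\pi)^{-E}\int_{\T^E}g_\epsilon\,d\mathrm{vol}\to(2\pi)^{-E}\int_{\Sgen}f\,d\sigma$ as $\epsilon\to 0$. On the time side, during each transverse crossing the flow spends time $\approx 1/|\normal(\varphi(k_n))\cdot\lv|$ inside the $\epsilon$-tube (weighted by $\rho_\epsilon$), so the time integral is asymptotic to $\sum_{k_n<K}f(\varphi(k_n))/|\normal(\varphi(k_n))\cdot\lv|$. Equating the two limits and normalizing by the total number of hits (which by the same argument applied to $f\equiv 1$ grows like $K\,C/(2\pi)^E$, recovering the Weyl eigenvalue counting asymptotic) produces exactly the right-hand side $\int_\Sgen f\,d\bgm$.

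The main obstacle is that this tubular analysis breaks down near $\partial\Sgen=(\S\setminus\Sreg)\cup\Sigma_0$: here $\gradx F$ may vanish, $\normal\cdot\lv$ can be zero, and the flow can be tangent to $\S$, invalidating the "time-in-tube" estimate. My strategy is to show that the density of hits landing in an $\eta$-neighborhood of $\partial\Sgen$ tends to zero as $\eta\to 0$. Since $\partial\Sgen$ has codimension at least two in $\T^E$, the Lebesgue volume of its $\eta$-tube is $O(\eta^2)$, and applying the continuous Weyl theorem to the indicator of this tube controls the proportion of bad hits. With this bad set excised, the coarea/tubular estimate gives the result uniformly for continuous $f$ vanishing near $\partial\Sgen$; a standard sandwich argument then extends equation~(\ref{eq:ergodic_thm_torus}) to all Riemann integrable $f$ by approximation from above and below by continuous functions whose Barra-Gaspard integrals differ by less than $\epsilon$, exploiting that the set of discontinuities of $f$ has $\bgm$-measure zero by hypothesis.
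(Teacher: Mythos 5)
Your overall architecture---unique ergodicity of the linear flow (Weyl), a tubular--neighbourhood/coarea conversion of time averages into hitting averages weighted by $1/\left|\normal\cdot\lv\right|$, and a sandwich argument to pass from continuous to Riemann integrable $f$---is exactly the route of the cited proofs: Barra--Gaspard and Colin~de~Verdi\`ere establish the continuous case on $\Sreg$, and the Riemann-integrable upgrade is Proposition 4.4 of \cite{BerWin_tams10}; the paper itself only cites these results and records the two needed adjustments. However, there is a genuine gap in your treatment of the exceptional set. You assert that $(\S\setminus\Sreg)\cup\Sigma_0$ has codimension at least two in $\T^E$, so that the density of hits near it vanishes as $\eta\to0$. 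This is false whenever the graph has loops: $\Sigma_0$ contains the set $\Sigma_F$ of \eqref{eq:Sigma_faces}, i.e.\ the faces $\{\tx_e=0\}$ for loop edges $e$, and each such face lies \emph{entirely} inside $\S$ (Remark~\ref{rem:contracting_a_loop}), hence is a full-dimensional piece of the secular manifold, of codimension one in $\T^E$. These faces carry the loop eigenstates, which constitute the positive proportion $\L_{loops}/(2\L)$ of the spectrum computed in Appendix~\ref{sec:volume_Sgen}; the density of hits on $\Sigma_0$ therefore does not tend to zero---this is precisely why one normalizes by $\left|\genN\right|$ rather than by $N$ and why $C^g\neq C^{reg}$.

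The argument can be repaired, but not by a naive codimension count. The set you actually need to excise is $\overline{\Sgen}\setminus\Sgen$ (the loop faces are open in $\Sreg$ away from their intersections with other sheets of $\S$, so they stay at positive distance from most of $\Sgen$ and never enter your $\epsilon$-tube). To show that $\overline{\Sgen}\setminus\Sgen$ has codimension two, and---more importantly---that the non-generic eigenvalues other than loop states have density zero so that $\left|\genN\right|/N$ converges to the positive limit $1-\L_{loops}/(2\L)$, you need the input of \cite{BerLiu_jmaa17}: for rationally independent lengths, every eigenfunction vanishing at a vertex is supported on a single loop, equivalently $\Sigma_0\setminus\Sigma_F$ has measure zero in $\Sreg$. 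This is exactly the content of Appendix~\ref{sec:volume_Sgen} and cannot be omitted. Your part (1) and the final sandwich step are fine; positivity of $C$ follows even more directly from \eqref{eq:normalS}, since $\normal$ has nonnegative entries, not all zero, and $\lv$ has positive entries.
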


\begin{remark}
  In \cite{BarGas_jsp00,CdV_ahp15} this was proven for continuous
  functions for a measure defined on $\Sreg$ instead.  Restricting it
  to $\Sgen$ does not change any substance.  The adjustment in the
  normalizing constant is shown in Appendix~\ref{sec:volume_Sgen} to
  be
  \begin{equation*}
    \frac{C^g}{C^{reg}} = 1 - \frac{\L_{loops}}{2\L}.
  \end{equation*}
  Extending the result from continuous to Riemann integrable functions
  is done using proposition 4.4 of \cite{BerWin_tams10}.

  Note that part (\ref{item:ergodic}) of the theorem cannot be
  extended to include all measurable functions since the set of our
  sample points has measure zero.  A Birkhoff-type result holding for almost
  every starting point would not be sufficient for us since our flow
  piercing $\Sigma$ (equation \eqref{eq:def_flow}) has the fixed
  starting point $\varphi(0)=\vec0$.
\end{remark}

\subsection{The surplus function $\sigma$ is even}
\label{sec:total_symmetry}

We now exhibit a symmetry in $\S$ that has a profound effect on
the nodal surplus distribution.

\begin{lemma}
  \label{lem:total_symmetry}
  Let $\Gamma$ be a nontrivial standard graph with lengths $\lv$.  The inversion
  \begin{equation}
    \label{eq:inversion_def}
    \inv:\T^{E}\rightarrow\T^{E},\qquad
    \inv\left(\xv\right)= -\xv = 2\pi - \xv \mod 2\pi
  \end{equation}
  is a measure preserving homeomorphism of
  $\left(\Sgen,\bgm\right)$ to itself.

  Furthermore, under the mapping $\inv$, the surplus function
  transforms as
  \begin{equation}
    \label{eq:antisymmetric_sigma}
    \sigma\circ\inv(\xv) = \sigma\left(-\xv\right)
    = \beta-\sigma\left(\xv\right).
  \end{equation}
\end{lemma}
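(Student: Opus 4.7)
The plan is to exploit the reality of the bond-scattering matrix $\Smat$, whose entries given by \eqref{eq:def_S_matrix} are all real (Dirichlet vertices contributing only sign changes). Taking the complex conjugate of $F(\xv;\av) = \det(\Id - e^{i\ba}e^{i\bx}\Smat)$ and using $\overline{\Smat} = \Smat$ yields the key identity $F(-\xv;\av) = \overline{F(\xv;-\av)}$. Setting $\av=0$ gives $F(-\xv;0) = \overline{F(\xv;0)}$, so $\inv$ preserves $\Sigma$; differentiating in $\xv$ once more gives $\gradx F(-\xv;0) = -\overline{\gradx F(\xv;0)}$, showing that $\Sreg$ is also $\inv$-invariant.

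To extend this to $\Sgen$ and verify preservation of $\bgm$, I would analyze how the null vector $\vec a$ of $\Id - e^{i\bx}\Smat$ transforms. The defining equation at $-\xv$ is the complex conjugate of the equation at $\xv$, so one may take $\vec a(-\xv) = \overline{\vec a(\xv)}$. Each vertex value $e^{-i\tx_e}a_e + a_{\hat e}$ then transforms by complex conjugation, so the zero set $\Sigma_0$ of \eqref{eq:zero_Sigma} is $\inv$-invariant, and therefore so is $\Sgen = \Sreg\setminus\Sigma_0$. For the measure, the normal formula \eqref{eq:normalS} depends only on $|a_j|^2 + |a_{\hat j}|^2$, which is invariant under $\vec a\mapsto\overline{\vec a}$, so $\normal(-\xv) = \normal(\xv)$. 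Since $\inv$ is an isometry of the flat torus $\T^E$, the induced surface element $d\sigma$ is preserved, and hence so is the full density $|\normal\cdot\lv|/C$ of $\bgm$.

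For the antisymmetry of $\sigma$, I would differentiate the key identity twice in $\av$ at $\av = 0$; the two chain-rule sign changes cancel, so $H_{\av}(F)(-\xv;0) = \overline{H_{\av}(F)(\xv;0)}$, while a single $\xv$-derivative contributes one sign, giving $\gradx F(-\xv;0)\cdot\lv = -\overline{\gradx F(\xv;0)\cdot\lv}$. Consequently the matrix $Q(\xv) := -H_{\av}(F)(\xv;\vec 0)/(\gradx F\cdot\lv)$ satisfies $Q(-\xv) = -\overline{Q(\xv)}$. By Remark~\ref{rem:FR_surplus function} we may replace $F$ with the real-valued $F_R$ in this quotient, so $Q(\xv)$ is in fact a real symmetric matrix and hence $Q(-\xv) = -Q(\xv)$. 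Finally, by Theorem~\ref{thm:morse} $Q(\xv)$ is non-degenerate on $\Sgen$, so $\M[-Q(\xv)] + \M[Q(\xv)] = \beta$, yielding $\sigma(-\xv) = \beta - \sigma(\xv)$. The main subtlety throughout is careful bookkeeping of the complex conjugations, and the crucial point is that $Q$ is genuinely a real symmetric matrix (rather than merely real up to an overall phase), which is what permits the passage from $-\overline{Q}$ to $-Q$ and the invocation of the Morse-index complement identity.
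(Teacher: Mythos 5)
Your proof is correct. The second half (the antisymmetry $\sigma(-\xv)=\beta-\sigma(\xv)$) is essentially the paper's argument: both rest on conjugating the secular function, tracking the single sign from the $\xv$-derivative against the cancelling pair of signs from the $\av$-derivatives, and using reality of the quotient via $F_R$ together with non-degeneracy (Theorem~\ref{thm:morse}) to invoke $\M[-A]=\beta-\M[A]$; the only cosmetic difference is that you conjugate $F$ and pass to $F_R$ at the end, while the paper conjugates $F_R$ directly, picking up the harmless $\pm$ from $\sqrt{\det\Smat}$. The first half is where you genuinely diverge: the paper establishes $\inv(\Sgen)=\Sgen$ and measure preservation by reflecting the canonical eigenfunction, setting $\tilde f_e(x)=f_e(-x)$ and checking by hand that it is a generic eigenfunction of the graph with lengths $2\pi-l_e$ with the same amplitudes $C_e$, hence the same normal vector. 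You instead work entirely at the level of the secular determinant: $\overline{\Smat}=\Smat$ gives $F(-\xv;0)=\overline{F(\xv;0)}$, the null vector transforms as $\vec a\mapsto\overline{\vec a}$, so the vertex values in \eqref{eq:zero_Sigma} conjugate and $\Sigma_0$ is preserved, and \eqref{eq:normalS} depends only on $|a_j|^2+|a_{\hat j}|^2$. These are two faces of the same time-reversal symmetry (reflecting the eigenfunction is exactly conjugating its amplitude vector), but your linear-algebraic route avoids the explicit eigenfunction construction and the verification of vertex conditions, at the cost of being less transparent about \emph{why} the symmetry holds; the paper's eigenfunction argument is the one that later generalizes to the local version in Lemma~\ref{lem:bridge_symmetry}, where only part of the graph is reflected and a purely determinantal argument would be harder to localize.
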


An example of the symmetry~(\ref{eq:antisymmetric_sigma}) can be observed
in Figure~\ref{fig:flow_hits_manifold}(a).

\begin{proof}
  In order to prove that $\inv$ is a measure preserving homeomorphism
  of $\left(\Sgen,\bgm\right)$ to itself, first observe that $\inv$ is
  smooth, has a smooth inverse (itself), and has Jacobian determinant
  equal to 1 in absolute value.  We are only left to show that
  \begin{equation}
    \label{eq:to_establish}
    \xv\in\Sgen \Rightarrow \inv(\xv)\in\Sgen,
    \qquad\mbox{and}\qquad   \left| \hat{n}(\xv) \cdot \lv \right| =
      \left| \hat{n}(\inv(\xv)) \cdot \lv \right|.
  \end{equation}

  Let $\xv \in \Sgen$ and let $f$ be the eigenfunction of the simple
  eigenvalue 1 guaranteed by
  Theorem~\ref{thm:Sreg_def_prop}(\ref{item:eigenvalue1}).  On the
  edge $e$ the function $f$ has the form
  \begin{equation}
    \label{eq:function_form}
    f_e(x) = C_e \cos(x-\theta_e),
  \end{equation}
  for some $\theta_e \in [0,2\pi)$ and $C_e>0$.  We remark that
  $C_e^2$ is equal to the $e$-th component of the normal vector
  $\normal(\xv)$, namely
  $\bigl|a_j\bigl|^2 + \bigl|a_{\hat{j}}\bigl|^2$.  The function $f_e$
  is analytic and $2\pi$-periodic; we can view it as being defined by
  (\ref{eq:function_form}) not just on the edge $[0,l_e]$ but on the
  whole real line.

  We now let $\tilde{f}_e(x) = f_e(-x)$.  We claim it is an
  eigenfunction with eigenvalue $1$ of the graph with lengths
  $\tilde{l}_e = 2\pi - l_e$.  Indeed, it obviously solves the
  eigenvalue equation on every edge and satisfies the vertex
  conditions since
  \begin{align*}
    \tilde{f}_e(0) & = f_e(0),
    & \tilde{f}_e\left(\tilde{l}_e\right)
    &= f_e(l_e - 2\pi) = f_e(l_e), \\
    \tilde{f}_e'(0) &= -f_e'(0),
    & \tilde{f}_e'\left(\tilde{l}_e\right)
    & = -f_e'(l_e).
  \end{align*}
  This construction is obviously invertible so the multiplicity of
  eigenvalue 1 at $\xv$ and at $\inv(\xv)$ are the same.  Similarly,
  $\tilde{f}$ is generic if and only if $f$ is and
  the first part of (\ref{eq:to_establish}) is established.

  The normal vectors at the two points coincide:
  $\normal(\xv) = \normal(\inv(\xv))$ up to a sign because what
  appears in (\ref{eq:normalS}) is the square of the amplitude $C_e$
  of the cosine.  Therefore, the transformation $\inv$ is measure preserving.

  Conjugating $F_R$,
  \begin{equation*}
    F_R\left(\xv;\av\right) =
    \frac{e^{-i(x_1+\ldots+x_E)}}{\sqrt{\det(\Smat)}}
    \det\left(\Id-e^{i\ba+i\bx}\Smat\right),
  \end{equation*}
  which we know to be real, we get
  \begin{equation}
    \label{eq:F_real_is_even}
    F_R\left(\xv;\av\right) = \cc{F_R\left(\xv;\av\right)}
    = \pm F_R\left(-\xv;-\av\right),
  \end{equation}
  where the sign depends on whether $\det\Smat$ is equal to $1$ or $-1$.
  Therefore
  \begin{equation}
    \label{eq:Hessian_gradient_under_inversion}
    H_{\av}(F_R)\left(-\xv;\vec{0}\right)
    = \pm H_{\av}(F_R)\left(\xv;\vec{0}\right),
    \quad\mbox{while}\quad
    \gradx F_R\left(-\xv;\vec{0}\right)
    = \mp \gradx F_R\left(\xv;\vec{0}\right).
  \end{equation}
  If $A$ is a nondegenerate symmetric $\beta\times \beta$ matrix, we obviously
  have $\M[-A] = \beta-\M[A]$ and thus
  equation~\eqref{eq:antisymmetric_sigma} follows from
  Remark~\ref{rem:FR_surplus function}.
\end{proof}

\subsection{Proof of Theorem~\ref{thm:existence_of_distrib}}
\label{sec:proof_of_existence_and_symmetry}

We collect all the preceding discussion together for the proof of our
first main theorem.

\begin{proof}[Proof of Theorem~\ref{thm:existence_of_distrib}]
  By Theorem~\ref{thm:Sigma_generic} the surplus function is constant
  on each connected component of $\Sgen$, so it is actually
  continuous. The frequency $\Pr{s}$ (see equation
  \eqref{eq:surplus_distr_defn}) can be obtained from
  Theorem~\ref{thm:BG_measure_ergodicity} by setting $f$ to be the
  indicator function of the set $\sigma^{-1}(s)$,
  \begin{equation}
    \label{eq:frequency_from_measure}
    \Pr{s} = \bgm\left( \sigma^{-1}(s) \right) = \mu\left( \sigma^{-1}(s) \right).
  \end{equation}

  Abbreviating $\bgm$ to $\mu$ to avoid clutter, we use the properties
  of $\inv$ as seen in Lemma~\ref{lem:total_symmetry},
  \begin{equation*}
    \mu\left( \sigma^{-1}(s) \right)
    =\mu\left( \inv^{-1} \circ \sigma^{-1}(s) \right)
    =\mu\left( (\sigma\circ\inv)^{-1}(s) \right)
    =\mu\left( \sigma^{-1}(\beta-s) \right),
  \end{equation*}
  which proves that
  \begin{equation}
    \Pr{s} = \Pr{\beta-s}.
  \end{equation}
\end{proof}

\section{Nodal surplus of graphs with block structure}
\label{sec:graphs_block_structure}

The aim of this section is the proof of
Theorem~\ref{thm:binomial_distr}.  After introducing some additional
tools (Sections~\ref{sec:scattering_def} and \ref{sec:splitting}) and
setting up the definitions (Section~\ref{sec:block_defs}) we will see
in Section~\ref{sec:block_structure} that the nodal surplus function
can be localized to a block of the graph (see
Fig.~\ref{fig:vertex_separation}).  After studying properties of the
local surplus functions in Sections~\ref{sec:surplus_determined} and
\ref{sec:bridge_symmetry} we get a handle on their probability
distributions in Section~\ref{sec:proof_main2} and hence prove our
second main result, Theorem~\ref{thm:binomial_distr}.

Section~\ref{sec:scattering_def} contains a review of well-known facts
that we need in the proofs of subsequent sections; a reader not
interested in the details of the proofs may skip it entirely.
Section~\ref{sec:splitting} is also needed only for the subsequent
proofs (and only in its simplest form).  However it contains a new
formulation of a well-known idea which may turn out to be a useful
in other settings.

\subsection{Scattering from a graph}
\label{sec:scattering_def}

One can probe spectral properties of a graph by attaching (infinite)
leads to it and considering the scattering of plane waves coming in
from infinity
\cite{GerPav_tmp88,KotSmi_ap99,KosSch_jpa99,KotSmi_prl00,DavPus_apde11,DavExnLip_jpa10,BanBerSmi_ahp12}.

Let $\Gamma=\left(\E,\V\right)$ be a standard graph and let
$\tilde{\Gamma}$ be the non-compact quantum graph constructed by
attaching $M$ infinitely long edges (leads) to some vertices of the
graph and imposing Neumann vertex conditions there.

A solution $f$ of the eigenvalue equation $\opH_Af = k^2f$ on
$\tilde{\Gamma}$ with $k>0$ can be described by its compact graph
coefficients $\vec{a}\in\C^{2E}$ (see (\ref{eq:amp})) together with
similar coefficients on the $j$-th infinite leads number, $c_{j,in}$
and $c_{j,out}\in\C$,
\begin{equation}
  \label{eq:lead_solution}
  f_j\left(y\right)=c_{j,in}e^{-iky}+c_{j,out}e^{iky},
\end{equation}
where $y\in[0,\infty)$ is the coordinate along the lead starting from
0 at the attachment point.  Note that $f$ is usually not an
eigenfunction since it has an infinite $L^2$ norm unless all
coefficients $c$ are zero.

Let $\vec{c}_{in}$ and $\vec{c}_{out}$ be the vectors of the
corresponding coefficients on the leads.  Imposing vertex conditions
on the vertices of the graph results in a condition similar to
\eqref{eq:sec_condition},
\begin{equation}
  \label{eq:scattering problem}
  \begin{pmatrix}
    \vec{c}_{out}\\
    \vec{a}
  \end{pmatrix}
  =
  \begin{pmatrix}
    \Id & 0\\
    0 & e^{ik\lv + i\av}
  \end{pmatrix}
  \begin{pmatrix}
    r & t'\\
    t & \tS
  \end{pmatrix}
  \begin{pmatrix}
    \vec{c}_{in}\\
    \vec{a}
  \end{pmatrix},
\end{equation}
where the entries of the subblocks $r$, $t$, $t'$ and $\tS$ are
calculated according to formula \eqref{eq:def_S_matrix}.  The
$M\times M$ matrix $r$ describes the reflection of the waves from the
attachment vertices directly back into the leads (without getting into
the compact part $\Gamma$) and is symmetric, $r^T=r$.  The matrix $t$
describes scattering of the waves from the leads into $\Gamma$, $t'$
describes scattering from $\Gamma$ into the leads and $\tS$ describes
wave scattering between edges of $\Gamma$.  In our
setting, we have $t' = \left(Jt\right)^T$, where $J$ switches around
the directed labels of the edges of the graph $\Gamma$, namely
$\left(J\av\right)_e = a_{\hat{e}}$.  We need it
because the wave traveling on $e\in\Gamma$ and scattering into the
lead travels in the opposite direction from the wave in this edge
which came from the lead.  This relation is a manifestation of the
time-reversal symmetry of the problem.

Eliminating $\vec{a}$ from equation \eqref{eq:scattering problem} and
solving of $\vec{c}_{out}$ in terms of $\vec{c}_{in}$ we obtain
\emph{scattering matrix} $Z$
\begin{equation}
  \label{eq:scattering_matrix_def}
  \vec{c}_{out} = Z \vec{c}_{in},
  \quad \mbox{with}\quad
  Z := r + t' \left(\Id - e^{ik\lv + i\av}\tS\right)^{-1}
    e^{ik\lv + i\av}t,
\end{equation}
which is well defined as long as $\Id - e^{ik\lv + i\av}\tS$ is
non-singular\footnote{It is actually shown in
  \cite{KosSch_jpa99,BanBerSmi_ahp12} that for real $k$,
  $\det\left(\Id - e^{ik\lv + i\av}\tS\right)=0$ produces removable
  poles only.}.  We remark that expanding the inverse in geometric
series results in a nice interpretation of the scattering matrix as a
summation over all paths from one lead to another weighted with their
scattering amplitudes.

The following Theorem is an amalgamation of several results appearing
in \cite{KosSch_jpa99,BanBerSmi_ahp12} and also some new results.
We consider the matrix $Z$ as a function on the torus by replacing
$k\lv$ with $\xv$.


\begin{theorem}
  \label{thm:scattering_properties}
  Consider the scattering matrix of a nontrivial standard graph
  $\Gamma$ as a function of torus coordinates $\xv$ and magnetic
  fluxes $\av$
  \begin{align}
    \label{eq:scattering_torus}
    Z(\xv,\av) &:= r + t' \left(\Id - e^{i\xv + i\av}\tS\right)^{-1}
    e^{i\xv + i\av}t \\
    \nonumber
    &= r + t' \left(e^{-i\xv - i\av} - \tS\right)^{-1} t.
  \end{align}
  It has the following properties.
  \begin{enumerate}
  \item \label{item:scat_deg_points} For a given $\av$ denote the set of
    $\xv$ such that
    \begin{equation*}
      \det\left(\Id - e^{i\xv + i\av}\tS\right) = 0,
    \end{equation*}
    by $W(\av)$.  Then, if $\xv \in W(\av)$, the compact
    graph $\Gamma$ with lengths $\lv=\xv$ and magnetic fluxes $\av$
    has an eigenfunction with eigenvalue 1 \emph{vanishing} at all
    lead attachment vertices as well as satisfying Neumann conditions
    there.
  \item \label{item:scat_unitary} For every point
    $p=\left(\xv;\av\right)\in\T^{E}\times\T^{\beta}$ such that
    $\xv \not\in W(\av)$, the matrix $Z(\xv,\av)$ is unitary.
  \item \label{item:scat_symmetry} $Z$ satisfies
    \begin{align}
      \label{eq:scat_conjugation}
      Z(-\xv,-\av) &= \cc{Z(\xv,\av)}, \\
      \label{eq:scat_mag_reversal}
      Z(\xv,-\av) &= Z(\xv,\av)^T,
    \end{align}
  \end{enumerate}
\end{theorem}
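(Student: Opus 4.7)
The plan is to reduce everything to the single observation that the $(M+2E)\times(M+2E)$ matrix
\begin{equation*}
  V(\xv,\av) := \begin{pmatrix} \Id & 0 \\ 0 & e^{i\xv+i\av}\end{pmatrix}\begin{pmatrix} r & t' \\ t & \tS \end{pmatrix}
\end{equation*}
is unitary, since each factor is unitary (the second is the vertex scattering matrix of the extended graph with leads, with Neumann conditions at the attachment vertices).

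For part~(\ref{item:scat_deg_points}), I take a nonzero $\vec a\in\ker(\Id - e^{i\xv+i\av}\tS)$ and compute directly that $V(\vec 0,\vec a)^\top = (t'\vec a,\, \vec a)^\top$, where in the second block I have used $\vec a = e^{i\xv+i\av}\tS\vec a$. Unitarity of $V$ then forces $\|t'\vec a\|^2 + \|\vec a\|^2 = \|\vec a\|^2$, so $t'\vec a = \vec 0$. Hence the scattering solution with $\vec c_{in} = \vec 0$ also has $\vec c_{out} = \vec 0$ and the lead wave functions vanish identically. The compact-graph function built from $\vec a$ via \eqref{eq:amp} (with $k=1$, lengths $\lv=\xv$, fluxes $\av$) is then a nontrivial eigenfunction at eigenvalue $1$: at each attachment vertex it vanishes by continuity with the zero lead wave, and the current-conservation condition collapses to the compact-side Neumann condition because the lead derivative contribution is zero.

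For part~(\ref{item:scat_unitary}), invertibility of $\Id - e^{i\xv+i\av}\tS$ gives a unique $\vec a$ for every $\vec c_{in}$ from the second row of the scattering equation, with $\vec c_{out} = Z(\xv,\av)\vec c_{in}$ from the first row. The identity $V(\vec c_{in},\vec a)^\top = (\vec c_{out},\vec a)^\top$ together with unitarity of $V$ yields $\|\vec c_{in}\|^2 + \|\vec a\|^2 = \|\vec c_{out}\|^2 + \|\vec a\|^2$, so $Z$ is a finite-dimensional isometry and hence unitary.

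For part~(\ref{item:scat_symmetry}), identity \eqref{eq:scat_conjugation} is immediate since the entries of $r, t, t', \tS$ are real (formula~\eqref{eq:def_S_matrix} gives only rationals, with real Dirichlet sign flips), so complex conjugation of $Z(\xv,\av) = r + t'(e^{-i\xv-i\av}-\tS)^{-1}t$ merely flips the signs of the exponents. For \eqref{eq:scat_mag_reversal} I will use the bond-reversal involution $J$ together with four identities: $r^T=r$, $t'=(Jt)^T$ (stated in the text), $J\tS J = \tS^T$ (standard time-reversal for Neumann/Dirichlet bond scattering, cf.\ \cite{BanBerSmi_ahp12,KotSmi_ap99}), and $J\bx J = \bx$, $J\ba J = -\ba$ from the pair structure of $\bx$ and $\ba$. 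Taking the transpose of $Z(\xv,\av)$ and inserting $J^2 = \Id$ rewrites $(e^{-i\xv-i\av}-\tS^T)^{-1}$ as $J(e^{-i\xv+i\av}-\tS)^{-1}J$; the outer $J$s are then absorbed using $t^T J = t'$ and $Jt = (t')^T$, producing exactly $Z(\xv,-\av)$. The main technical obstacle is verifying the bond-reversal identities (especially $J\tS J = \tS^T$) on the extended graph with leads; once these are settled the entire theorem becomes a routine bookkeeping exercise organized around the single unitary $V$.
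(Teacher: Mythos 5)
Your proposal is correct and follows essentially the same route as the paper: for part~(\ref{item:scat_symmetry}) you use exactly the paper's argument (realness of $r,t,t',\tS$ for \eqref{eq:scat_conjugation}, and the $J$-conjugation identities $J\tS J=\tS^T$, $Je^{i\xv+i\av}J=e^{i\xv-i\av}$, $t'=(Jt)^T$ for \eqref{eq:scat_mag_reversal}, merely written as a transpose of $Z$ rather than a rewriting of $Z$ itself). For parts~(\ref{item:scat_deg_points}) and (\ref{item:scat_unitary}) the paper simply cites \cite{KosSch_jpa99,BanBerSmi_ahp12}, and your unitarity-of-$V$ argument is precisely the standard proof found there, so no gap.
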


\begin{proof}
  Parts (\ref{item:scat_deg_points}) and (\ref{item:scat_unitary}) are
  well known, see, for example, \cite[Thm~3.1 and Thm~3.3]{KosSch_jpa99}
  or \cite[Lem~2.3 and Thm~2.1(2)]{BanBerSmi_ahp12}.

  Equation~\eqref{eq:scat_conjugation} follows by conjugating the
  definition of $Z$ and using the fact that $r$, $t$, $t'$ and $\tS$
  have real entries.

  Equation~\eqref{eq:scat_mag_reversal} can be derived from \cite[Cor
  3.2]{KosSch_jpa99}, but we prefer to give a direct proof,
  introducing a useful technique.  Let $J$ denote the permutation
  matrix switching the orientation of the edge labels.  The matrix $J$
  is an orthogonal involution, i.e.\ $J^{-1}=J^T=J$.  We observe that
  \begin{equation*}
    J\tS J = \tS^T, \qquad
    J e^{i\xv + i\av} J = e^{i\xv - i\av}
    = \left(e^{i\xv - i\av}\right)^T,
    \qquad
    t' = (J t)^T.
  \end{equation*}
  Substitute the latter equation in the form $t' = (t)^T J$ into the
  definition of $Z$ and propagate $J$ through the product,
  \begin{align*}
    Z(\xv,\av)
    &= r + t' \left(e^{-i\xv - i\av} - \tS\right)^{-1} t
    = r + (t)^T J \left(e^{-i\xv - i\av} - \tS\right)^{-1} t \\
    &= r + (t)^T \left(e^{-i\xv + i\av} - \tS^T\right)^{-1} J t
      = r + (t)^T \left(e^{-i\xv + i\av} - \tS^T\right)^{-1} (t')^T \\
    &= \left(
      r + t' \left(e^{-i\xv + i\av} - \tS \right)^{-1} t
      \right)^T
      = Z(\xv; -\av)^T.
  \end{align*}
\end{proof}

\subsection{Secular equation via a splitting of a graph}
\label{sec:splitting}

The following definition introduces the concept of a \emph{splitting}
of a graph, illustrated in Figure~\ref{fig:splitting_example}.  It
uses the notion of an edge of length zero.  This is to be viewed as
the result of contracting an edge, when the two end vertices of the
edge are merged and the Neumann conditions are imposed at the newly
formed vertex.  For further information, see \cite[Appendix
A]{BanLev_prep16} which studies convergence of the eigenvalues in the
process of contraction and Appendix~\ref{sec:zero_length_edge} which
considers the effect of setting $\tx_e$ to 0 in the secular function.
Stronger forms of convergence in more general settings are established
in the forthcoming work \cite{BerLatSuk_prep17}.

\begin{figure}
  \centering
  \includegraphics[scale=1]{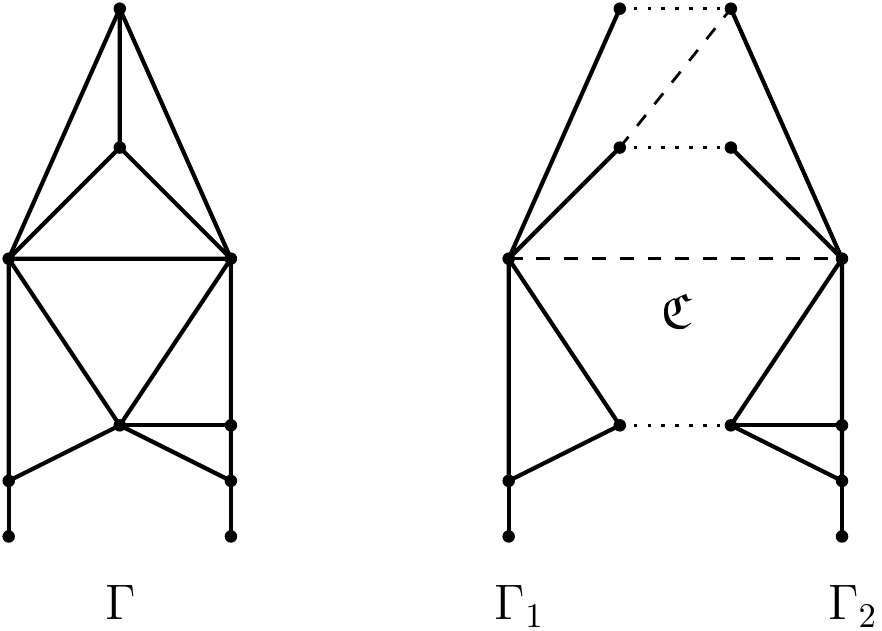}
  \caption{An example of a splitting of a graph.  The original graph
    $\Gamma$ is on the left.  The two subgraphs (in solid lines) and
    the connector set $\Cnct$ (in dashed and dotted lines) is on the
    right.  The dotted lines indicate edges of length 0.}
  \label{fig:splitting_example}
\end{figure}

\begin{definition}
  \label{def:splitting}
  A \emph{splitting} of a graph $\Gamma$ is a triple
  $[\Gamma_1,\Cnct,\Gamma_2]$, where $\Gamma_1$ and $\Gamma_2$ are two
  subgraphs of $\Gamma$, and the \emph{connector set} $\Cnct$ is a set
  of edges of $\Gamma$ with one endpoint marked as first and the other
  as second; it may also contain a number of edges of zero length
  whose endpoints coincide.  Furthermore, the following conditions are
  satisfied:
  \begin{enumerate}
  \item the first endpoints belong to the subgraph $\Gamma_1$, the
    second belong to $\Gamma_2$,
  \item if we glue edges from $\Cnct$ to subgraph $\Gamma_1$ by their
    first endpoint and to subgraph $\Gamma_2$ by their second (and
    contract the edges of zero length), we recover the graph $\Gamma$.
  \end{enumerate}
\end{definition}

For any splitting, there is a decomposition of the secular function in
terms of the scattering matrices of the subgraphs.  This is a version
of the interior-exterior duality which is nicely summarized in the
introduction to \cite{Smi_jpa09} (original articles include
\cite{SchSmi_csf95,RouSmi_jpa95}).  Similar questions on quantum
graphs were also considered in \cite{KosSch_jmp01} and
\cite[Sec.~3.3]{GnuSmi_ap06}, but we work in a more restricted
setting and the result is more compact and transparent.

\begin{theorem}
  \label{thm:splitting_decomposition}
  Let the graph $\Gamma$ have a splitting $[\Gamma_1,\Cnct,\Gamma_2]$.
  Denote by $\xv_1$, $\xv_0$ and $\xv_2$ the torus variables
  corresponding to the edges in $\Gamma_1$, $\Cnct$ and $\Gamma_2$
  correspondingly (with $\tx_e=0$ for the zero-length edges from
  $\Cnct$).  Let $\av_1$, $\av_0$ and $\av_2$ be the corresponding
  flux variables with $\av_0$ oriented in the direction from
  $\Gamma_1$ to $\Gamma_2$ .

  Attach the edges $\Cnct$ to $\Gamma_1$ by their first endpoints and
  let $Z_1(\xv_1,\av_1)$ be the $|\Cnct|\times|\Cnct|$ scattering
  matrix of $\Gamma_1$ with the edges from $\Cnct$ acting as leads.
  Define the matrix $Z_2(\xv_2,\av_2)$ analogously, and let
  $e^{i\xv_0 + i\av_0}$ be the $|\Cnct|\times|\Cnct|$ diagonal matrix
  of exponentials of the variables corresponding to $\Cnct$.  Then
  \begin{equation}
    \label{eq:splitting_decomposition}
    F(\xv; \av) = c D_1 D_2
    \det\Big(\Id - e^{i\xv_0+i\av_0}\, Z_1\,
    e^{i\xv_0-i\av_0}\, Z_2\Big),
  \end{equation}
  where $c$ is a constant, the factors $D_j = D_j(\xv_j;\av_j)$ are
  given by
  \begin{equation}
    \label{eq:D_12_def}
    D_j(\xv_j;\av_j) := \det\left(\Id - e^{i\xv_j +
        i\av_j}\Smat_j\right),
    \qquad j=1,2,
  \end{equation}
  and $\Smat_j$ is the submatrix of the bond scattering matrix $\Smat$
  of $\Gamma$ responsible for scattering from and into the edges of
  the subgraph $\Gamma_j$.  In particular, the prefactors
  $D_1$ and $D_2$ are non-zero when $\xv\in\Sgen$ and $\av=\vec{0}$.
\end{theorem}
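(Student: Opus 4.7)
The plan is to derive the decomposition by a block Schur complement applied to $M := \Id - e^{i\xv + i\av}\Smat$, and then to identify the resulting reduced determinant with the $Z_j$ by a direct computation. First, partition the bond amplitude vector $\vec{a}\in\C^{2E}$ into three blocks corresponding to the directed edges of $\Gamma_1$, $\Cnct$ and $\Gamma_2$, and write $\Smat$ in the corresponding $3\times 3$ block form $(S_{ij})_{i,j\in\{1,0,2\}}$. The key structural observation is that no edge of $\Gamma$ has one endpoint in $\Gamma_1$ and the other directly in $\Gamma_2$ --- all such connections must pass through $\Cnct$ --- so $S_{12} = S_{21} = 0$. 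It is also useful to refine the $\Cnct$ block into forward ($\Gamma_1\to\Gamma_2$) and backward directed amplitudes; in this refinement the sub-blocks $S_{00}$, $S_{0j}$ and $S_{j0}$ all become block off-diagonal, since a forward $\Cnct$-edge originates at a $\Gamma_1$-vertex and terminates at a $\Gamma_2$-vertex, and vice versa.

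Setting $P_j := e^{i\xv_j + i\av_j}$ and $D_j := \det(\Id - P_j S_{jj})$, on the open set where $D_1, D_2 \neq 0$ the standard block Schur complement eliminating $\vec{a}_1$ and $\vec{a}_2$ gives
\begin{equation*}
  \det M = D_1 D_2 \det\big(\Id - P_0\,\mathcal{Z}\big),
  \qquad
  \mathcal{Z} := S_{00} + \sum_{j\in\{1,2\}} S_{0j}(\Id - P_j S_{jj})^{-1} P_j S_{j0}.
\end{equation*}
Thanks to the forward/backward refinement, $\mathcal{Z}$ is a $2\times 2$ block matrix whose off-diagonal blocks match precisely the definition~\eqref{eq:scattering_matrix_def} of $Z_1$ and $Z_2$ (under the identifications $r_1 = S_{0^+,0^-}$, $t_1' = S_{0^+,1}$, $t_1 = S_{1,0^-}$, $\tS_1 = S_{11}$; these hold because the degree of each shared vertex is the same in $\Gamma$ as in $\Gamma_1$-with-leads, and similarly for $\Gamma_2$). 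The $2\times 2$ block determinant formula then produces the right-hand side of~\eqref{eq:splitting_decomposition}. Since both sides agree on the dense open set $\{D_1 D_2 \neq 0\}$ and the left-hand side $F$ is polynomial in the exponentials, the identity extends globally.

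Finally, to prove the non-vanishing claim, suppose for contradiction that $D_1(\xv_1;\vec{0}) = 0$ at some $\xv\in\Sgen$. By Theorem~\ref{thm:scattering_properties}(\ref{item:scat_deg_points}) applied with $\Gamma_1$ as the compact graph and $\Cnct$ as leads, the subgraph $\Gamma_1$ with lengths $\xv_1$ carries a nontrivial eigenfunction $g$ at $k=1$ that vanishes at the first endpoints of $\Cnct$ and satisfies Neumann conditions there. Extending $g$ by zero on $\Cnct$ and $\Gamma_2$ produces a function $\tilde g$ on $\Gamma$ which is itself an eigenfunction at $k=1$: continuity is immediate at the interface, and the Neumann condition on $g$ exactly matches the vanishing derivatives on $\Cnct$. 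But $\tilde g$ then vanishes at the first endpoints of $\Cnct$, contradicting the fact that the canonical eigenfunction at $\xv\in\Sgen$ is non-vanishing at every vertex and spans the simple eigenspace at $k=1$ (by Theorems~\ref{thm:Sreg_def_prop}(\ref{item:eigenvalue1}) and~\ref{thm:Sigma_generic}); the argument for $D_2$ is symmetric. The main obstacle in the proof is the phase bookkeeping in the middle step: using~\eqref{eq:amp} one must track how the compact-graph amplitudes $\vec a_0^\pm$ (measured just before hitting the terminal vertex) relate to the lead amplitudes $\vec c_{in/out}^{(j)}$ entering $Z_j$, so that the propagation factors $e^{i\xv_0\pm i\av_0}$ land in the correct places in the final determinant.
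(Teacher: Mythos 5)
Your main computation is essentially the paper's own proof: the paper also writes $\Smat$ in block form with the connector labels split by direction and eliminates the $\Gamma_1$ and $\Gamma_2$ blocks by Schur complements (sequentially rather than simultaneously, and checking the degenerate cases $D_j=0$ directly rather than by a density argument, but these are cosmetic differences). Your identification of the reduced blocks with $Z_1$, $Z_2$ and of the propagation factors $e^{i\xv_0\pm i\av_0}$ is correct, and your proof of the non-vanishing claim --- extending by zero the eigenfunction supplied by Theorem~\ref{thm:scattering_properties}(\ref{item:scat_deg_points}) and contradicting genericity --- is a legitimate rephrasing of the paper's argument, which exhibits the corresponding amplitude vector $(0,0,0,\vec{v}_2^T)^T$ directly.

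There is, however, one genuine gap: you never treat the zero-length edges that $\Cnct$ is explicitly allowed to contain (Definition~\ref{def:splitting}), and this is precisely the case in which the theorem is used throughout Section~\ref{sec:graphs_block_structure} (Lemmas~\ref{lem:block_symmetry} and~\ref{lem:surplus_determined} both take $\Cnct$ to be a single zero-length edge obtained by pulling apart a cut-vertex). Such an edge is \emph{not} an edge of $\Gamma$, so the matrix $\Id - e^{i\xv+i\av}\Smat$ to which you apply the block elimination is the secular matrix of an auxiliary \emph{expanded} graph with extra edges; your argument therefore establishes the identity for the secular function of that expanded graph evaluated at $\tx_e=0$, not for $F(\xv;\av)$ as stated. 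The missing step is the comparison of the two secular functions, which is the content of Lemma~\ref{lem:contracting_an_edge}: setting the torus variable of an edge with distinct Neumann endpoints to zero produces the secular function of the contracted graph times the factor $2(d_1+d_2-2)/(d_1 d_2)$ of (\ref{eq:contracted_secular_function_prefactor}). This is exactly where the constant $c$ in (\ref{eq:splitting_decomposition}) comes from --- your proof implicitly yields $c=1$, which is correct only when every connector edge has positive length --- and it is not free: it requires a separate Schur-complement computation identifying the contracted bond-scattering matrix, and it fails for loops (Remark~\ref{rem:contracting_a_loop}), which is why the endpoints of a zero-length connector edge must become distinct in the expanded graph.
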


\begin{remark}

  The determinant in equation~\eqref{eq:splitting_decomposition} has
  an elegant interpretation.  Reading the matrices right to left, we
  have the wave scattering from the subgraph $\Gamma_2$, acquiring a
  phase by traversing $\Cnct$ from $\Gamma_2$ to $\Gamma_1$,
  scattering off the subgraph $\Gamma_1$ and traversing $\Cnct$ in the
  opposite direction.  The secular function is zero when the wave
  dynamics is stationary, i.e.\ there is an eigenvector of this 4-step
  scattering process with eigenvalue one.
\end{remark}

We will use Theorem~\ref{thm:splitting_decomposition} only in its
simplest setting, when the connector set $\Cnct$ consists of one edge
of zero length.  The proof of the more general setting given above
is deferred to Appendix~\ref{sec:proof_splitting}.

\subsection{Block decomposition of a graph}
\label{sec:block_defs}

To delve deeper into the dependence of the nodal surplus on the
structure of the graph, we introduce some terminology.  We use Tutte's
definition of the graph \cite{Tutte_graph_theory} as a set of vertices
$\V$, a set of edges $\E$ and the incidence map from edges to pairs of
vertices (endpoints of the edge).  This allows for multiple edges
connecting a pair of vertices and for loop edges (if the endpoints
coincide).  A subgraph is comprised of a subset $\V_s \subseteq \V$ and a subset
of $\E_s \subseteq \E$ which form a valid graph: all endpoints of
edges in $\E_s$ are included in $\V_s$.  An intersection or union of
two subgraphs is formed by taking the intersection or union,
respectively, of both the vertex and edge sets.

\begin{definition}
  \label{def:vertex_separation}
  A \emph{vertex separation} of a graph $\Gamma$ is an ordered
  sequence of connected subgraphs $[\Gamma_1,\ldots,\Gamma_n]$ such that
  \begin{enumerate}
  \item for each $j=2,\ldots,n$, the subgraph $\Gamma_j$ has exactly one
    vertex in common with the union of all previous subgraphs,
    $\Gamma_1 \cup \cdots \cup \Gamma_{j-1}$,
  \item the union of all subgraphs is the graph $\Gamma$,
    \begin{equation*}
      \Gamma = \Gamma_1 \cup \cdots \cup \Gamma_n.
    \end{equation*}
  \end{enumerate}
  Each subgraph $\Gamma_j$ in a vertex separation we will call a
  \emph{vertex-separated block}\footnote{Our term ``vertex-separated
    block'' is a generalization of a more standard term \emph{block},
    a maximal connected subgraph without a cutvertex \cite[Section
    3.1]{Diestel_graph_theory}.  Our vertex-separated block is a
    connected union of such blocks.}.
\end{definition}

\begin{definition}
  \label{def:edge_separation}
  An \emph{edge separation} of a graph $\Gamma$ is a vertex separation
  $[\Gamma_1,\ldots,\Gamma_n]$ such that for all $j=2\ldots,n$ the
  common vertex of $\Gamma_j$ with
  $\Gamma_1 \cup \cdots \cup \Gamma_{j-1}$ is a vertex of degree one
  in $\Gamma_j$.  The edge of $\Gamma_j$ incident to this vertex we
  call the \emph{bridge of $\Gamma_j$}.

  Each subgraph $\Gamma_j$ of an edge separation we call an
  \emph{edge-separated block}.
\end{definition}

The notion of vertex separation is a generalization of the
1-separation of Tutte \cite[Section~III.1]{Tutte_graph_theory}.
Examples of vertex and edge separations are given in
Figures~\ref{fig:vertex_separation} and \ref{fig:edge_separation}.

\begin{figure}
  \centering
  \includegraphics[scale=1]{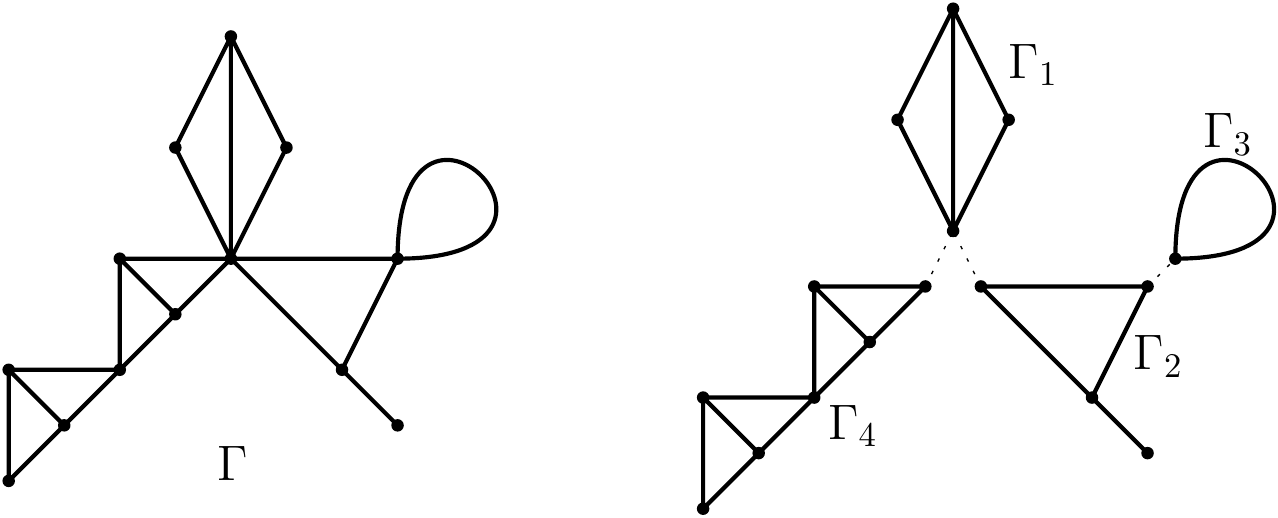}
  \caption{An example of a graph and its vertex separation.  Dashed
    lines indicate one possible choice for the introduction of
    zero-length edges that would make it an edge separation.  Note
    that the blocks $\Gamma_2$ and $\Gamma_4$ can be further decomposed.}
  \label{fig:vertex_separation}
\end{figure}

\begin{figure}
  \centering
  \includegraphics[scale=0.75]{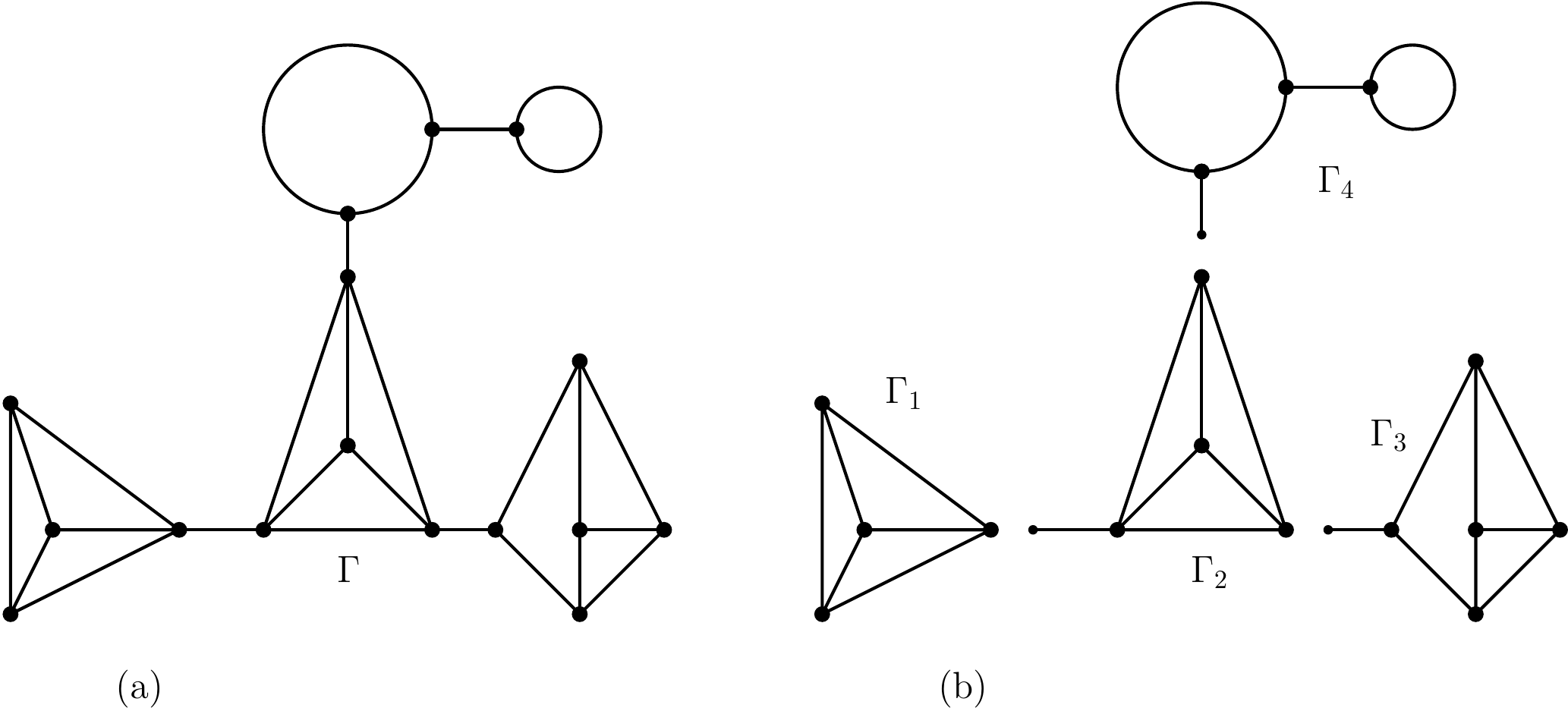}
  \caption{An example of a graph and its edge separation}
  \label{fig:edge_separation}
\end{figure}

\begin{remark}
  \label{rem:renumbering}
  While we defined a vertex separation as an ordered sequence of
  blocks, we have a freedom to choose an arbitrary block as the first
  one and reorder other blocks accordingly.  The same can be done with
  an edge separation but one may have to move bridges from one
  subgraph to another.
\end{remark}

One can ``convert'' a vertex-separated block decomposition into an
edge-separated one by introducing zero-length edges, see
Figure~\ref{fig:vertex_separation} for an example.  Note that in
this example the choice of zero-length edges is not unique.


Finally, the feature of separations which is most important for our
considerations is that they naturally partition the set of magnetic
fluxes, since the cycles of $\Gamma$ are precisely the cycles of its
blocks \cite[Lemma~3.1.1]{Diestel_graph_theory}.

To be more specific, in Section~\ref{sec:Reduction} we have defined
the standard representation of a magnetic operator with fluxes $\av$
by choosing a set $\Cut$ of $\beta$ edges whose removal does not
disconnect the graph and by placing the magnetic potential on these
edges only.  Each of the $\beta$ edges must belong to some $\Gamma_j$,
therefore each flux is naturally associated with one of $\Gamma_j$.

From now on we will assume that if we are given a separation (of
either kind) $[\Gamma_1,\ldots,\Gamma_n]$, the fluxes are ordered in
such a way that $\av = [\av_1,\ldots,\av_n]$, where the vector $\av_j$
contains all the fluxes corresponding to cycles in $\Gamma_j$.  In
particular, the dimension of $\av_j$ is equal to the first Betti
number $\beloc{j}$ of $\Gamma_j$.

\subsection{Block structure of the graph and local nodal surplus}
\label{sec:block_structure}

\begin{lemma}
  \label{lem:block_symmetry}
  Let $\Gamma$ be a magnetic standard graph with a vertex separation
  into two blocks and let $\av = [\av_1,\av_2]$ be the corresponding
  partition of the fluxes of $\Gamma$.  Let $F(\xv;\av)$ be the
  secular function of $\Gamma$.  Then for any $\xv$ and $\av_1$ we
  have
  \begin{equation}
    \label{eq:block_symmetry}
    F(\xv;\av_1,\av_2) = F(\xv; \av_1, -\av_2).
  \end{equation}
\end{lemma}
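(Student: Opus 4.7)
The plan is to apply the splitting decomposition, Theorem~\ref{thm:splitting_decomposition}, to a splitting $[\Gamma_1,\Cnct,\Gamma_2]$ whose connector set $\Cnct$ consists of a single zero-length edge: the shared vertex of the given vertex separation is ``duplicated'' into two endpoints, one in each block, and contracting the zero-length edge recovers $\Gamma$. Since $|\Cnct|=1$ we have $\xv_0=0$ and $\av_0=0$ (a single bridge lies in no cycle), and the scattering matrices $Z_1,Z_2$ are $1\times 1$. Formula~\eqref{eq:splitting_decomposition} therefore reads
\begin{equation*}
F(\xv;\av_1,\av_2) = c\, D_1(\xv_1;\av_1)\, D_2(\xv_2;\av_2)\, \bigl(1 - Z_1(\xv_1;\av_1)\, Z_2(\xv_2;\av_2)\bigr),
\end{equation*}
and it suffices to check that each factor is invariant under $\av_2\mapsto -\av_2$.

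The factor $D_1(\xv_1;\av_1)$ does not depend on $\av_2$. For the scalar $Z_2$, equation~\eqref{eq:scat_mag_reversal} of Theorem~\ref{thm:scattering_properties} gives $Z_2(\xv_2;-\av_2)=Z_2(\xv_2;\av_2)^T=Z_2(\xv_2;\av_2)$, because transposition is trivial on a $1\times 1$ matrix; hence $1-Z_1Z_2$ is manifestly $\av_2$-even.

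The main step is to show that $D_2(\xv_2;\av_2)=\det\bigl(\Id-e^{i\xv_2+i\av_2}\Smat_2\bigr)$ is $\av_2$-even. I propose to conjugate by the involution $J_2$ that reverses the orientation of every directed edge in $\Gamma_2$. Conjugation by $J_2$ fixes $e^{i\xv_2}$ (the entries of $\bx$ agree on $e$ and $\hat{e}$), sends $e^{i\av_2}$ to $e^{-i\av_2}$ (the cut-edge entries of $\ba$ are opposite on $e$ and $\hat{e}$), and, crucially, transforms $\Smat_2$ into $\Smat_2^T$. The last identity is the main obstacle, and is a direct local check at each vertex $u\in\V(\Gamma_2)$ using \eqref{eq:def_S_matrix}: since the condition $e=\hat{e}'$ is symmetric in $e,e'$, the entries $\Smat_{e,e'}$ and $\Smat_{\hat{e}',\hat{e}}$ always agree (the fact that $\Smat_2$ uses $\deg_\Gamma$ rather than $\deg_{\Gamma_2}$ at the shared vertex does not affect the symmetry of entries, only their common value). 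Combining these observations with invariance of the determinant under conjugation, under transposition, and under the Sylvester identity $\det(\Id-AB)=\det(\Id-BA)$ gives the chain
\begin{equation*}
D_2(\xv_2;\av_2)=\det\bigl(\Id-e^{i\xv_2-i\av_2}\Smat_2^T\bigr)=\det\bigl(\Id-\Smat_2\, e^{i\xv_2-i\av_2}\bigr)=D_2(\xv_2;-\av_2),
\end{equation*}
completing the proof.
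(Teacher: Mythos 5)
Your proposal is correct and follows essentially the same route as the paper's proof: the same splitting with a single zero-length connector edge, the same use of \eqref{eq:scat_mag_reversal} for the scalar $Z_2$, and the same conjugation of $D_2$ by the label-reversal involution $J_2$ combined with $\det(\Id-AB)=\det(\Id-BA)$. The only difference is that you spell out the local verification of $J_2\Smat_2 J_2=\Smat_2^T$ from \eqref{eq:def_S_matrix}, which the paper instead imports from the proof of Theorem~\ref{thm:scattering_properties}.
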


\begin{proof}
  We will be applying Theorem~\ref{thm:splitting_decomposition} with
  the connector set $\Cnct$ containing one edge which has zero length
  and no magnetic potential.  Both scattering matrices are
  one-dimensional, and therefore \eqref{eq:scat_mag_reversal} implies
  that $Z_2(\xv_2,-\av_2) = Z_2(\xv_2,\av_2)$.

  We just need to verify that $D_2$ is also even with respect to
  $\av$.  Similarly to the proof of
  Theorem~\ref{thm:scattering_properties}(\ref{item:scat_symmetry}), we employ a label
  switching matrix $J_2$ to write
  \begin{align*}
    D_2(\xv_2;\av_2) &= \det\left[J_2 \left(\Id_2 - e^{i\xv_2 +
          i\av_2}\Smat_2\right) J_2\right]
    = \det\left(\Id_2 - e^{i\xv_2 - i\av_2}\Smat_2^T\right) \\
    &= \det\left(\Id_2 - \Smat_2 e^{i\xv_2 - i\av_2}\right)^T
    = \det\left(\Id_2 - e^{i\xv_2 - i\av_2}\Smat_2\right)
    = D_2(\xv_2;-\av_2),
  \end{align*}
  where we used the identity $\det\left(\Id - AB\right) =
  \det\left(\Id - BA\right)$.

  Since all terms in \eqref{eq:splitting_decomposition} are even with
  respect to the change, $\av_2 \mapsto -\av_2$, the whole expression
  is even.
\end{proof}

\begin{theorem}
  \label{thm:block_Hessian}
  Let $\Gamma$ be a graph with a vertex separation
  $[\Gamma_1,\ldots,\Gamma_n]$, and let $\beloc{j}$ be the number of
  cycles in $\Gamma_j$.  Then the Hessian
  $H_{\av}(F)\left(\xv;\vec0\right)$ is block-diagonal with $j$-th
  block of size $\beloc{j}$.  In other words, if fluxes $\alpha_1$ and
  $\alpha_2$ belong to different vertex-separated blocks of the graph
  $\Gamma$ then
  \begin{equation}
    \label{eq:block_Hessian}
    \frac{\partial^2 F}{\partial \alpha_1 \partial
      \alpha_2}(\xv; \vec0) = 0 \qquad \mbox{for any }\xv.
  \end{equation}
\end{theorem}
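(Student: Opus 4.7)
The plan is to reduce the $n$-block statement to the two-block case already handled by Lemma~\ref{lem:block_symmetry}. Fix fluxes $\alpha_1$ in $\Gamma_i$ and $\alpha_2$ in $\Gamma_k$ with $i\neq k$; the goal is to show
\[
\frac{\partial^2 F}{\partial\alpha_1\partial\alpha_2}(\xv;\vec 0)=0.
\]
The first step is to extract from $[\Gamma_1,\ldots,\Gamma_n]$ a two-block vertex separation $[G_1,G_2]$ of $\Gamma$ with $\Gamma_i\subseteq G_1$ and $\Gamma_k\subseteq G_2$. The inductive structure of Definition~\ref{def:vertex_separation} equips the blocks with a tree: each $\Gamma_j$ with $j\ge 2$ attaches to $\Gamma_1\cup\cdots\cup\Gamma_{j-1}$ at a single vertex, giving a ``tree edge'' from $\Gamma_j$ to any one of the earlier blocks containing that vertex. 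Removing any such edge partitions the blocks into two subsets whose unions are connected subgraphs of $\Gamma$ meeting in exactly the vertex labelling the removed edge, hence form a legitimate two-block vertex separation. Cutting the edge lying on the unique tree-path from $\Gamma_i$ to $\Gamma_k$ delivers the required $[G_1,G_2]$.

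Once $[G_1,G_2]$ is available, Lemma~\ref{lem:block_symmetry} gives
\[
F(\xv;\av_{G_1},\av_{G_2}) = F(\xv;\av_{G_1},-\av_{G_2})
\]
for all $\xv$ and all $\av_{G_1}$, i.e.\ $F$ is even in $\av_{G_2}$. Since $\alpha_2$ is one component of $\av_{G_2}$, the derivative $\partial F/\partial\alpha_2$ is odd in $\av_{G_2}$ and therefore vanishes identically on the slice $\av_{G_2}=\vec 0$, irrespective of $\av_{G_1}$. Differentiating further in $\alpha_1\in G_1$ preserves this parity, so $\partial^2 F/\partial\alpha_1\partial\alpha_2$ also vanishes at $\av=\vec 0$. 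As the flux pair was arbitrary across distinct blocks, this establishes~\eqref{eq:block_Hessian} and hence the block-diagonal form of the Hessian with blocks of the prescribed sizes $\beloc{j}$.

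The main obstacle is purely the combinatorial first step: justifying rigorously that the blocks of an arbitrary vertex separation form a tree and that cutting a tree edge produces two halves that are each connected subgraphs of $\Gamma$ and share exactly one vertex. This is intuitively clear from the inductive single-vertex-attachment condition in Definition~\ref{def:vertex_separation} and could be phrased via the classical block-cut tree picture, but it must be checked with care (in particular that different branches of the tree do not share extra vertices beyond the cut one). Everything after the reduction is essentially mechanical given Lemma~\ref{lem:block_symmetry}, so the entire substance of the proof lies in the passage from the $n$-block data to a usable two-block coarsening.
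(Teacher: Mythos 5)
Your proof is correct and follows essentially the same route as the paper: both arguments rest on coarsening to a two-block vertex separation that puts $\alpha_1$ and $\alpha_2$ on opposite sides and then using the evenness of $F$ in the fluxes of one side (Lemma~\ref{lem:block_symmetry}) to force $\partial F/\partial\alpha_2$ to vanish on the slice $\av_2=\vec 0$, and hence the mixed partial to vanish at $\av=\vec 0$. The only difference is that you make explicit the combinatorial step of extracting a valid two-block separation from the $n$-block one via the tree structure of the blocks, which the paper leaves implicit.
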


\begin{proof}
  Let $\alpha_1$ and $\alpha_2$ belong to different blocks.  Then
  \begin{equation}
    \label{eq:proof_block_Hessian}
    \left. \frac{\partial^2 F}{\partial \alpha_1 \partial \alpha_2}
    \right|_{\av = 0}
    = \frac{\partial}{\partial \alpha_1} \left[
      \left. \frac{\partial F}{\partial \alpha_2}
      \right|_{\av_2=0}
    \right]_{\av_1=0}
    = 0,
  \end{equation}
  since $\frac{\partial F}{\partial \alpha_2} = 0$ at $\av_2=0$ by
  \eqref{eq:block_symmetry}.
\end{proof}

Simple examples of the block-diagonal structure of Hessian can be
found in Appendices~\ref{sec:figure8} and \ref{sec:dumbbell}.
The above theorem motivates the following definition.

\begin{definition}
  \label{def:local_surplus_fn}
  Let $\Gamma$ be a nontrivial standard graph with a vertex separation
  which induces the partition of fluxes
  $\av = [\av_1, \av_2, \ldots \av_n]$.  The \emph{local surplus
    functions} $\siloc{b}:\Sgen \rightarrow \{0,\ldots,\beloc{b}\}$
  are defined as follows:
  \begin{equation}
    \label{eq:local_surplus_fn_def}
    \siloc{b}\left(\xv\right)
    := \M\left[-\frac{H_{\av_b}(F)\left(\xv;\vec0\right)}{\vec{\nabla}F\cdot\lv}
    \right],
  \end{equation}
  where $\beloc{b}$ is the number of cycles in the block $\Gamma_b$
  or, equivalently, the number of entries in the vector $\av_b$.  We
  stress that the Hessian $H_{\av_b}(F)$ is taken with respect to the
  fluxes in $b$-th block only; it is a subblock of the full Hessian
  $H_{\av}(F)$, which is block-diagonal by Theorem~\ref{thm:block_Hessian}.
\end{definition}

\begin{remark}
  \label{rem:sum_of_surpluses}
  Observe that the summation of all local surplus functions gives the
  (total) surplus function,
  \begin{equation}
    \label{eq:local_surplus_additive}
    \sum_{b=1}^n \siloc{b} = \sigma.
  \end{equation}
  We also point out that the functions $\siloc{b}$ can be viewed as
  random variables on the probability space
  $\left(\Sgen,\bgm\right)$.  This will be convenient later when we
  talk about conditional probabilities and independence of $\siloc{b}$.
\end{remark}

\begin{remark}
  \label{rem:inversion_local}
  We have seen that the mapping $\inv: \xv \mapsto -\xv$ introduced in
  Lemma~\ref{lem:total_symmetry} changes the sign of the entire matrix
  appearing in the definition of the surplus functions.  Therefore,
  the conclusion applies to local surplus functions as well, namely
  \begin{equation}
    \label{eq:inversion_local}
    \siloc{b}(-\xv)
    = \beloc{b} - \siloc{b}(\xv),
  \end{equation}
  where $\beloc{b}$ is the number of cycles in the $b$-th block.
  Consequently, the distribution of the local surplus of the block $b$
  is symmetric around $\beloc{b}/2$.
\end{remark}

\subsection{Local surplus is determined by its block coordinates}
\label{sec:surplus_determined}
It is important to consider how much information is needed to
determine the value of a local surplus.  It turns out that for a block
with only one common vertex with the rest of the graph, the value of
the torus coordinates corresponding to the edges of the block are
enough --- together with the implicit information that $\xv$ lies on
the secular manifold.

\begin{figure}
  \centering
  \includegraphics[scale=0.85]{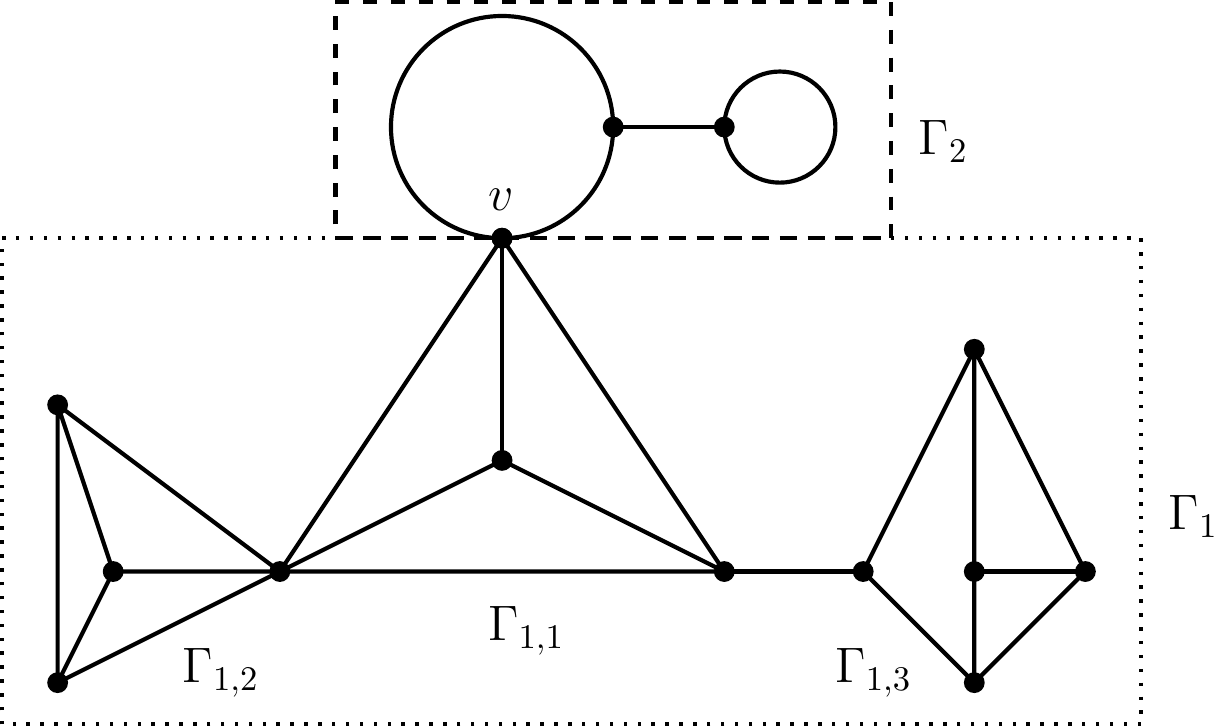}
  \caption{A graph with two vertex-separated blocks.  The block
    $\Gamma_1$ has a further separation into subblocks labeled
    $\Gamma_{1,j}$.}
  \label{fig:blocks_subblocks}
\end{figure}

\begin{lemma}
  \label{lem:surplus_determined}
  Let $\Gamma$ be a nontrivial standard graph with a vertex separation
  into two blocks, $[\Gamma_1,\Gamma_2]$.  Let $k_n$ be a generic
  eigenvalue.  Then the local surplus $\siloc 1$ at a point
  $\left(\xv_1,\xv_2\right)\in\Sgen$, is uniquely determined by the
  $\xv_1$ coordinates.

  More precisely if two points $\xv,\xv'\in\Sgen$ share
  the same $\xv_1$ coordinates, then
  \begin{equation}
    \label{eq:surplus_determined}
    \siloc1 \left(\xv\right) = \siloc1\left(\xv'\right).
  \end{equation}

  Furthermore, if $\Gamma_1$ has a vertex separation
  $[\Gamma_{1,1}, \Gamma_{1,2}, \ldots]$ (see
  Figure~\ref{fig:blocks_subblocks}) then for each subblock $j$
  \begin{equation}
    \label{eq:surplus_determined_subblock}
    \siloc{1,j} \left(\xv\right) = \siloc{1,j}\left(\xv'\right).
  \end{equation}
\end{lemma}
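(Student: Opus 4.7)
My plan is to apply Theorem~\ref{thm:splitting_decomposition} to the two-block vertex separation $[\Gamma_1,\Gamma_2]$. Taking the connector set $\Cnct$ to consist of a single zero-length edge at the common vertex yields a splitting in the sense of Definition~\ref{def:splitting}, and the theorem delivers the factorization
\begin{equation*}
F(\xv;\av) = c\,D_1(\xv_1;\av_1)\,D_2(\xv_2;\av_2)\,\bigl(1-Z_1(\xv_1;\av_1)\,Z_2(\xv_2;\av_2)\bigr),
\end{equation*}
where $Z_j$ are the scalar ($1\times 1$) scattering amplitudes. On $\Sgen$ the vanishing of $F$ together with $D_j\neq 0$ forces $Z_1 Z_2=1$, and since the $Z_j$ are unit complex numbers by Theorem~\ref{thm:scattering_properties}, the value $Z_2(\xv_2;0)=\overline{Z_1(\xv_1;0)}$ is completely determined by $\xv_1$. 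This ``locking'' of $Z_2$ to $\xv_1$ is the crucial observation on which the rest of the proof rests.

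I would then differentiate the factorization. Since $G:=1-Z_1 Z_2$ vanishes at $\av=0$, every term in $H_{\av_1}(F)(\xv;0)$ containing a bare $G$ drops out and a direct computation gives
\begin{equation*}
H_{\av_1}(F)(\xv;0) = -c\,D_2(\xv_2)\,Z_2(\xv_2;0)\,\tilde M(\xv_1),
\end{equation*}
with $\tilde M(\xv_1)$ assembled from $D_1$, $Z_1$, and their $\av_1$-derivatives at $\xv_1$. Writing $Z_j=e^{i\phi_j}$ and using $Z_1 Z_2=1$, the gradient reduces to $\nabla_\xv F(\xv;0)\cdot\lv=-i\,c\,D_1(\xv_1)\,D_2(\xv_2)\,A(\xv)$ with $A(\xv):=\nabla_{\xv_1}\phi_1\cdot\lv_1+\nabla_{\xv_2}\phi_2\cdot\lv_2\in\R$. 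After cancelling the $D_2$ and $D_1 D_2$ prefactors in the quotient defining $\siloc 1$ and substituting $Z_2=1/Z_1(\xv_1;0)$, one obtains
\begin{equation*}
-\frac{H_{\av_1}(F)(\xv;0)}{\nabla_\xv F(\xv;0)\cdot\lv}=\frac{K(\xv_1)}{A(\xv)},\qquad K(\xv_1):=\frac{i\,\tilde M(\xv_1)}{Z_1(\xv_1;0)\,D_1(\xv_1)}.
\end{equation*}
The left-hand side is the real-symmetric block Hessian $H_{\av_1}(k_n)(\vec 0)$ (the relevant diagonal sub-block by Theorem~\ref{thm:block_Hessian}), and since $A(\xv)$ is real, $K(\xv_1)$ is forced to be a real-symmetric matrix depending only on $\xv_1$.

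To extract the Morse index I would then fix the sign of $A$ by relating it to $\nabla F_R\cdot\lv$ via the real-valued factor $(e^{-i\sum_e x_e}/\sqrt{\det\Smat})\cdot(-i\,c\,D_1 D_2)$, which is real on $\Sgen$ because $F_R$ is; adopting the convention $C>0$ in~\eqref{eq:normalS} so that the normal $\hat n$ has positive entries, $\hat n\cdot\lv>0$ throughout $\Sgen$ and hence $A>0$. Then $\siloc 1(\xv)=\M[K(\xv_1)]$ depends only on $\xv_1$, establishing~\eqref{eq:surplus_determined}. The subblock statement~\eqref{eq:surplus_determined_subblock} follows by applying the same argument to the reordered two-block separation $[\Gamma_{1,j},\,\Gamma\setminus\Gamma_{1,j}]$, legitimate by Remark~\ref{rem:renumbering}; since the $\xv_{1,j}$ coordinates are a subset of $\xv_1$, invariance under changing $\xv_2$ follows a fortiori.

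The main obstacle I anticipate is the global sign control of $A$ on $\Sgen$: continuity and non-vanishing already guarantee a definite sign on each connected component, but verifying that the positivity convention for $\hat n$ is compatible simultaneously across \emph{all} components requires careful bookkeeping of the real-ifying phase in $F_R$ and its interplay with the splitting decomposition, so that the formula $\M[K(\xv_1)/A(\xv)]$ collapses uniformly to $\M[K(\xv_1)]$ rather than flipping to $\beloc 1-\M[K(\xv_1)]$ on some components.
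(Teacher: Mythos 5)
Your overall route is the paper's: split at the common vertex with a single zero-length connector, factor $F=c\,D_1D_2(1-Z_1Z_2)$, and use $D_j\neq 0$ on $\Sgen$ to lock $Z_2=\cc{Z_1(\xv_1)}$. The gap is exactly where you suspect it, and it is genuine as you have set it up: you keep the \emph{full} gradient $\gradx F\cdot\lv$ in the denominator, reduce the quotient to $K(\xv_1)/A(\xv)$ with $A(\xv)=\nabla_{\xv_1}\phi_1\cdot\lv_1+\nabla_{\xv_2}\phi_2\cdot\lv_2$, and then need $\sgn A$ to be independent of $\xv_2$. Your proposed fix --- choose $C>0$ in \eqref{eq:normalS} so that $\hat n\cdot\lv>0$ and conclude $A>0$ globally --- does not work: $\hat n$ is only proportional to $\gradx F$ up to a complex scalar whose sign is \emph{not} globally controllable. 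Indeed, by \eqref{eq:Hessian_gradient_under_inversion} the sign of $\gradx F_R\cdot\lv$ flips under $\xv\mapsto-\xv$, and in the figure-of-8 computation of Appendix~\ref{sec:figure8} one has $\gradx F_R\cdot\lv<0$ on $\Sgen$; so a single global positivity convention for $A$ across all components of $\Sgen$ is false in general, not merely hard to verify.

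The repair is the observation the paper actually uses, and it is weaker than what you tried to prove: by \eqref{eq:normalS} and the definition \eqref{eq:zero_Sigma} of $\Sgen$, \emph{all entries} of $\gradx F$ share one common sign (up to one overall phase) and are nonzero on $\Sgen$. Hence $\gradx F\cdot\lv$ and the partial sum $\vec{\nabla}_{\xv_1}F\cdot\lv_1$ have the same sign, and one may replace the denominator in Definition~\ref{def:local_surplus_fn} by $\vec{\nabla}_{\xv_1}F\cdot\lv_1$ without changing the Morse index; after cancelling $c\,D_1D_2Z_2$ this denominator depends only on $\xv_1$, and the whole matrix is manifestly a function of $\xv_1$. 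In your notation this says $\sgn A(\xv)=\sgn\bigl(\nabla_{\xv_1}\phi_1\cdot\lv_1\bigr)$, a function of $\xv_1$ alone --- which is all the lemma needs; no global sign is required. Two smaller points: your subblock step via the ``two-block separation $[\Gamma_{1,j},\Gamma\setminus\Gamma_{1,j}]$'' is not licensed by Remark~\ref{rem:renumbering} (the complement may meet $\Gamma_{1,j}$ in more than one vertex and need not be connected); the clean argument is that $\siloc{1,j}$ is the Morse index of the $\av_{1,j}$ diagonal sub-block (Theorem~\ref{thm:block_Hessian}) of the matrix you have just shown to depend only on $\xv_1$. Also, since $\partial_{\alpha}F(\xv;\vec0)=0$ for each flux, the first-order derivatives of $G=1-Z_1Z_2$ vanish at $\av=\vec 0$, so the cross terms $\partial(D_1)\,\partial(G)$ in your expression for $\tilde M$ drop out as well; this does not affect the conclusion.
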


\begin{proof}
  We apply Theorem~\ref{thm:splitting_decomposition} with the connector set being just one edge of zero length from $v$ to $v$, where $v$ is the vertex common to $\Gamma_1$ and $\Gamma_2$. The secular
  equation is then written in the form
  \begin{align}
    \label{eq:sec_eq_bridge_rewrite}
    F(\xv; \av)
    = D_1(\xv_1;\av_1) D_2(\xv_2;\av_2)
    \Big(1 - Z_1(\xv_1;\av_1) Z_2(\xv_2; \av_2)\Big),
  \end{align}
  where $Z_j$ are $1\times1$ unitary matrices.

  Since $\xv\in\Sgen$, both determinants $D_j$ are
  non-zero, therefore, on the manifold $\Sgen$,
  \begin{equation}
    \label{eq:x1x2_relation}
    Z_2 = \frac{1}{Z_1} = \cc{Z_1}.
  \end{equation}

  We now use definition \eqref{eq:local_surplus_fn_def} in a modified
  form
  \begin{equation}
    \label{eq:local_surplus_fn_modified}
    \siloc 1\left(\xv\right)
    = \M\left[-\frac{H_{\av_1}(F)\left(\xv;\vec0\right)}
      {\vec{\nabla}_{\xv_1} F\cdot \lv_1}
      \right],
  \end{equation}
  where we used the fact that all entries of $\nabla F$ are of the
  same sign (up to a phase) and nonzero on $\Sgen$ (by (\ref{eq:normalS})
  and (\ref{eq:zero_Sigma})).
  When calculating this ratio of the Hessian of $F$ with respect to $\av_1$ and the
  gradient of $F$ with respect to $\xv_1$, the prefactor $D_2$ is
  canceled and $Z_2$ can be substituted with $\cc{Z_1}$ removing all
  dependence on $\xv_2$ variables.

  The second part of the statement follows immediately since the local
  surpluses of the subblocks of $\Gamma_1$ are determined as the
  Morse indices of the subblocks of the matrix in
  \eqref{eq:local_surplus_fn_modified} which we just determined to be
  independent of $\xv_2$ variables.
\end{proof}

\subsection{Local surplus of edge-separated blocks}
\label{sec:bridge_symmetry}

In this section we show that it is possible to ``localize'' the
homeomorphism of Lemma~\ref{lem:total_symmetry}: there is a mapping
that flips a given local surplus while keeping all other local
surpluses fixed.  We are able to establish this result only for
edge-separated blocks and we know from the example of
Appendix~\ref{sec:321pumpkin} that the result is not always true for
vertex separations.  To start, we need some simple facts about the
form of the eigenfunction on a bridge separating two blocks.

\begin{lemma}
  \label{lem:theta_determined}
  Let $\Gamma$ be a graph with an edge separation
  $[\Gamma_1,\Gamma_2]$ and let the bridge be denoted by $e_0$.
  For a $\xv \in \Sgen$ let the corresponding canonical eigenfunction
  (see Theorem~\ref{thm:Sreg_def_prop}(\ref{item:eigenvalue1})) on the
  edge $e_0$ be written in the form
  \begin{equation}
    \label{eq:eigen_on_bridge}
    f_{e_0} = C_{e_0} \cos\left(x - \theta_0(\xv)\right)
  \end{equation}
  on the bridge edge.  Then $\theta_0(\xv)$ is a smooth function on
  $\Sgen$ which is fully determined by the torus coordinates $\xv_1$
  corresponding to the edges of the block $\Gamma_1$.  In other
  words, if there is another point $\xv' \in \Sgen$ such that
  $\xv_1' = \xv_1$, then $\theta_0(\xv) = \theta_0(\xv')$.
\end{lemma}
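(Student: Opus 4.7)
The plan is to attach the bridge $e_0$ as a lead to the block $\Gamma_1$ at the common vertex $v$, and to read $\theta_0$ off the resulting $1\times 1$ scattering matrix from Section~\ref{sec:scattering_def}. Parameterize $e_0$ so that the coordinate $x=0$ lies at $v$ and expand \eqref{eq:eigen_on_bridge} into exponentials,
\[
f_{e_0}(x) = \tfrac{1}{2}C_{e_0}\, e^{i\theta_0}\, e^{-ix}
+ \tfrac{1}{2}C_{e_0}\, e^{-i\theta_0}\, e^{ix}.
\]
In the lead-amplitude conventions of \eqref{eq:lead_solution} this identifies the amplitude incoming to $\Gamma_1$ as $c_{in} = \tfrac{1}{2}C_{e_0}e^{i\theta_0}$ and the outgoing amplitude as $c_{out} = \tfrac{1}{2}C_{e_0}e^{-i\theta_0}$. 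The restriction of the canonical eigenfunction to $\Gamma_1$, together with the natural continuation of $f_{e_0}$ from $[0,\tx_{e_0}]$ to all of $[0,\infty)$, is by construction a scattering state of $\Gamma_1$-with-lead at $k=1$, so the scattering relation \eqref{eq:scattering_matrix_def} yields
\[
e^{-2i\theta_0(\xv)} = Z_1(\xv_1,\vec{0}).
\]
Since the right-hand side depends only on $\xv_1$, this is the key conclusion of the lemma.

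To make the scattering step legitimate I must verify that $Z_1(\xv_1,\vec{0})$ is actually defined at the $\xv_1$ in question, i.e.\ that $\xv_1 \notin W(\vec{0})$ in the sense of Theorem~\ref{thm:scattering_properties}(\ref{item:scat_deg_points}). If it were, $\Gamma_1$ with lengths $\xv_1$ would carry an eigenfunction at $k=1$ vanishing at $v$ and satisfying Neumann conditions there; extending this function by zero on $\Gamma_2$ would produce an eigenfunction of $\Gamma$ (with lengths $\xv$) that vanishes at the vertex $v$, contradicting $\xv \in \Sgen$. Hence $Z_1(\xv_1,\vec{0})$ is unimodular by Theorem~\ref{thm:scattering_properties}(\ref{item:scat_unitary}) and is smooth (in fact rational in $e^{i\xv_1}$) on the projection of $\Sgen$ to the first factor. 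Smoothness of $\theta_0$ on $\Sgen$ then follows from the identity $e^{-2i\theta_0}=Z_1$: a smooth branch of $-\tfrac{1}{2}\arg Z_1$ exists in a neighborhood of each point because $Z_1$ is smooth and nonvanishing.

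The only subtlety I anticipate is the mod-$\pi$ versus mod-$2\pi$ ambiguity: the identity $e^{-2i\theta_0}=Z_1$ determines $\theta_0$ only modulo $\pi$, whereas the form \eqref{eq:eigen_on_bridge} with the convention $C_{e_0}>0$ picks out a unique representative modulo $2\pi$. This is not a serious obstruction, since the canonical eigenfunction is determined up to an overall sign which varies continuously with $\xv \in \Sgen$, so a consistent choice of the sign of $C_{e_0}$ on each connected component pins down $\theta_0$ modulo $2\pi$ smoothly. In any event the only information needed downstream---that $\theta_0$ depends solely on $\xv_1$---is read directly off the closed-form expression $\theta_0 = -\tfrac{1}{2}\arg Z_1(\xv_1,\vec{0})$.
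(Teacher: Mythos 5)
Your proposal is correct and follows essentially the same route as the paper: both identify the canonical eigenfunction restricted to $\Gamma_1$ and the bridge with a scattering solution of $\Gamma_1$-with-lead at $k=1$, read off $\theta_0$ from the $1\times1$ scattering matrix $Z_1(\xv_1)$ via the relation $e^{\pm 2i\theta_0}=Z_1$ (the sign being a matter of orientation convention and irrelevant, since only $2\theta_0 \bmod 2\pi$ is used downstream), and deduce smoothness from Theorem~\ref{thm:scattering_properties}. Your added checks --- that $\xv_1\notin W(\vec 0)$ (so $Z_1$ is defined and unimodular) and the mod-$\pi$ branch discussion --- are details the paper leaves implicit, and both are handled correctly.
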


\begin{proof}
  The above \emph{does not} mean that $\theta_0$ is independent of the
  other coordinates in $\xv$: the dependence is implicit via the
  relation $\xv \in \S$.  We can rephrase the result as saying that
  there exists a function $\Theta = \Theta(\xv_1)$ which is
  independent of $\xv_2$ and which on $\Sgen$ coincides with
  $\theta_0(\xv)$.

  Any solution of the subgraph $\Gamma_1$ and the bridge must belong
  to the set of scattering solutions on $\Gamma_1$ with the bridge
  extended to infinity.
  Comparing \eqref{eq:eigen_on_bridge} with equations \eqref{eq:lead_solution} and
  \eqref{eq:scattering_matrix_def}, we see that the function
  $\Theta(\xv_1)$ can be determined from
  $Z_1(\xv_1) = e^{2 i \Theta(\xv_1)}$, where $Z_1$ is the $1\times1$
  scattering matrix of the subgraph $\Gamma_1$, see
  Section~\ref{sec:scattering_def}.  Smoothness of $\Theta$ follows
  from Theorem~\ref{thm:scattering_properties}.
\end{proof}

\begin{lemma}
  \label{lem:bridge_symmetry}
  Let $\Gamma$ be a graph with an edge separation
  $[\Gamma_1,\Gamma_2]$ and the bridge denoted by $e_0$ (according to
  Definition~\ref{def:edge_separation}, $e_0$ belongs to the block
  $\Gamma_2$).  Let $\xv = (\xv_1, \tx_0, \xv_2)$ be the torus
  coordinates, where $\xv_1$ corresponds to the edges of $\Gamma_1$,
  $\tx_0$ corresponds to $e_0$ and $\xv_2$ corresponds to all other
  edges of $\Gamma_2$.  Consider the mapping
  \begin{equation}
    \label{eq:bridge_symmetry}
    R:\Sreg\rightarrow\Sreg,\qquad
    R\left(\xv_1, \tx_0, \xv_2 \right)
    = \left(\xv_1, -\tx_0 + 2\theta_0(\xv_1), -\xv_2\right),
  \end{equation}
  where $\theta_0(\xv_1)$ is a function whose existence is established
  in Lemma~\ref{lem:theta_determined}.  Then $R$ is a measure
  preserving homeomorphism of $\left(\Sgen,\bgm\right)$ to itself.

  Furthermore, the local surplus functions of the two subgraphs
  transform under $R$ according to
  \begin{equation}
    \label{eq:R_transforms_local_surplus}
    \siloc1 \circ R(\xv) = \siloc1(\xv) \qquad
    \siloc2 \circ R(\xv) = \beloc2 - \siloc2(\xv),
  \end{equation}
  where $\beloc2$ is the number of cycles in the subgraph $\Gamma_2$.
\end{lemma}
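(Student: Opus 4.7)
The strategy is to proceed in three parts: show that $R$ is a well-defined involution mapping $\Sgen$ to itself, check measure preservation, and compute the transformation of the local surpluses.

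Applying Theorem~\ref{thm:splitting_decomposition} with connector set $\Cnct = \{e_0\}$ gives the factorisation
\[
F(\xv;\av) = c\, D_1(\xv_1;\av_1)\, D_2(\xv_2;\av_2)\, G(\xv;\av),\qquad
G := 1 - e^{2i\tx_0} Z_1(\xv_1;\av_1)\, Z_2(\xv_2;\av_2),
\]
where $Z_1,Z_2$ are the $1{\times}1$ scattering matrices of $\Gamma_1$ and $\Gamma_2\setminus\{e_0\}$. The proof of Lemma~\ref{lem:theta_determined} gives $Z_1(\xv_1;0) = e^{-2i\theta_0(\xv_1)}$, and Theorem~\ref{thm:scattering_properties} yields $Z_2(-\xv_2;0) = \overline{Z_2(\xv_2;0)}$. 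That $R^2 = \mathrm{id}$ is immediate from the definition, and $R(\Sgen)\subset\Sgen$ reduces to $G(R(\xv);0)=0$: a direct substitution using the two identities above together with the secular equation $e^{2i\tx_0}Z_1 Z_2 = 1$ at $\xv$. To preserve genericity (and prepare for the measure argument), I would construct the canonical eigenfunction $\tilde f$ at $R(\xv)$: set $\tilde f = f$ on $\Gamma_1$, set $\tilde f_e(x) = f_e(-x)$ on edges of $\Gamma_2\setminus\{e_0\}$ (as in the proof of Lemma~\ref{lem:total_symmetry}), and put $\tilde f_{e_0}(x) = C_{e_0}\cos(x-\theta_0(\xv_1))$ on the bridge, now of length $\tilde l_0 = 2\theta_0-\tx_0$. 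Matching at $v$ is automatic from the unchanged $\Gamma_1$ data; matching at $v'$ follows from $\cos(\tilde l_0-\theta_0) = \cos(\tx_0-\theta_0)$ and $\sin(\tilde l_0-\theta_0) = -\sin(\tx_0-\theta_0)$ combined with the derivative sign-flip from edge reversal.

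Measure preservation then proceeds as in Lemma~\ref{lem:total_symmetry}. The Jacobian of $R$ is block-lower-triangular with diagonal blocks $I,\ -1,\ -I$ (the $2\nabla\theta_0$ entries do not affect the determinant), so $|\det DR| = 1$. Each amplitude $|a_e|^2 + |a_{\hat e}|^2$ appearing in \eqref{eq:normalS} is preserved: unchanged on $\Gamma_1$, preserved under reversal on $\Gamma_2\setminus\{e_0\}$ by the Lemma~\ref{lem:total_symmetry} argument, and equal to $C_{e_0}^2$ on the bridge for both eigenfunctions. Hence $|\hat n(R(\xv))\cdot\lv| = |\hat n(\xv)\cdot\lv|$, giving $R_*\bgm = \bgm$.

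The identity $\siloc{1}\circ R = \siloc{1}$ is immediate from Lemma~\ref{lem:surplus_determined}, since $R$ fixes $\xv_1$. For $\siloc{2}$ the non-vanishing prefactors $D_1, D_2$ cancel in the definition: on $\Sgen$ one has $G|_{\av=0}=0$ and $\nabla_{\av_2}G|_{\av=0}=0$ (the latter because $Z_2$ is even in $\av_2$ by Theorem~\ref{thm:scattering_properties}), so $-H_{\av_2}F/(\nabla F\cdot\lv) = -H_{\av_2}G/(\nabla G\cdot\lv)$ on $\Sgen$. Writing $Z_2 = e^{ig(\xv_2,\av_2)}$ with $g$ real, even in $\av_2$, and satisfying $g(-\xv_2,\av_2) = -g(\xv_2,\av_2) + \mathrm{const}$, a direct calculation gives $H_{\av_2}G|_{\xv;0} = -i H_g(\xv_2)$ with $H_g := H_{\av_2}g(\xv_2,0)$ real; the antisymmetry of $g$ in $\xv_2$ then yields $H_{\av_2}G|_{R(\xv);0} = -H_{\av_2}G|_{\xv;0}$. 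A parallel computation, using $|Z_1|=|Z_2|=1$ and $Z_1 = e^{-2i\theta_0}$, shows that every component of $\nabla G|_{\xv;0}$ is purely imaginary, and that $\nabla G|_{R(\xv);0}\cdot\lv = \nabla G|_{\xv;0}\cdot\lv$ exactly. The matrix governing $\siloc{2}$ is therefore negated, and its Morse index jumps to $\beloc{2} - \siloc{2}(\xv)$.

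The main obstacle lies in this last paragraph: verifying that $H_{\av_2}G$ flips sign (not merely conjugates) while $\nabla G\cdot\lv$ is preserved exactly. Both facts rest on the observation that the relevant quantities are purely imaginary on $\Sgen$ at $\av=0$, a consequence of unitarity of the scattering matrices combined with the evenness of $Z_2$ in $\av_2$ supplied by Theorem~\ref{thm:scattering_properties}.
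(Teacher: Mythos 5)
Your proposal is correct. For the homeomorphism and measure-preservation part you follow essentially the same route as the paper: the explicit construction of the canonical eigenfunction at $R(\xv)$ (identity on $\Gamma_1$, reflection on $\Gamma_2$, and the phase-matched cosine on the bridge of new length $-\tx_0+2\theta_0$), plus the triangular-Jacobian observation; your extra algebraic check that $G(R(\xv);0)=0$ via $e^{2i\tx_0}Z_1Z_2=1$ is a harmless redundancy, and your convention $Z_1=e^{-2i\theta_0(\xv_1)}$ is the one that actually makes that substitution close (the statement of Lemma~\ref{lem:theta_determined} writes $e^{+2i\Theta}$, but your sign is the internally consistent one and matches the eigenfunction computation). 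Where you genuinely diverge is the flip of $\siloc2$. The paper disposes of it in two lines: Remark~\ref{rem:inversion_local} gives $\siloc2(-\xv)=\beloc2-\siloc2(\xv)$, and Lemma~\ref{lem:surplus_determined} --- applied to the vertex separation in which the bridge is attached to the \emph{other} block, so that $\siloc2$ is determined by $\xv_2$ alone --- identifies $\siloc2(R(\xv))$ with $\siloc2(-\xv)$ because both points have block coordinates $-\xv_2$. You instead compute directly with the factorized secular function: you reduce the defining matrix to $-H_{\av_2}G/(\nabla G\cdot\lv)$ using $G=0$ and $\nabla_{\av_2}G|_{\av=0}=0$ (the latter from evenness of $Z_2$ in $\av_2$), write $Z_2=e^{ig}$, and check that $H_{\av_2}G$ is negated while $\nabla G\cdot\lv$ is fixed. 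I verified this computation and it is sound (the only care needed is that $g$ is defined only locally modulo $2\pi$, which does not affect derivatives). The paper's route is shorter and reuses machinery already established; yours is self-contained at this point, makes visible exactly why the $2\theta_0$ shift of the bridge coordinate is invisible to $\siloc2$ (it only enters through $e^{2i\tx_0}Z_1Z_2=1$, which holds at both points), and does not require re-partitioning the separation to move the bridge between blocks.
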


\begin{proof}
  The proof runs along the lines of the proof of
  Lemma~\ref{lem:total_symmetry} with some modifications.  The
  transformation $R$ is smooth (since $\theta_0$ is smooth) and
  invertible; its Jacobian matrix is triangular (because $\theta_0$ is
  determined by $\xv_1$) with $\pm1$ on the diagonal, therefore the
  Jacobian determinant is 1 in absolute value.  We are left to show
  the analogue of (\ref{eq:to_establish}) for $R$.

  Starting
  with an eigenfunction $f$ of eigenvalue 1 at the point
  $\left(\xv_1, \tx_0, \xv_2 \right)$ we will construct an
  eigenfunction $\tilde{f}$ at
  $\left(\xv_1, -\tx + 2\theta_0(\xv_1), -\xv_2\right)$.

  On the edges $e \in \Gamma_1$ we will set $\tilde{f}_e = f_e$.  On
  the edges $e\in\Gamma_2$ we let $\tilde{f}_e(x) = f_e(-x)$, where
  $f_e$ is understood to have been suitably extended (see equation
  (\ref{eq:function_form})).  Finally, on the bridge $e_0$ the
  function $f_{e_0}$ has the form
  \begin{equation*}
    f_{e_0}(x) = C_{e_0} \cos(x-\theta_0(\xv_1)),
  \end{equation*}
  where the variable $x$ is assumed to go from $x=0$ at the vertex
  common with the subgraph $\Gamma_1$ to $x=L_{e_0}$ at the vertex
  common with the subgraph $\Gamma_2$.  We again let $\tilde{f}_{e_0}(x) =
  f_{e_0}(x)$.

  As in the proof of Lemma~\ref{lem:total_symmetry}, the
  function values of $\tilde{f}_e$ for $e\in\Gamma_2$ remains the same
  while all derivatives change sign.  For $e\in\Gamma_1$, both
  function values and derivatives remain trivially the same.  We only
  need to check the function $\tilde{f}_{e_0}$ on the bridge.  At $x=0$ both the
  function value and the derivative is the same, fitting the rest of
  $\Gamma_1$.

  At the new edge end $x = -\tx_0 + 2\theta_0(\xv_1) =: \tilde{\tx}_0$ we have
  \begin{equation*}
    \tilde{f}_{e_0}\left(\tilde{\tx}_0\right)
    = C_{e_0} \cos(-\tx_0+\theta_0)
    = C_{e_0} \cos(\tx_0-\theta_0) = f_{e_0}(\tx_0)
  \end{equation*}
  while
  \begin{equation*}
    \tilde{f}_{e_0}'\left(\tilde{\tx}_0\right)
    = -C_{e_0} \sin(-\tx_0+\theta_0)
    = C_{e_0} \sin(\tx_0-\theta_0) = -f_{e_0}'(\tx_0).
  \end{equation*}
  This fits the values and the derivatives of the function on
  $\Gamma_2$ and therefore $\tilde{f}$ is an eigenfunction.

  As before, this construction preserves multiplicity and genericity
  of the eigenfunction and also preserves the density function of the
  Barra-Gaspard measure, $\left|\hat{n} \cdot \lv\right|$, by (\ref{eq:normalS}).

  By Remark~\ref{rem:inversion_local}, we have
  $\siloc2(-\xv) = \beloc2 - \siloc2(\xv)$.  Now we simply apply
  Lemma~\ref{lem:surplus_determined} to conclude
  \begin{equation*}
    \siloc1 \circ R(\xv) = \siloc1(\xv), \qquad
    \siloc2 \circ R(\xv) = \siloc1(-\xv) = \beloc2 - \siloc2(\xv),
  \end{equation*}
  since the surplus-determining coordinates agree in each case.
\end{proof}

\begin{remark}
  If the subgraph $\Gamma_2$ in Lemma~\ref{lem:bridge_symmetry}
  consists of several subblocks, the same conclusion applies to the local
  surpluses of the subblocks.  Namely, if a subblock $\Gamma_{2,j}$
  has $\beloc{2,j}$ cycles and $\siloc{2,j}$ is the corresponding
  surplus function, then with the transformation $R$ defined in
  \eqref{eq:bridge_symmetry} we have
  \begin{equation}
    \label{eq:R_transforms_subblocks}
    \siloc{2,j} \circ R(\xv)
    = \siloc{2,j}(-\xv) = \beloc{2,j} - \siloc{2,j}(\xv),
  \end{equation}
  where first equality follows from Lemma~\ref{lem:surplus_determined}
  and the second equality from Remark~\ref{rem:inversion_local}.
\end{remark}

\begin{corollary}
  \label{cor:one_block_flip}
  Let $\Gamma$ be a nontrivial standard graph with an edge separation
  $[\Gamma_1,\Gamma_2,\ldots,\Gamma_n]$.  There exist a measure
  preserving homeomorphism $\CR$ of $\left(\Sgen,\bgm\right)$ to
  itself, such that
  \begin{align}
    \label{eq:one_block_flip1}
    \siloc1 \circ \CR
    &= \beloc1 - \siloc1,\\
    \label{eq:one_block_flip2}
    \siloc{j}\circ\CR
    &= \siloc{j}, \quad j>1.
  \end{align}
\end{corollary}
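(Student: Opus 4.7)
I will realise $\CR$ as the composition of the global inversion $\inv$ from Lemma~\ref{lem:total_symmetry} with one application of Lemma~\ref{lem:bridge_symmetry} for each maximal subtree of blocks hanging off $\Gamma_1$ in the block tree. Each factor flips the local surpluses of a prescribed set of blocks, and these sets will combine (by symmetric difference) in such a way that only $\siloc{1}$ is ultimately flipped.

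First I root the block tree of the given edge separation at $\Gamma_1$, declaring the parent of every $\Gamma_j$ ($j\ge 2$) to be the \emph{earliest-indexed} earlier block containing the parent endpoint of the bridge $b_j$. This convention has the key consequence that no strict descendant of a child $c_i$ of $\Gamma_1$ can share a vertex with $\Gamma_1$: otherwise its attachment vertex would lie in $\Gamma_1$, and the convention would have forced it to be a direct child of $\Gamma_1$ rather than an element of the subtree $T_i$ rooted at $c_i$. Writing $\Gamma_{T_i}:=\bigcup_{j\in T_i}\Gamma_j$, this implies that $\Gamma_{T_i}$ meets $\Gamma\setminus\Gamma_{T_i}$ in exactly one vertex (the parent endpoint of $b_{c_i}$), and this vertex has degree one in $\Gamma_{T_i}$. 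Hence $[\Gamma\setminus\Gamma_{T_i},\,\Gamma_{T_i}]$ is a genuine two-block edge separation, so Lemma~\ref{lem:bridge_symmetry} supplies a measure-preserving homeomorphism $R_i$ of $(\Sgen,\bgm)$; by the Remark following that lemma, together with Lemma~\ref{lem:surplus_determined} applied to the subblocks of the ``first'' block $\Gamma\setminus\Gamma_{T_i}$, one has
\begin{equation*}
  \siloc{j}\circ R_i = \beloc{j}-\siloc{j}\ \text{for}\ j\in T_i,
  \qquad
  \siloc{j}\circ R_i = \siloc{j}\ \text{for}\ j\notin T_i.
\end{equation*}

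Setting $\CR:=\inv\circ R_1\circ\cdots\circ R_k$ (where $c_1,\ldots,c_k$ are the children of $\Gamma_1$) produces a measure-preserving homeomorphism of $(\Sgen,\bgm)$ as a composition of such. A short computation shows that composing two ``block-flipping'' maps with flip sets $S$ and $S'$ yields a map whose flip set is the symmetric difference $S\triangle S'$; iterating, the flip set of $\CR$ equals $\{1,\ldots,n\}\triangle T_1\triangle\cdots\triangle T_k$, which simplifies to $\{1\}$ since the $T_i$ are pairwise disjoint with union $\{2,\ldots,n\}$. This is exactly the content of \eqref{eq:one_block_flip1}--\eqref{eq:one_block_flip2}. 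The main obstacle I anticipate is the combinatorial bookkeeping needed to ensure that each cut $[\Gamma\setminus\Gamma_{T_i},\,\Gamma_{T_i}]$ is an \emph{edge} separation rather than merely a vertex separation, so that Lemma~\ref{lem:bridge_symmetry} actually applies; the parent-assignment convention above is precisely what enforces this, and once it is in place the rest is symmetric-difference arithmetic combined with already-proven symmetries.
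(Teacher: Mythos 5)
Your proposal is correct and follows essentially the same route as the paper: the paper also builds $\CR$ by composing the global inversion $\inv$ with one application of Lemma~\ref{lem:bridge_symmetry} per block (or subtree of blocks) hanging off $\Gamma_1$, and observes that the flips cancel everywhere except on $\Gamma_1$ via Lemma~\ref{lem:surplus_determined} and Remark~\ref{rem:inversion_local}. The only difference is organizational --- the paper first treats the case where every block attaches directly to $\Gamma_1$ and then disposes of the general case in a closing remark, whereas you set up the rooted block tree and the subtree cuts $[\Gamma\setminus\Gamma_{T_i},\,\Gamma_{T_i}]$ explicitly from the start, which is a slightly more careful rendering of the same argument.
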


\begin{proof}
  Assume initially that all blocks are connected directly to
  $\Gamma_1$ (this assumption is satisfied by the graph in
  Figure~\ref{fig:one_block_flip} and \emph{not} satisfied by the
  graph in Figure~\ref{fig:edge_separation}).
  Denote their bridges by $b_j$, $j=2,\ldots,n$.  For each $b_j$ we define a map
  $R_j:\Sgen\to\Sgen$ guaranteed by Lemma~\ref{lem:bridge_symmetry}
  such that the bridge of Lemma~\ref{lem:bridge_symmetry} is $b_j$,
  the second subgraph is $\Gamma_j$ and the first subgraph is the rest
  of the graph (including the block $\Gamma_1$); see
  Fig.~\ref{fig:one_block_flip} for an example.

  \begin{figure}
    \centering
    \includegraphics[scale=0.85]{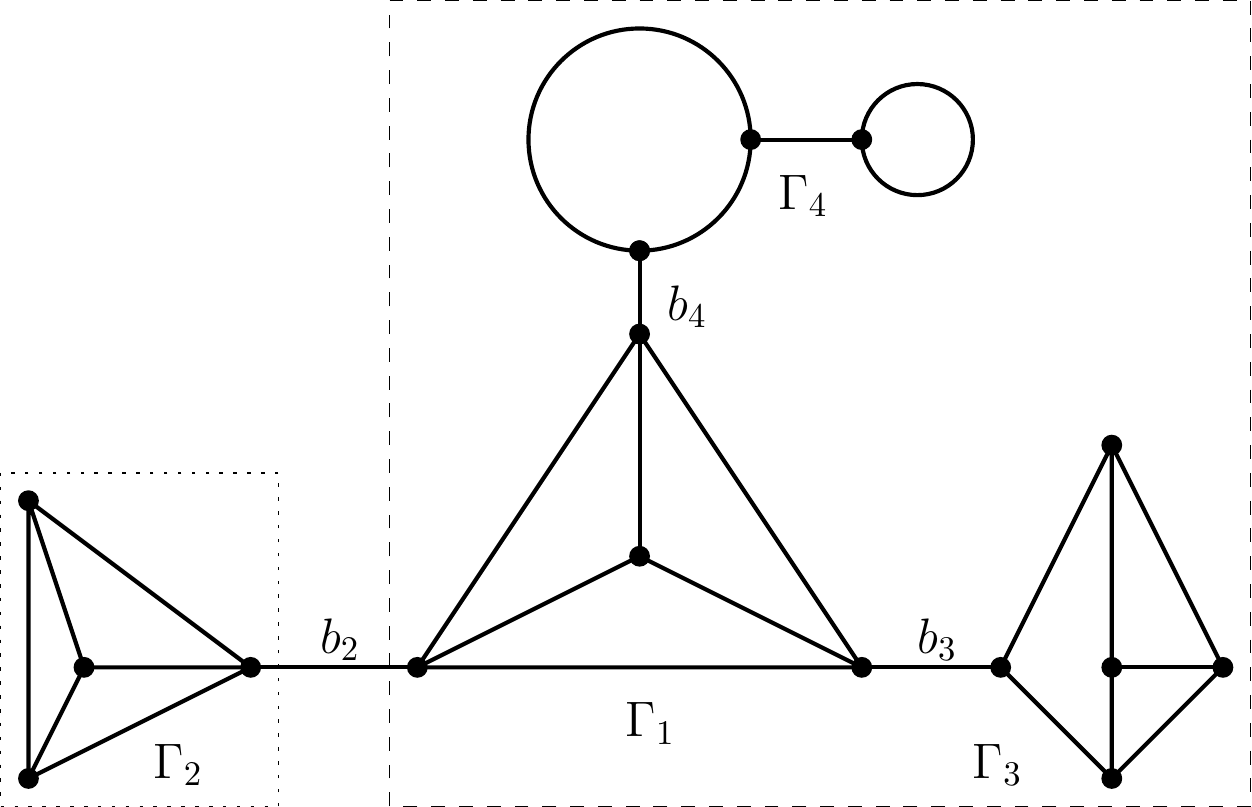}
    \caption{Construction of $R_2$ in the proof of
      Corollary~\ref{cor:one_block_flip}: subgraph within the dashed
      box is the first subgraph in the application of
      Lemma~\ref{lem:bridge_symmetry} and within the dotted box is the
      second.}
    \label{fig:one_block_flip}
  \end{figure}

  Each of the maps $\left\{R_j\right\}_{j=2}^n$ is a measure
  preserving transformation of
  $\left(\Sgen,\bgm\right)$ to itself.  Consider the
  map $\CR$, defined as follows
  \begin{equation*}
    \CR: = R_2\circ R_3\circ \ldots \circ R_n \circ \inv.
  \end{equation*}
  It is a measure preserving transformation of
  $\left(\Sgen,\bgm\right)$ to itself as a finite
  composition of such. Since each $R_j$ leaves $\siloc1$ invariant, and
  $\inv$ by Remark~\ref{rem:inversion_local} flips $\siloc1$ to
  $\beloc1-\siloc1$, we get
  \begin{equation*}
    \siloc1 \circ R_2 \circ R_3 \circ \ldots \circ R_n \circ \inv
    = \siloc1 \circ R_3 \circ \ldots \circ R_n \circ \inv = \ldots
    = \siloc1 \circ \inv = \beloc1 - \siloc1.
  \end{equation*}
  For the block number $j>1$, it is easy to see that $\CR \xv_j = \xv_j$
  (the sign gets flipped exactly twice, by $R_j$ and $\inv$) and
  equation~\eqref{eq:one_block_flip2} follows from
  Lemma~\ref{lem:surplus_determined}.  In fact this holds for all
  subblocks of $j$-th block, proving the Lemma in the general case
  (when some subblocks $\Gamma_j$ are connected to another subblock
  $\Gamma_{j'}$, $j>j'>1$ rather than directly to $\Gamma_1$).
\end{proof}

\subsection{Conditional surplus probabilities and the proof of
  Theorem~\ref{thm:binomial_distr}}
\label{sec:proof_main2}

We would now like to study the dependence among the local surpluses.  We
remind the reader that we can view the local nodal surplus functions
$\siloc1,\ldots,\siloc{n}$ as random variables on the probability space
$\left(\Sgen,\bgm\right)$ which automatically enables us to talk about
conditional probabilities.  In particular we will condition $\siloc{b}$
on all random variables $\siloc1, \ldots, \siloc{n}$ \emph{except}
$\siloc{b}$.  We will denote the latter set by
$\vec{\sigma}_{\hat{b}}$.

\begin{theorem}
  \label{thm:indep_symmetric}
  Let $\Gamma$ be a nontrivial standard graph with an edge separation
  $[\Gamma_1,\ldots,\Gamma_n]$ and let the block $\Gamma_b$ have
  $\beloc{b}$ cycles.  Then
  \begin{equation}
    \label{eq:indep_symmetric}
    \Prc{\siloc{b}=s}{\vec\sigma_{\hat{b}}}
    = \Prc{\siloc{b}= \beloc{b}-s}{\vec\sigma_{\hat{b}}}
  \end{equation}
  for all $s$, $0\leq s\leq \beloc{b}$.
  We say that the local surplus functions of $\Gamma$ are \emph{independently
  symmetric}.
\end{theorem}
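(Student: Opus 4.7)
The plan is to reduce the theorem to a direct application of Corollary~\ref{cor:one_block_flip} combined with the measure-preservation of the flipping map. First, by Remark~\ref{rem:renumbering} we may reorder the edge-separated blocks so that any chosen block $\Gamma_b$ plays the role of the ``first'' block $\Gamma_1$ in Corollary~\ref{cor:one_block_flip} (possibly after reassigning some bridges from one subgraph to another). This yields, for each block $b$, a measure-preserving homeomorphism $\CR_b$ of $\left(\Sgen,\bgm\right)$ to itself which satisfies
\begin{equation*}
  \siloc{b}\circ\CR_b = \beloc{b} - \siloc{b},
  \qquad
  \siloc{j}\circ\CR_b = \siloc{j} \quad\text{for all } j\neq b.
\end{equation*}

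Next, I would argue that $\CR_b$ automatically produces the required symmetry of the conditional law. Since $\vec\sigma_{\hat b}$ takes finitely many values, it suffices to fix an arbitrary tuple $\vec s = (s_j)_{j\neq b}$ with $\bgm\bigl(\{\vec\sigma_{\hat b}=\vec s\}\bigr)>0$ and show
\begin{equation*}
  \bgm\bigl(\{\siloc{b}=s\}\cap\{\vec\sigma_{\hat b}=\vec s\}\bigr)
  = \bgm\bigl(\{\siloc{b}=\beloc{b}-s\}\cap\{\vec\sigma_{\hat b}=\vec s\}\bigr).
\end{equation*}
The event $\{\vec\sigma_{\hat b}=\vec s\}$ is \emph{pointwise} invariant under $\CR_b$ because $\CR_b$ fixes each $\siloc{j}$ with $j\neq b$. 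Therefore $\CR_b^{-1}$ maps the left-hand set bijectively onto the right-hand set, and measure-preservation of $\CR_b$ gives the equality. Dividing by $\bgm\bigl(\{\vec\sigma_{\hat b}=\vec s\}\bigr)$ yields \eqref{eq:indep_symmetric}.

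The only nontrivial point — and what I view as the main obstacle to watch for — is checking that the renumbering of blocks is innocuous. One needs to verify that moving bridges around in an edge separation still produces a valid edge separation of $\Gamma$ whose local surplus function $\siloc{b}$ associated with $\Gamma_b$ (and the partition of $\av$ induced by it) coincides with the original $\siloc{b}$. This is straightforward because the set of cycles inside $\Gamma_b$ is an intrinsic feature of the block, independent of which bridge is assigned to which subgraph; hence the subvector $\av_b$ and the Hessian block $H_{\av_b}(F)$ in Definition~\ref{def:local_surplus_fn} are unchanged under renumbering. With that verification in hand, the proof is complete in a few lines.
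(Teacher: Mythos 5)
Your proposal is correct and follows essentially the same route as the paper: renumber the blocks via Remark~\ref{rem:renumbering} so that $\Gamma_b$ becomes the first block, apply Corollary~\ref{cor:one_block_flip} to obtain the measure-preserving homeomorphism $\CR$ that flips $\siloc{b}$ while fixing all other local surpluses, and then transfer this symmetry to the conditional probabilities. You merely spell out the conditional-probability step in terms of measures of intersections (and add a useful sanity check on the renumbering), where the paper writes the same computation more compactly.
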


\begin{proof}
  We renumber the blocks so that $b=1$ (see
  Remark~\ref{rem:renumbering}) and use
  Lemma~\ref{lem:bridge_symmetry} to construct the measure preserving
  transformation $\CR$ satisfying
  \eqref{eq:one_block_flip1}-\eqref{eq:one_block_flip2}.  Since $\CR$
  is measure preserving,
  \begin{equation*}
    \Prc{\siloc{b}=s}{\vec\sigma_{\hat{b}}}
    = \Prc{\siloc{b}\circ\CR = s}{\vec\sigma_{\hat{b}}\circ\CR}.
  \end{equation*}
  On the other hand, properties
  \eqref{eq:one_block_flip1}-\eqref{eq:one_block_flip2} imply that
  \begin{equation*}
    \Prc{\siloc{b}\circ\CR = s}{\vec\sigma_{\hat{b}}\circ\CR}
    = \Prc{\siloc{b} = \beloc{b} - s}{\vec\sigma_{\hat{b}}},
  \end{equation*}
  completing the proof.
\end{proof}

The previous theorem and the law of total probability immediately
yields the following corollary.

\begin{corollary}
  \label{cor:indep_sym_cycle}
  If $\beloc{b}=1$ then
  \begin{equation}
    \label{eq:indep_sym_b1}
    \Prc{\siloc{b}=0}{\vec\sigma_{\hat{b}}} =
    \Prc{\siloc{b}=1}{\vec\sigma_{\hat{b}}} = \frac12,
  \end{equation}
  That is, $\siloc{b}$ takes one of its two possible values with equal
  probabilities and independently of all other local surpluses.
\end{corollary}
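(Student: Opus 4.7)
The plan is to apply Theorem~\ref{thm:indep_symmetric} directly in the case $\beloc{b}=1$, and then invoke the law of total probability to pin down the numerical value $1/2$.

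First I would note that by Definition~\ref{def:local_surplus_fn}, $\siloc{b}$ takes values in $\{0,1,\ldots,\beloc{b}\}$, so when $\beloc{b}=1$ it takes only the two values $0$ and $1$. Theorem~\ref{thm:indep_symmetric} gives
\begin{equation*}
  \Prc{\siloc{b}=s}{\vec\sigma_{\hat{b}}}
  = \Prc{\siloc{b}=\beloc{b}-s}{\vec\sigma_{\hat{b}}}
\end{equation*}
for every $s$ in this range. Setting $s=0$ and $\beloc{b}=1$ yields
\begin{equation*}
  \Prc{\siloc{b}=0}{\vec\sigma_{\hat{b}}}
  = \Prc{\siloc{b}=1}{\vec\sigma_{\hat{b}}}.
\end{equation*}

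Since these two conditional probabilities exhaust the possible values of $\siloc{b}$, they must sum to $1$ by the law of total probability, forcing each to equal $\tfrac12$, which establishes \eqref{eq:indep_sym_b1}. Finally, because the conditional probability of $\siloc{b}=s$ given $\vec\sigma_{\hat{b}}$ does not depend on the value of $\vec\sigma_{\hat{b}}$ (it is the constant $1/2$ for both $s=0$ and $s=1$), the random variable $\siloc{b}$ is independent of the joint distribution of $\vec\sigma_{\hat{b}}$, as claimed.

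There is no real obstacle here: the work has already been done in Theorem~\ref{thm:indep_symmetric} via the measure preserving map $\CR$ constructed in Corollary~\ref{cor:one_block_flip}. The corollary is essentially the observation that \emph{independent symmetry} collapses to genuine independence in the binary case, since a fair coin conditioned on anything is still a fair coin.
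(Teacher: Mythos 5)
Your proposal is correct and follows exactly the route the paper intends: the paper dispenses with a written proof, stating only that Theorem~\ref{thm:indep_symmetric} together with the law of total probability immediately yields the corollary, and your argument is precisely that deduction spelled out (symmetry forces the two conditional probabilities to be equal, normalization forces each to be $1/2$, and a constant conditional distribution gives independence). Nothing is missing.
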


\begin{proof}[Proof of Theorem~\ref{thm:binomial_distr}]
  In the setting of the Theorem, there is an edge separation where
  each block contains just one cycle of the graph.  It follows that
  $\left\{ \siloc{b}\right\}_{b=1}^{\beta}$ are independent Bernoulli
  random variables with $p=\frac{1}{2}$.  Therefore their sum $\sigma$
  has the binomial distribution with $p=\frac{1}{2}$ and $n=\beta$.
\end{proof}


\appendix

\section{Relative volume of $\Sgen$}

\label{sec:volume_Sgen}

The main theorems in this paper apply to generic eigenfunctions (i.e., those that do not vanish at vertices and correspond to a simple eigenvalue). It is therefore of interest to estimate the proportion of such eigenfunctions out of
the whole spectrum. The next proposition gives a precise geometric
expression for this ratio and shows that the majority of the eigenfunctions
are generic.

\begin{proposition} Let $\Gamma$ be a nontrivial standard graph
with rationally independent edge lengths $\lv$. Denote by $\genN$ the
set of indices $1\leq n\leq N$ such that the eigenvalue $k_{n}$
is generic. Then
\begin{equation}
d(\lv):=\lim_{N\to\infty}\frac{\left|\genN\right|}{N}=1-\frac{\mathcal{L}_{loops}}{2\mathcal{L}}\ge\frac{1}{2},\label{eq:proportion_generic}
\end{equation}
where $\mathcal{L}$ is the total length of the graph and $\mathcal{L}_{loops}$
is the total length of all loops (edges from a vertex to itself) in
the graph. \end{proposition}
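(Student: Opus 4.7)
The plan is to combine two density counts: the total count of eigenvalues up to $K$ (Weyl's law) and the count of non-generic ones, the latter being identified with loop-supported eigenvalues by invoking the genericity results of \cite{BerLiu_jmaa17}.

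First I would recall Weyl's law for a standard quantum graph: the eigenvalue counting function satisfies $N(K) = \mathcal{L}K/\pi + O(1)$, so $N/K \to \mathcal{L}/\pi$. This gives the denominator in $|\genN|/N$, asymptotically.

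Second, I would classify the non-generic eigenvalues. By the cited result of \cite{BerLiu_jmaa17} (already invoked in the proof of Theorem~\ref{thm:Sigma_generic}), for rationally independent edge lengths every non-generic eigenvalue $k_n$ of $\Gamma$ corresponds to an eigenfunction supported on a single loop. For a loop $e$ of length $l_e$ incident to a vertex $v$, an eigenfunction supported entirely on $e$ must vanish at $v$, hence on $[0,l_e]$ it has the form $f_e(x)=b\sin(kx)$ with $kl_e=n\pi$, and the Neumann balance at $v$ contributes $f_e'(0)-f_e'(l_e)=(n\pi/l_e)(1-(-1)^n)$, which vanishes precisely when $n$ is even. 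Thus the loop-supported eigenvalues attached to the loop $e$ are exactly $k=2\pi m/l_e$, $m=1,2,\ldots$, with asymptotic density $l_e/(2\pi)$. Summing over all loops (which, under rational independence of lengths, produces disjoint arithmetic progressions) yields a non-generic density of $\mathcal{L}_{loops}/(2\pi)$.

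Third, I would combine the two counts. Dividing the non-generic density by the total Weyl density $\mathcal{L}/\pi$ gives a non-generic proportion of $\mathcal{L}_{loops}/(2\mathcal{L})$, so
\begin{equation*}
  d(\lv) = 1 - \frac{\mathcal{L}_{loops}}{2\mathcal{L}}.
\end{equation*}
The bound $d(\lv)\geq 1/2$ is immediate from $\mathcal{L}_{loops}\leq\mathcal{L}$.

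The only non-trivial step is the second: one must know that, apart from loop-supported eigenfunctions, rational independence of the lengths forces simplicity of eigenvalues \emph{and} non-vanishing at vertices. This is the content of the classification in \cite{BerLiu_jmaa17} (see also \cite{Fri_ijm05}); absent that input one would have to rule out, for instance, accidental multiplicities between loop-supported and non-loop eigenvalues, as well as eigenfunctions that vanish at a vertex without being confined to a single loop. Given the cited result, the remaining argument is the elementary density calculation above.
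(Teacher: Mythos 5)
Your overall strategy (Weyl count minus loop-state count) matches the core computation in the paper, and your identification of the loop-supported eigenvalues as $k = 2\pi m / l_e$ with density $l_e/(2\pi)$, hence a non-generic density of $\mathcal{L}_{loops}/(2\mathcal{L})$, is exactly the calculation the paper performs. The gap is in your second step: you assert that \emph{for rationally independent lengths} every non-generic eigenvalue is accounted for by a loop-supported eigenfunction, attributing this to \cite{BerLiu_jmaa17}. But the result you are invoking (Theorem 3.6 there, as used in this paper) is a Baire-category statement: it produces a \emph{residual} set $\mathcal{T} \subset \R_+^E$ of length vectors for which every eigenvalue is simple and every eigenfunction vanishing at a vertex is supported on a single loop. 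A residual set need not contain any particular rationally independent length vector, so for a given rationally independent $\lv$ you have not ruled out accidental degeneracies (e.g.\ a loop eigenvalue coinciding with a non-loop eigenvalue) or eigenfunctions vanishing at a vertex without being confined to a loop. Your closing paragraph names exactly these dangers but then dismisses them by over-reading the citation. Without this input, the decomposition $\mathcal{N}^{g}(K)=\mathcal{N}(K)-\sum_{j}\overline{\mathcal{N}}^{(j)}(K)$ underlying your density calculation is unjustified, and even the existence of the limit $d(\lv)$ is not established.

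The paper closes this gap with a continuity-and-density argument that your proof is missing. Combining the equidistribution result (Lemma 3.1 of \cite{CdV_ahp15}) with Theorem~\ref{thm:Sigma_generic}, one shows that for every rationally independent $\lv$ the limit $d(\lv)$ exists and equals the ratio of Barra--Gaspard integrals \eqref{eq:alt_def_dl}; the right-hand side of that formula extends to a continuous function of $\lv$ on all of $\R_{+}^{E}$. The direct count you propose is then carried out only for $\lv\in\mathcal{T}$, where it is legitimate, and since $\mathcal{T}$ intersected with the set of rationally independent vectors is residual, hence dense, the two continuous expressions agree everywhere, in particular at every rationally independent $\lv$. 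You need some version of this interpolation step --- or else an actual proof that rational independence alone forces the spectral classification you assumed --- to make the argument complete.
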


\begin{proof}

Combining Lemma 3.1 in \cite{CdV_ahp15} together with Theorem \ref{thm:Sigma_generic}
in our paper we have that if the lengths $\lv$ are rationally independent
then
\begin{equation}
d(\lv)=\frac{\int_{\Sigma^{g}}\left|\hat{n}\cdot\lv\right|d\sigma}{\int_{\Sreg}\left|\hat{n}\cdot\lv\right|d\sigma}.\label{eq:alt_def_dl}
\end{equation}
Using (\ref{eq:alt_def_dl}) we may extend $d(\lv)$ to consider it
as a function which is defined for all $\lv\in\R_{+}^{E}$ and get
that it is continuous. It is therefore enough to prove (\ref{eq:proportion_generic})
for a residual set of lengths, which is what we do next.

Denote by $\mathcal{T}\subset\R_{+}^{E}$ the set of length vectors
$\lv$ for which the spectrum of the corresponding graph obeys both
of the following:
\begin{enumerate}
\item Every eigenvalue is simple.
\item Every eigenfunction which vanishes on one of the vertices is supported
on a single loop.
\end{enumerate}
Theorem 3.6 in \cite{BerLiu_jmaa17} ensures that $\mathcal{T}$ is
residual in $\R_{+}^{E}$.

Let us number the loop-edges of the graph by $\left\{ e_{1},...,e_{m}\right\} $
and define the counting functions
\begin{align*}
\mathcal{N}(K) & =\left\{ n:k_{n}<K\right\} \\
\mathcal{N}^{g}(K) & =\left\{ n:k_{n}<K\mbox{ and }k_{n}\mbox{ is generic}\right\} \\
\overline{\mathcal{N}}^{\left(j\right)}(K) & =\left\{ n:k_{n}<K,~k_{n}\mbox{ is simple and }f_{n}\mbox{ is supported only on }e_{j}\right\} ,\quad j=1,\ldots,m.
\end{align*}
Assuming $\lv\in\mathcal{T}$ , we get the following relation
\[
\mathcal{N}^{g}(K)=\mathcal{N}(K)-\sum_{j=1}^{m}\overline{\mathcal{N}}^{\left(j\right)}(K).
\]

Next, we use the Weyl asymptotics, $\mathcal{N}(K)\sim K\mathcal{L}/\pi$
to estimate the counting functions above. The eigenvalues producing
eigenfunctions supported only on the loop $e_{j}$ are precisely $k_{n}=2\pi n/l_{j}$,
$n\in\mathbb{N}$, so that their count is $\overline{\mathcal{N}}^{\left(j\right)}(K)\sim Kl_{j}/2\pi$.
We conclude that if $\lv\in\mathcal{T}$ then
\begin{equation}
\lim_{N\to\infty}\frac{\left|\genN\right|}{N}=\lim_{K\to\infty}\frac{\mathcal{N}^{g}(K)}{\mathcal{N}(K)}=\lim_{K\to\infty}\left(1-\sum_{j=1}^{m}\frac{\overline{\mathcal{N}}^{\left(j\right)}(K)}{\mathcal{N}(K)}\right)=1-\sum_{j=1}^{m}\frac{l_{j}/2}{\mathcal{L}}=1-\frac{\mathcal{L}_{loops}}{2\mathcal{L}},\label{eq:ratio_in_energy}
\end{equation}

which shows that (\ref{eq:proportion_generic}) holds for a residual set,
as required. \end{proof}

The proof above has a nice interpretation on the level of the secular manifold.
We present a decomposition of the secular manifold, which is schematically demonstrated
in Figure \ref{fig:Sigma_decomposition}. First, eigenfunctions
of simple eigenvalues correspond to $\Sreg$, the regular part of
the secular manifold (Theorem \ref{thm:Sreg_def_prop}). Out of those
eigenfunctions, the eigenfunctions which vanish at some vertex of
the graph correspond to $\Sigma_{0}$ (see (\ref{eq:zero_Sigma})).
We may further decompose $\Sigma_{0}$, by defining

\begin{align}
\Sigma_{F} & =\left\{ (x_{1},\ldots,x_{E})\in\Sreg~|~x_{e}=2\pi~\textrm{for some edge }e\textrm{, which is a loop}\right\} .\label{eq:Sigma_faces}
\end{align}

It is not hard to see that $\Sigma_{F}$ corresponds to simple eigenvalues
whose eigenfunctions are supported on a single loop of the graph.
Note that in (\ref{eq:BG_measure_def}) we could have defined the
Barra-Gaspard measure on the whole of $\Sreg$ (as is actually done
in \cite{CdV_ahp15}). Doing so we would get that $\Sigma_{0}\backslash\Sigma_{F}$
is of zero measure and that the total measure of $\Sigma_{F}$ is
$\frac{\mathcal{L}_{loops}}{2\mathcal{L}}$ (assuming that the total measure of $\Sreg$ is 1). In particular, it follows
that for graphs without loop-edges, $\Sigma^{g}=\Sreg$ up to measure
zero set.

\begin{figure}[h]
\includegraphics[scale=0.7]{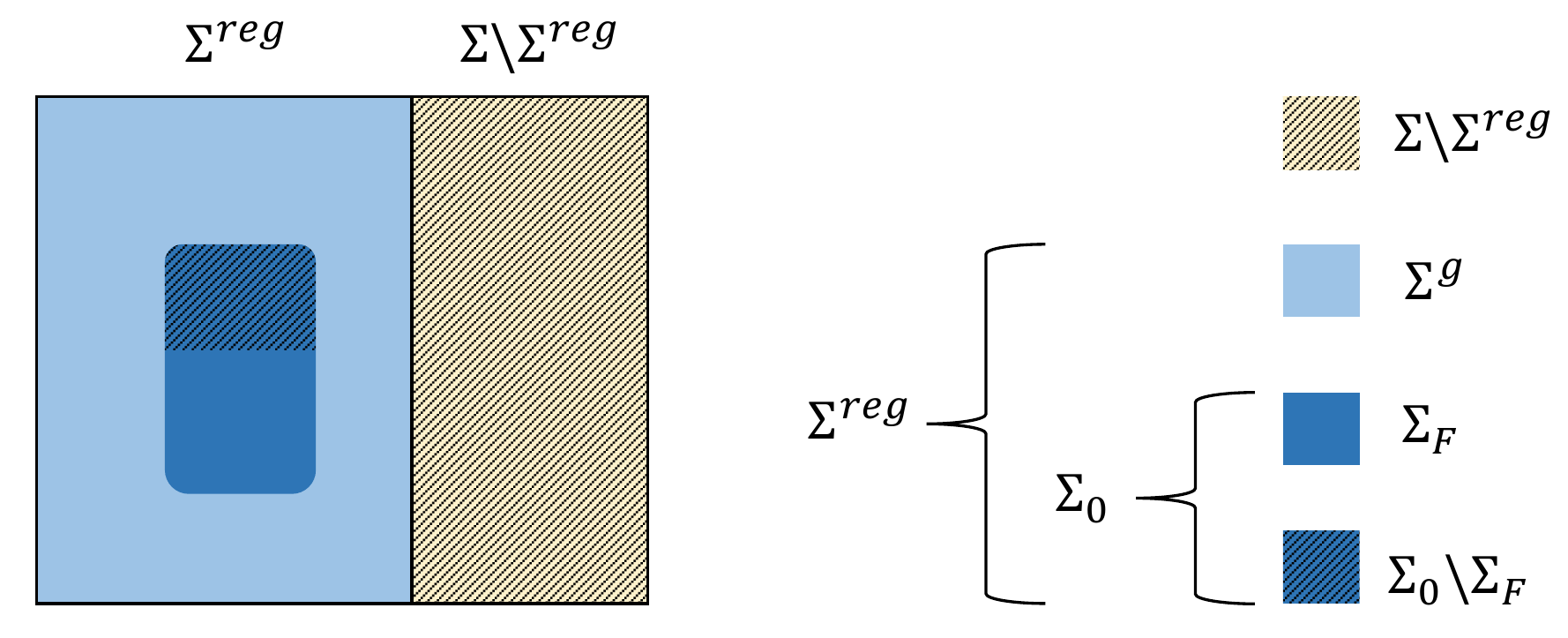} 
\caption{A schematic drawing which shows the relevant subsets of $\Sigma$.
The shaded regions $\Sigma\backslash\Sreg$ and $\Sigma_{0}\backslash\Sigma_{F}$
are subsets of $\Sigma$ which are of zero measure.}
\label{fig:Sigma_decomposition}
\end{figure}


\section{Contracting an edge in a graph}

\label{sec:zero_length_edge}

As pointed out by Band and L\'evy in \cite[Appendix A]{BanLev_prep16},
when the length of an edge tends to zero, the eigenvalues of the graph
converge to the eigenvalues of the graph with this edge contracted.
The edge to be contracted has Neumann conditions at its two end vertices
and upon contraction, those vertices are merged, and the Neumann conditions
are imposed on the newly formed vertex (see Figure \ref{fig:contracting_an_edge}).
Here, we consider what happens when the torus variable corresponding
to this edge is set to zero, which is needed for the proof of Theorem~\ref{thm:splitting_decomposition}
given in Appendix~\ref{sec:proof_splitting}.

\begin{lemma} \label{lem:contracting_an_edge} Let $\Gamma$ be a
magnetic standard graph. Let $e$ be an edge with no magnetic potential
on it and with distinct endpoints both endowed with Neumann conditions.
Let $\Gamma_{c}$ be the graph obtained from $\Gamma$ by contracting
the edge $e$ and imposing Neumann condition at the newly formed vertex.
Setting $\tx_{e}=0$ in the secular function of $\Gamma$ we obtain
\begin{equation}
F((0,\xv_{c});\av)=2\frac{d_{1}+d_{2}-2}{d_{1}d_{2}}F_{c}(\xv_{c};\av),\label{eq:contracted_secular_function_prefactor}
\end{equation}
where $F_{c}$ is the secular function of $\Gamma_{c}$ , and $d_{1},d_{2}$
are the degrees of the endpoints of $e$.\end{lemma}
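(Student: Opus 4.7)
The plan is to prove the identity via a Schur complement computation. Order the directed edges of $\Gamma$ so that $e$ and $\hat{e}$ come first. With $\tx_e = 0$ and no magnetic potential on $e$, the exponential prefactor in the first two slots of the secular matrix $M := \Id - e^{i\ba}e^{i\bx}\Smat$ is the identity, and the scattering rule \eqref{eq:def_S_matrix} applied at the Neumann endpoints $v_1, v_2$ of $e$ yields the top-left $2\times 2$ block
\[
  P = \begin{pmatrix} 1 & 1-2/d_1 \\ 1-2/d_2 & 1 \end{pmatrix}.
\]
The diagonal entries are $1$ because $e$ does not originate at $v_2$ and $\hat{e}$ does not originate at $v_1$; the off-diagonals are the reflection amplitudes $2/d_j - 1$. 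An immediate calculation gives $\det P = 2(d_1+d_2-2)/(d_1 d_2)$, precisely the prefactor in \eqref{eq:contracted_secular_function_prefactor}. Writing $M = \begin{pmatrix} P & Q \\ R & N \end{pmatrix}$, the Schur complement formula $\det M = \det P \cdot \det(N - R P^{-1} Q)$ reduces the problem to the identity $N - R P^{-1} Q = M_c$, where $M_c$ denotes the secular matrix of $\Gamma_c$.

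To verify this identity, let $A_j$ (resp.\ $B_j$) denote the set of directed edges of $\Gamma$ terminating (resp.\ originating) at $v_j$, excluding $e$ and $\hat{e}$; the assumption $v_1 \neq v_2$ makes $A_1 \cap A_2 = B_1 \cap B_2 = \emptyset$. The scattering rule \eqref{eq:def_S_matrix} shows that $Q$ is supported on columns $A_1 \cup A_2$ and $R$ on rows $B_1 \cup B_2$, so the rank-at-most-$2$ correction $R P^{-1} Q$ affects only those entries of $M$ whose row lies in $B_1 \cup B_2$ and whose column lies in $A_1 \cup A_2$. Outside this support, $N$ and $M_c$ automatically coincide, since the scattering rules at all other vertices are unchanged by the contraction.

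On the relevant support, $R P^{-1} Q$ decomposes into four block contributions indexed by $(B_i, A_j)$, $i,j \in \{1,2\}$, which we compute using the explicit $P^{-1}$ and the Neumann values $2/d_j$ and $2/d_j - 1$ entering $R$ and $Q$. In each case the $(e', e'')$-entry carries the phase factor $e^{i\tx_{e'} + i\alpha_{e'}}$, exactly matching the form of $M_c$. The same-side blocks ($i = j$) yield the correction $2/d_j - 2/(d_1+d_2-2)$, which is precisely the adjustment needed to replace scattering at $v_j$ with scattering at the merged vertex of degree $d_1+d_2-2$; the cross blocks ($i \neq j$) yield $2/(d_1+d_2-2)$, which supplies the new transmission amplitude between edges that share no vertex in $\Gamma$ but share the merged vertex in $\Gamma_c$. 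The main difficulty is the bookkeeping in this four-case analysis; it works out cleanly because the identity $\det P = 2(d_1+d_2-2)/(d_1 d_2)$ causes all denominators to collapse to the expected $d_1 + d_2 - 2$, so that the verification $N - R P^{-1} Q = M_c$ goes through and the lemma follows. (The degenerate case $d_1 = d_2 = 1$, in which $\det P = 0$, is trivial: both sides of \eqref{eq:contracted_secular_function_prefactor} vanish, so the Schur argument is only needed when $d_1 + d_2 > 2$.)
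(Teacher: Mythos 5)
Your proposal is correct and follows essentially the same route as the paper's proof: order the edge labels so that $e,\hat e$ come first, pull out the $2\times2$ block via Schur's determinant identity to obtain the prefactor $\det P = 2(d_1+d_2-2)/(d_1d_2)$, and verify entrywise that the Schur complement is the secular matrix of $\Gamma_c$ by checking that the corrected amplitudes are exactly those of a Neumann vertex of degree $d_1+d_2-2$. The only quibble is a sign convention in your description of the same-side correction (the entry of $RP^{-1}Q$ there is $e^{i\xi_{e'}}\bigl(2/(d_1+d_2-2)-2/d_j\bigr)$, i.e.\ the adjustment \emph{added} to the scattering amplitude after the two minus signs in $N-RP^{-1}Q$ and $N=\Id-e^{i\xi}\Smat_r$ are accounted for), which does not affect the validity of the argument.
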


\begin{remark} \label{rem:contracting_a_loop} If a graph $\Gamma$
has a loop and we set the corresponding variable to zero in the graph's
secular function, the secular function becomes identically zero. This
is why we explicitly assumed in Lemma~\ref{lem:contracting_an_edge}
that the edge to be contracted has distinct endpoints. \end{remark}

\begin{figure}
\centering \includegraphics{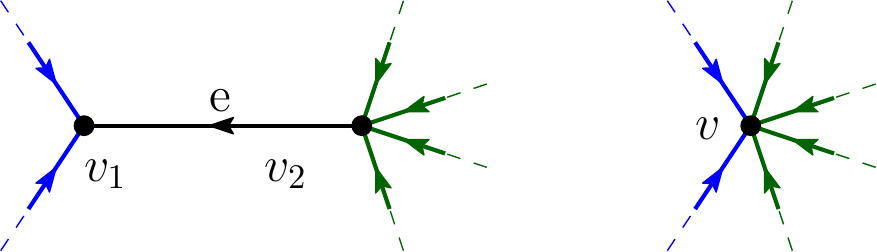} 
\caption{Contracting the edge $e$ in a graph. Here $d_{1}=3$ and $d_{2}=5$.}
\label{fig:contracting_an_edge}
\end{figure}

\begin{proof}{[}Proof of Lemma~\ref{lem:contracting_an_edge}{]}
Assume we are contracting an edge connecting vertices $v_{1}$ and
$v_{2}$ of degrees $d_{1}$ and $d_{2}$ correspondingly, see Figure~\ref{fig:contracting_an_edge}.
Let $e$ refer to the directed label of this edge going towards $v_{1}$
and $\hat{e}$ denote its reversal. Upon contraction of the edge $e$
the new joined vertex $v=v_{1}=v_{2}$ will have the degree $d_{1}+d_{2}-2$.
The new graph will be denoted $\Gamma_{c}$ and its torus coordinates
are $\xv_{c}$.

Further assume that $e$ comes first in the numbering of edges (\ref{eq:vector_amplitudes})
used in the set up of the secular equation (\ref{eq:sec_condition}).
As a result, $\xv=(\tx_{1},\xv_{c})$. Once we set $\tx_{1}=0$, the
matrix used in the definition of the secular function $F(\xv;\av)$
takes the form
\begin{equation}
\Id-e^{i(\xv+\av)}\Smat=\left(\begin{array}{cc|c}
1 & 1-\frac{2}{d_{2}} & -\frac{2}{d_{2}}\vec{r}_{2}\\
1-\frac{2}{d_{1}} & 1 & -\frac{2}{d_{1}}\vec{r}_{1}\\
\hline -\frac{2}{d_{1}}e^{i\xi}\vec{c}_{1} & -\frac{2}{d_{2}}e^{i\xi}\vec{c}_{2} & \Id-e^{i\xi}\Smat_{r}
\end{array}\right),\label{eq:matrix_with_L0}
\end{equation}
where $\xi=(\tx_{2}+\alpha_{2},\tx_{2}-\alpha_{2},\tx_{3}+\alpha_{3},\ldots)$
is the vector of torus coordinates and fluxes of all edges except
the edge $e$ and $\Smat_{r}$ is the matrix $\Smat$ without its
first two rows and columns. We have the relation $\vec{r}_{j}=(Jc_{j})^{T}$,
where $J$ is a matrix which switches around the directed labels of
the edges (see equation (\ref{eq:scattering problem}) and preceding
discussion). Furthermore, the vectors $\vec{c}_{j}$ and $\vec{r}_{j}=(Jc_{j})^{T}$
contain zeros and ones only; for example, $\vec{r}_{2}$ has ones
only in the entries corresponding to the edge labels coming into vertex
$v_{2}$ while $\vec{c}_{1}$ has ones corresponding to edge labels
coming out of $v_{1}$.

We would like to evaluate the determinant of (\ref{eq:matrix_with_L0})
using Schur's determinant identity in the form
\begin{equation}
\det\begin{pmatrix}A & B\\
C & D
\end{pmatrix}=\det(A)\det\left(D-CA^{-1}B\right).\label{eq:Shurs_identity}
\end{equation}
We get
\begin{equation}
F((0,\xv_{c});\av)=2\frac{d_{1}+d_{2}-2}{d_{1}d_{2}}\det\left(\Id-e^{i\xi}\Smat_{c}\right),\label{eq:contracted_secular_function_prefactor_preliminary}
\end{equation}
where the prefactor is the determinant of the top left corner and
\begin{align*}
\Smat_{c} & =\Smat_{r}+\begin{pmatrix}\vec{c}_{1} & \vec{c}_{2}\end{pmatrix}\begin{pmatrix}\frac{2}{d_{1}} & 0\\
0 & \frac{2}{d_{2}}
\end{pmatrix}\begin{pmatrix}1 & 1-\frac{2}{d_{2}}\\
1-\frac{2}{d_{1}} & 1
\end{pmatrix}^{-1}\begin{pmatrix}\frac{2}{d_{2}} & 0\\
0 & \frac{2}{d_{1}}
\end{pmatrix}\begin{pmatrix}\vec{r}_{2}\\
\vec{r}_{1}
\end{pmatrix}\\
 & =\Smat_{r}+\begin{pmatrix}\vec{c}_{1} & \vec{c}_{2}\end{pmatrix}\begin{pmatrix}\frac{2}{d_{1}+d_{2}-2} & \frac{2}{d_{1}+d_{2}-2}-\frac{2}{d_{1}}\\
\frac{2}{d_{1}+d_{2}-2}-\frac{2}{d_{2}} & \frac{2}{d_{1}+d_{2}-2}
\end{pmatrix}\begin{pmatrix}\vec{r}_{2}\\
\vec{r}_{1}
\end{pmatrix}.
\end{align*}
The effect of adding the term is best understood on some examples.
If $e_{in}$ is one of the edges going to $v_{2}$ and $e_{out}$
is one of the edges coming out of $v_{1}$, then
\[
(\Smat_{c})_{e_{out},e_{in}}=0+\begin{pmatrix}1 & 0\end{pmatrix}\begin{pmatrix}\frac{2}{d_{1}+d_{2}-2} & \frac{2}{d_{1}+d_{2}-2}-\frac{2}{d_{1}}\\
\frac{2}{d_{1}+d_{2}-2}-\frac{2}{d_{2}} & \frac{2}{d_{1}+d_{2}-2}
\end{pmatrix}\begin{pmatrix}1\\
0
\end{pmatrix}=\frac{2}{d_{1}+d_{2}-2}.
\]
If $e_{in}$ is one of the edges going to $v_{1}$ and $e_{out}$
is its reversal,
\[
(\Smat_{c})_{e_{out},e_{in}}=\frac{2}{d_{1}}-1+\begin{pmatrix}1 & 0\end{pmatrix}\begin{pmatrix}\frac{2}{d_{1}+d_{2}-2} & \frac{2}{d_{1}+d_{2}-2}-\frac{2}{d_{1}}\\
\frac{2}{d_{1}+d_{2}-2}-\frac{2}{d_{2}} & \frac{2}{d_{1}+d_{2}-2}
\end{pmatrix}\begin{pmatrix}0\\
1
\end{pmatrix}=\frac{2}{d_{1}+d_{2}-2}-1.
\]
In both cases the answer is the correct scattering amplitude for the
vertex $v$ of degree $d_{1}+d_{2}-2$ which resulted from the contraction
of $e$ (see, for example, Figure~\ref{fig:contracting_an_edge},
right). The remaining cases are checked analogously and we find that
$\Smat_{c}$ is the bond scattering matrix of the graph $\Gamma_{c}$,
so that
\[
F_{c}(\xv_{c};\av)=\det\left(\Id-e^{i\xi}\Smat_{c}\right).
\]
This, together with (\ref{eq:contracted_secular_function_prefactor_preliminary})
gives (\ref{eq:contracted_secular_function_prefactor}).\end{proof}

We note that Lemma \ref{lem:contracting_an_edge} indeed implies the
claim in \cite[Appendix A]{BanLev_prep16}, that graph eigenvalues
are continuous with respect to edge length when the length goes to
zero. Next, we provide another proof for this statement. The proof
is insightful as it is done via the eigenfunctions. Yet, it does not
reproduce the exact prefactor given in (\ref{eq:contracted_secular_function_prefactor}).
More precisely, we now prove that
\begin{equation}
F\left(0,\xv_{c};\av\right)=0\iff F_{c}\left(\xv_{c};\av\right)=0,\label{eq:zeros_of_secular_functions_coincide}
\end{equation}
 with notations similar to those in Lemma \ref{lem:contracting_an_edge}.

First, we generalize the notion of canonical eigenfunction as given
in Theorem \ref{thm:Sreg_def_prop}(\ref{item:eigenvalue1}). We do
so by taking a canonical eigenfunction to be any eigenfunction belonging
to the eigenvalue $k=1$ for a graph with edge lengths given by $\lv=\xv\in\Sigma$
(not just for $\xv\in\Sreg$ as in Theorem \ref{thm:Sreg_def_prop}(\ref{item:eigenvalue1})).
Now, the proof is based on showing a one to one correspondence between
canonical eigenfunctions of $\Gamma$ with edge lengths $\lv=\left(2\pi,\xv_{c}\right)$
and canonical eigenfunctions of $\Gamma_{c}$ with edge lengths $\lv_{c}=\xv_{c}$.
Let $f$ be a canonical eigenfunction of $\Gamma$ for edge lengths
$\lv=\left(2\pi,\xv_{c}\right)$. Denote its restriction $\tilde{f}=f|_{\Gamma\setminus e_{0}}$
and consider $\tilde{f}$ to be a function on $\Gamma_{c}$ with $\lv_{c}=\xv_{c}$
under the identification of $v_{1}\sim v_{2}$. We will show that
$\tilde{f}$ is actually a canonical eigenfunction of $\Gamma_{c}$.
Note that we may write $f$ as
\begin{equation}
f_{e}\left(t\right)=\begin{cases}
\tilde{f}_{e}\left(t\right) & e\ne e_{0}\\
B\cos\left(t+\theta\right) & e=e_{0}
\end{cases},\label{eq:canonical_eigenfunction_on_Gamma}
\end{equation}
 for some values of $B,\theta$. First of all, since $e_{0}$ is not
a loop, $\tilde{f}$ cannot be identically zero. Otherwise the function
$f$ would have zero value and zero derivative at $v_{1}$ and therefore
would be identically zero on the edge $e_{0}$ as well. Denote the
set of edge labels coming into $v_{i}$ and not including the labels
$e_{0}$, $\hat{e}_{0}$ by $\E_{i}$ (blue and green edges in Figure~\ref{fig:contracting_an_edge}).
Let $e_{0}$ be oriented from $v_{2}$ to $v_{1}$ (see Figure~\ref{fig:contracting_an_edge}).
The Neumann conditions on the vertices imply that
\begin{align*}
\forall e\in\E_{1}\quad\quad & f_{e}\left(l_{e}\right)=f_{e_{0}}\left(2\pi\right)=B\cos\left(\theta\right)\\
\forall e\in\E_{2}\quad\quad & f_{e}\left(l_{e}\right)=f_{e_{0}}\left(0\right)=B\cos\left(\theta\right)\\
\sum_{e\in\E_{1}}\left(i\Diff{}{x}+\frac{\alpha_{e}}{l_{e}}\right)f_{e}\left(l_{e}\right) & =-i\Diff{}{x}f_{e_{0}}\left(2\pi\right)=iB\sin\left(\theta\right)\\
\sum_{e\in\E_{2}}\left(i\Diff{}{x}+\frac{\alpha_{e}}{l_{e}}\right)f_{e}\left(l_{e}\right) & =i\Diff{}{x}f_{e_{0}}\left(0\right)=-iB\sin\left(\theta\right),
\end{align*}
where for convenience we have chosen the magnetic potential to be
constant along each edge, $A_{e}\equiv\frac{\alpha_{e}}{l_{e}}$ (see
(\ref{eq:current_cons_mag})). It follows that
\begin{align*}
\forall e\in\left(\E_{1}\cup\E_{2}\right)\quad\quad & f_{e}\left(l_{e}\right)=B\cos\left(\theta\right)\\
\sum_{e\in\left(\E_{1}\cup\E_{2}\right)}\left(i\Diff{}{x}+\frac{\alpha_{e}}{l_{e}}\right)f_{e}\left(l_{e}\right) & =0.
\end{align*}
We get that if $f$ satisfies the Neumann boundary conditions at $v_{1}$
and $v_{2}$ then $\tilde{f}$ satisfies the Neumann boundary
conditions at the merged vertex $v$. Obviously, $\tilde{f}$ satisfies
the same vertex conditions as $f$ at all other vertices and it obeys
$-\tilde{f}''=\tilde{f}$. Therefore $\tilde{f}$ is a canonical eigenfunction
of $\Gamma_{c}$ with $\lv=\xv$.

In the other direction, assume that $\tilde{f}$ is a canonical eigenfunction
of $\Gamma_{c}$ with edge lengths $\lv_{c}=\xv_{c}$. It can be extended
to a canonical eigenfunction $f$ on $\Gamma$ with edge lengths $\lv=\left(2\pi,\xv_{c}\right)$,
by setting $f$ to be as in (\ref{eq:zeros_of_secular_functions_coincide}),
where $B,\theta$ are chosen to satisfy
\begin{align*}
B\cos\left(\theta\right) & =\tilde{f}\left(v\right)\\
iB\sin\left(\theta\right) & =\sum_{e\in\E_{1}}\left(i\Diff{}{x}+\frac{\alpha_{e}}{l_{e}}\right)\tilde{f}|_{e}\left(l_{e}\right).
\end{align*}
This proves (\ref{eq:zeros_of_secular_functions_coincide}). Furthermore,
from the above also follows that the multiplicities of corresponding
zeros of $F\left(0,\xv_{c};\av\right)$ and $F_{c}\left(\xv_{c};\av\right)=0$
are equal.


\section{Proof of Theorem~\ref{thm:splitting_decomposition}}

\label{sec:proof_splitting}

\begin{proof}[Proof of Theorem~\ref{thm:splitting_decomposition}] Let us have an in-depth look at the bond scattering
matrix $\Smat$ of the graph $\Gamma$. We order the directed edge
labels as follows: edge labels of $\Gamma_{1}$, edge labels of $\Cnct$
in the direction from $\Gamma_{1}$ to $\Gamma_{2}$, edge labels
of $\Cnct$ in the opposite direction and then edge labels of $\Gamma_{2}$.
With this order and edge groupings, the matrix $\Smat$ has the following
block structure
\begin{equation}
\Smat=\begin{pmatrix}\Smat_{1} & 0 & t_{1} & 0\\
t_{1}' & 0 & r_{1} & 0\\
0 & r_{2} & 0 & t_{2}'\\
0 & t_{2} & 0 & \Smat_{2}
\end{pmatrix},\label{eq:structure_Smat_bridge}
\end{equation}
where, for example, the matrix $t_{1}$ corresponds to scattering
of waves from $\Cnct$ into the subgraph $\Gamma_{1}$ and $r_{1}$
represents reflection of the waves from $\Cnct$, off $\Gamma_{1}$
and back into $\Cnct$. We have the relations $t_{i}'=(J_{i}t_{i})^{T}$,
where the permutation matrix $J_{i}$ switches the orientation of
the edge labels in the subgraph $\Gamma_{i}$ (see equation (\ref{eq:scattering problem})
and preceding discussion).

We now multiply the matrix $\Smat$ by the diagonal matrix $e^{i\xv+i\av}$
which, in the block form similar to \eqref{eq:structure_Smat_bridge},
is given by
\begin{equation}
e^{i\xv+i\av}=\begin{pmatrix}e^{i\xv_{1}+i\av_{1}} & 0 & 0 & 0\\
0 & e^{i\tx_{0}+i\av_{0}} & 0 & 0\\
0 & 0 & e^{i\tx_{0}-i\av_{0}} & 0\\
0 & 0 & 0 & e^{i\xv_{2}+i\av_{2}}
\end{pmatrix}.\label{eq:matrix_e}
\end{equation}
We would like to apply Schur's determinant identity
\begin{equation}
\det\begin{pmatrix}A & B\\
C & D
\end{pmatrix}=\det(D)\det\left(A-BD^{-1}C\right),
\label{eq:Shurs_identity_again}
\end{equation}
to the determinant of the matrix $\Id-e^{i\xv+i\av}\Smat$ written
as
\begin{equation}
\Id-e^{i\xv+i\av}\Smat=\left(\begin{array}{ccc|c}
\Id_{1}-e^{i\xv_{1}+i\av_{1}}\Smat_{1} & 0 & -e^{i\xv+i\av_{1}}t_{1} & 0\\
-e^{i\xv_{0}+i\av_{0}}t_{1}' & \Id_{0} & -e^{i\xv_{0}+i\av_{0}}r_{1} & 0\\
0 & -e^{i\xv_{0}-i\av_{0}}r_{2} & \Id_{0} & -e^{i\xv_{0}-i\av_{0}}t_{2}'\\
\hline 0 & -e^{i\xv_{2}+i\av_{2}}t_{2} & 0 & \Id_{2}-e^{i\xv_{2}+i\av_{2}}\Smat_{2}
\end{array}\right).\label{eq:funny_matrix}
\end{equation}
The block $D$ is going to be $\Id_{2}-e^{i\xv_{2}+i\av_{2}}\Smat_{2}$
and, as a first step, we would like to determine when it is invertible.
Suppose, for a choice of $\xv_{2}$ and $\av_{2}$, the vector $\vec{v}_{2}$
is an eigenvector of $e^{i\xv_{2}+i\av_{2}}\Smat_{2}$ with eigenvalue
$1$. Then the vector
\begin{equation}
\vec{v}=(0,0,0,\vec{v}_{2}^{T})^{T}\label{eq:deg_eigenvector}
\end{equation}
is the eigenvector of $e^{i\xv+i\av}\Smat$ with eigenvalue $1$.
Indeed, we know that the last entry $e^{i\xv+i\av}\Smat\vec{v}$ is
going to be $\vec{v}_{2}$ and if any other entry is non-zero, it
would mean that an application of $e^{i\xv+i\av}\Smat$ increases
the norm of $\vec{v}$ which is impossible for a unitary matrix.

We conclude that $\Id_{2}-e^{i\xv_{2}+i\av_{2}}\Smat_{2}$ is non-invertible
only when the graph has an eigenfunction vanishing on the connector
set $\Cnct$ and the entire subgraph $\Gamma_{1}$. When this is the
case we have that (\ref{eq:splitting_decomposition}) holds with $D_{2}=0$.
To add on that, when this happens for $\av=\vec{0}$, the eigenfunction
mentioned above implies $\xv\notin\Sgen$, which proves the claim
at the end of the theorem.

Assuming the matrix is invertible, we have for $BD^{-1}C$
\[
\begin{pmatrix}0\\
0\\
-e^{i\tx_{0}-i\av_{0}}t_{2}'
\end{pmatrix}\left(\Id_{2}-e^{i\xv_{2}+i\av_{2}}\Smat_{2}\right)^{-1}\begin{pmatrix}0 & -e^{i\xv_{2}+i\av_{2}}t_{2} & 0\end{pmatrix}=\begin{pmatrix}0 & 0 & 0\\
0 & 0 & 0\\
0 & e^{i\tx_{0}-i\av_{0}}(Z_{2}-r_{2}) & 0
\end{pmatrix},
\]
where $Z_{2}$ is given
\begin{equation}
Z_{2}(\xv_{2};\av_{2})=r_{2}+t_{2}'\left(\Id_{2}-e^{i\xv_{2}+i\av_{2}}\Smat_{2}\right)^{-1}e^{i\xv_{2}+i\av_{2}}t_{2},\label{eq:Z_2_expression}
\end{equation}
which coincides with the definition of the scattering matrix of the
subgraph $\Gamma_{2}$, see \eqref{eq:scattering_torus}.

Subtracting this from the block $A$ we get
\begin{multline*}
\det\left(\Id-e^{i\xv+i\av}\Smat\right)=\det\left(\Id_{2}-e^{i\xv_{2}+i\av_{2}}\Smat_{2}\right)\det\begin{pmatrix}\Id_{1}-e^{i\xv_{1}+i\av_{1}}\Smat_{1} & 0 & -e^{i\xv+i\av_{1}}t_{1}\\
-e^{i\tx_{0}+i\av_{0}}t_{1}' & \Id_{0} & -e^{i\xv_{0}+i\av_{0}}r_{1}\\
0 & -e^{i\xv_{0}-i\av_{0}}Z_{2} & \Id_{0}
\end{pmatrix}.
\end{multline*}

Applying Schur's determinant identity again now with $\Id_{1}-e^{i\xv_{1}+i\av_{1}}\Smat_{1}$
acting as a factor to bring outside, we get
\begin{multline*}
\det\left(\begin{array}{c|cc}
\Id_{1}-e^{i\xv_{1}+i\av_{1}}\Smat_{1} & 0 & -e^{i\xv+i\av_{1}}t_{1}\\
\hline -e^{i\tx_{0}+i\av_{0}}t_{1}' & \Id_{0} & -e^{i\xv_{0}+i\av_{0}}r_{1}\\
0 & -e^{i\xv_{0}-i\av_{0}}Z_{2} & \Id_{0}
\end{array}\right)\\
=\det\left(\Id_{1}-e^{i\xv_{1}+i\av_{1}}\Smat_{1}\right)\det\begin{pmatrix}\Id_{0} & -e^{i\xv_{0}+i\av_{0}}Z_{1}\\
-e^{i\xv_{0}-i\av_{0}}Z_{2} & \Id_{0}
\end{pmatrix},
\end{multline*}
where in the above we used an expression of $Z_{1}$ similar to (\ref{eq:Z_2_expression})
and also assumed the invertibility of $\Id_{1}-e^{i\xv_{1}+i\av_{1}}\Smat_{1}$.
If it is not invertible, we get just as before that $D_{1}=0$ and
(\ref{eq:splitting_decomposition}) still holds. Evaluating the last
determinant (using Schur's identity once more), we get
\[
\det\begin{pmatrix}\Id_{0} & -e^{i\xv_{0}+i\av_{0}}Z_{1}\\
-e^{i\xv_{0}-i\av_{0}}Z_{2} & \Id_{0}
\end{pmatrix}=\det(\Id_{0})\det\left(\Id_{0}-e^{i\xv_{0}+i\av_{0}}Z_{1}e^{i\xv_{0}-i\av_{0}}Z_{2}\right).
\]
If all entries of $\xv_{0}$ are different than zero, collecting all
the factors above together gives (\ref{eq:splitting_decomposition})
with the prefactor $c=1$ at its right hand side. Otherwise, for any
vanishing entry of $\xv_{0}$, we apply Lemma \ref{lem:contracting_an_edge}
to conclude that (\ref{eq:splitting_decomposition}) still holds.
This time, the prefactor $c$ in (\ref{eq:splitting_decomposition})
equals the product of prefactors at the right hand side of (\ref{eq:contracted_secular_function_prefactor}),
applied for each of the vanishing entries in $\xv_{0}$. \end{proof}


\section{Examples of nodal surplus distribution}

\label{sec:examples}

In this appendix we will calculate, analytically or numerically, the
nodal surplus distribution of the graphs shown in the Figure~\ref{fig:appendix_examples}.
We say that a graph is a $[p_{1},p_{2},\ldots,p_{n}]$ chain if the
graph consists of a sequence of $n+1$ vertices with $p_{k}$ edges
connecting vertices $v_{k}$ and $v_{k+1}$. Note that the graphs
we call ``figure of 8'' (Figure~\ref{fig:appendix_examples}(a))
and ``dumbbell'' (Figure~\ref{fig:appendix_examples}(b)) can be
considered as a $[2,2]$ chain and a $[2,1,2]$ chain correspondingly.
This terminology for ``chains'' comes from the notion of ``mandarin
chain'' or ``pumpkin chain'' graphs which appeared in \cite{K2M2_ahp16,BanLev_prep16}.

\begin{figure}
\includegraphics{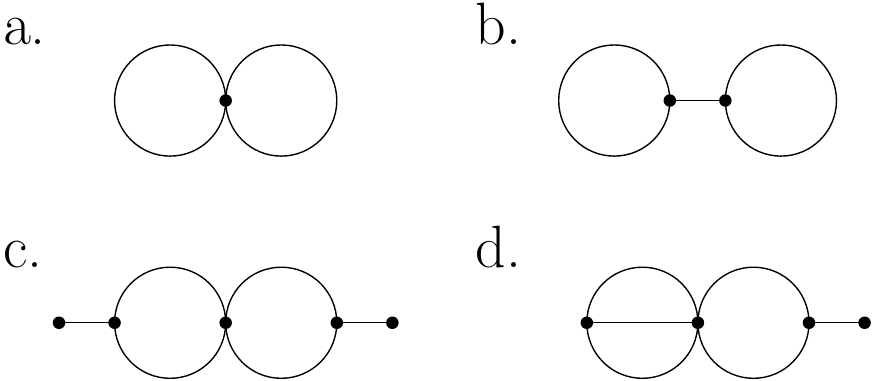} 
\caption{Graphs considered in Section \ref{sec:examples}: (a) ``figure of
8'', (b) ``dumbbell'', (c) ``$[1,2,2,1]$ chain'', (d) ``$[3,2,1]$
chain''.}
\label{fig:appendix_examples}
\end{figure}

The results are summarized in Table~\ref{tab:summary_of_numerics}.
Example (b) satisfies the assumptions of Theorem~\ref{thm:binomial_distr}
and therefore has a binomial distribution, which we confirm both analytically
and numerically. Examples (a), (d) show that these assumptions are
essential as there are graphs with non-binomial distribution. Example
(c) shows that not all graphs with binomial distribution can be characterized
as edge-separated, so the interesting question of characterizing all
the graphs with binomial distributions remains open.

\begin{table}
  \centering
  \begin{tabular}{|c||c|c|}
    \hline
    & Edge-separated
    & \parbox[c][2em][c]{0.3\textwidth}{\centering Not edge-separated} \\
    \hline \hline
    Binomial distribution
    & \parbox[c][2em][c]{0.3\textwidth}{\centering (b) ``dumbbell''}
    & \parbox[c][2em][c]{0.3\textwidth}{\centering (c) ``$[1,2,2,1]$ chain''}\\
      \hline
      Non-binomial distribution
    &
    & \parbox[c][5em][c]{0.3\textwidth}{\centering (a) ``figure of
      8'',\\
    (d) ``$[3,2,1]$ chain''}\\
    \hline
  \end{tabular}
  \caption{A summary of the results of Section~\ref{sec:examples}.}
  \label{tab:summary_of_numerics}
\end{table}

\subsection{A figure of 8 graph}
\label{sec:figure8}

Consider the graph shown in Figure~\ref{fig:appendix_examples}(a)
(A figure of 8 graph). The torus which describes this graph is $\T^{2}=\R^{2}/(2\pi\Z)^{2}$.
Using the coordinates $\left(x_{1},x_{2}\right)\in\T^{2}$, the real
secular function~(\ref{eq:Real_sec_func}) can be calculated to be
\begin{equation}
F_{R}\left(x_{1},x_{2};\alpha_{1},\alpha_{2}\right)=2\left(\cos\alpha_{2}\sin x_{1}+\cos\alpha_{1}\sin x_{2}-\sin\left(x_{1}+x_{2}\right)\right).\label{eq:FR_figure8}
\end{equation}
For $\av=0$ we get a nice form of
\begin{align*}
F_{R}\left(x_{1},x_{2};0,0\right) & =8\sin\left(\frac{x_{1}}{2}\right)\sin\left(\frac{x_{2}}{2}\right)\sin\left(\frac{x_{1}+x_{2}}{2}\right).
\end{align*}
It is not hard to show that (see also Figure \ref{fig:figure_8_torus}),
\begin{align*}
\Sreg & =\left\{ \left(x_{1},x_{2}\right)\in\T^{2}~:~\,x_{1}=0~\textrm{or}~x_{2}=0~\textrm{or}~x_{1}+x_{2}\equiv0\thinspace\thinspace(\textrm{mod }2\pi)\right\} \backslash\left\{ \left(0,0\right)\right\} \\
\Sigma_{0} = \Sigma_{F} & =\left\{ \left(\pi,\pi\right)\right\} \cup\left\{ \left(x_{1},x_{2}\right)\in\T^{2}~:~\,x_{1}=0~\textrm{or}~x_{2}=0\right\} \backslash\left\{ \left(0,0\right)\right\} \\
 & \Sigma^{g}=\Sreg\backslash\Sigma_{0}=\left\{ \left(x_{1},x_{2}\right)\in\T^{2}~:~\,x_{1}+x_{2}\equiv0\thinspace\thinspace(\textrm{mod }2\pi)\right\} \backslash\left\{ \left(0,0\right),\thinspace\left(\pi,\pi\right)\right\}.
\end{align*}
Note that $\Sigma_{0}$ is defined in (\ref{eq:zero_Sigma}) as a subset
of $\Sreg$ for which the corresponding eigenfunctions vanish at some
vertex. Also, $\Sigma_{F}$ is defined in (\ref{eq:Sigma_faces}) as a subset of $\Sigma{0}$, for which the corresponding eigenfunctions are supported on a loop. See also Figure \ref{fig:Sigma_decomposition} which shows this decomposition of the secular manifold.

\begin{figure}
  (a)\includegraphics[scale=0.6]{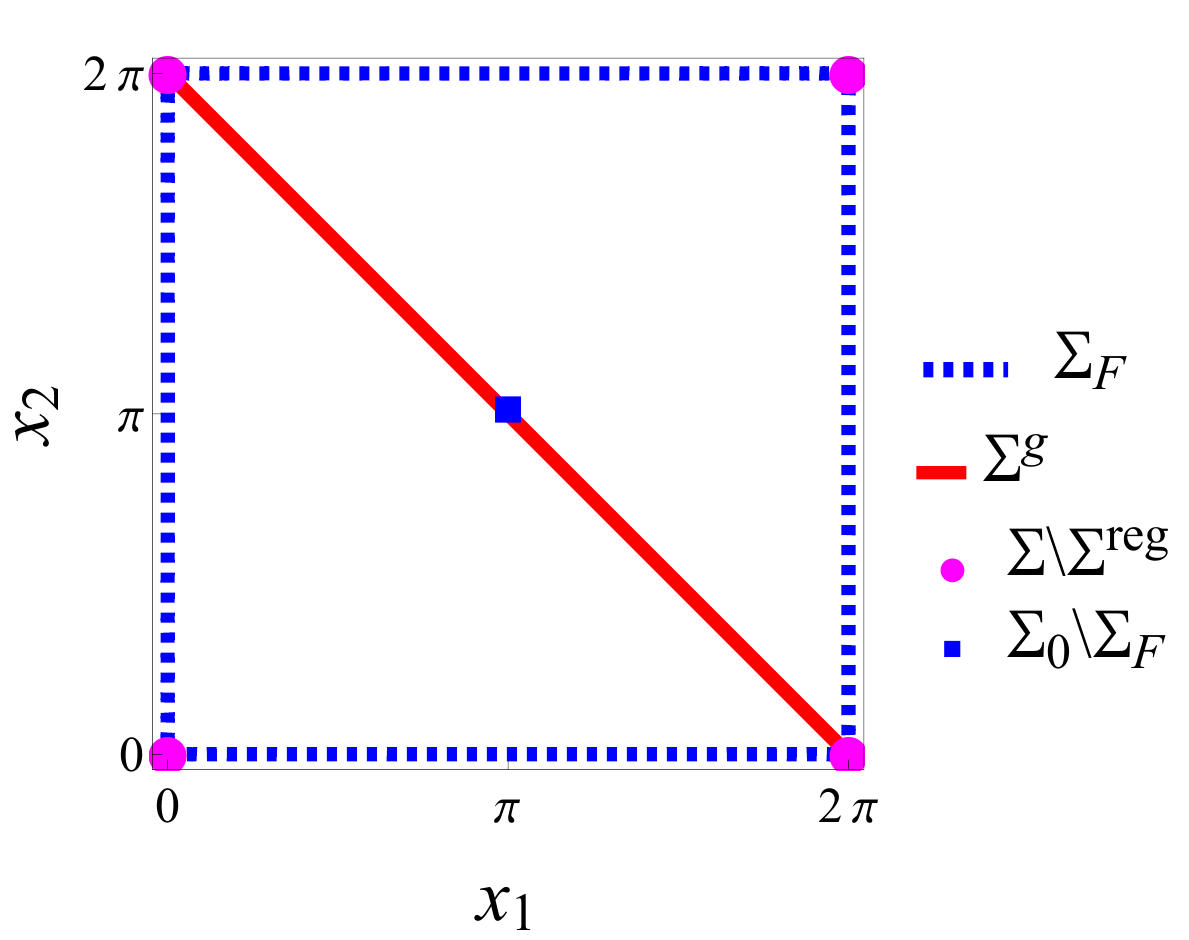}
  \qquad 
  (b)\includegraphics[scale=1.5]{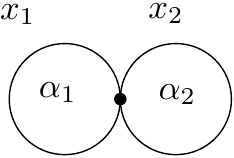}
  \caption{(a) The secular manifold of the ``figure of 8'' graph with
    different subsets ($\Sreg$, $\Sigma_{0}$, $\Sigma_{F}$ and
    $\Sgen$) highlighted by different colors. (b) The ``figure of 8'' graph.}
\label{fig:figure_8_torus}
\end{figure}

In addition, a straightforward calculation shows that for $\xv\in\Sigma^{g}$
(for which $\sin\left(\frac{x_{1}+x_{2}}{2}\right)=0$) we get
\[
\nabla_{x}F_{R}\cdot\lv=4\sin\left(\frac{x_{1}}{2}\right)\sin\left(\frac{x_{2}}{2}\right)\cos\left(\frac{x_{1}+x_{2}}{2}\right)\left(l_{1}+l_{2}\right)=-4\sin\left(\frac{x_{1}}{2}\right)\sin\left(\frac{x_{2}}{2}\right)\left(l_{1}+l_{2}\right)<0.
\]
Thus
\begin{align}
-\frac{H_{\alpha}F_{R}}{\nabla_{x}F_{R}\cdot\lv} & =-\frac{1}{4\sin\left(\frac{x_{1}}{2}\right)\sin\left(\frac{x_{2}}{2}\right)\left(l_{1}+l_{2}\right)}\begin{pmatrix}2\sin\left(x_{2}\right) & 0\\
0 & 2\sin\left(x_{1}\right)
\end{pmatrix},\\
\sigma\left(\xv\right) & =\M\left(-\frac{H_{\alpha}F_{R}}{\nabla_{x}F_{R}\cdot\lv}\right)=\M\begin{pmatrix}-\sin\left(x_{2}\right) & 0\\
0 & -\sin\left(x_{1}\right)
\end{pmatrix}
\end{align}
and as for $\xv\in\Sigma^{g}$ we have $x_{1}+x_{2}=2\pi$, this gives
\begin{equation}
\sigma\left(\xv\right)=\M\begin{pmatrix}\sin\left(x_{1}\right) & 0\\
0 & -\sin\left(x_{1}\right)
\end{pmatrix}\equiv1,\label{eq:surplus_of_figure_8}
\end{equation}
with the following local surplus functions
\begin{align*}
\forall\xv\in\Sigma^{g}\,\,,\sigma^{\left(i\right)}\left(x_{i}\right) & =\begin{cases}
0 & x_{i}\in\left(0,\pi\right)\\
1 & x_{i}\in\left(\pi,2\pi\right)
\end{cases}.
\end{align*}
It is obvious from (\ref{eq:surplus_of_figure_8}) that the surplus
distribution is not binomial. In addition, this example nicely demonstrates
that local surpluses may be anti correlated, $\Prc{\sigma^{(1)}=j}{\sigma^{(2)}=j}=0$.

\begin{remark} Another approach in this simple case, would be to
notice that for a given choice of rationally independent lengths $\lv$,
we have two kinds of eigenfunctions. The first kind are the eigenfunctions
supported on one of the loops and zero on the other, with corresponding
eigenvalues in $k\in\frac{2\pi}{l_{1}}\Z\cup\frac{2\pi}{l_{2}}\Z$.
The second kind of eigenfunctions can be obtained from an eigenfunction
of a circle of length $l_{1}+l_{2}$, under identification of two
points with the same function value which are at $l_{1}$ distance
apart. Such eigenfunctions, which are generic, would correspond to
eigenvalues $k\in\frac{2\pi}{l_{1}+l_{2}}\Z$ and their nodal count
will therefore be $\phi=\frac{k\left(l_{1}+l_{2}\right)}{\pi}$. It
remains to figure the position in the spectrum of such an eigenvalue.
One can check that the two sets $A=\frac{2\pi}{l_{1}}\Z\cup\frac{2\pi}{l_{2}}\Z$
and $B=\frac{2\pi}{l_{1}+l_{2}}\Z$ interlace and as the second eigenvalue
after $k_{1}=0$ will be $k_{2}=\frac{2\pi}{l_{1}+l_{2}}$, then we
get that the generic eigenvalues will be the even ones $k_{2n}=n\frac{2\pi}{l_{1}+l_{2}}$
with
\[
\phi_{2n}=\frac{k_{2n}\left(l_{1}+l_{2}\right)}{\pi}=2n~~\Rightarrow~~\sigma\equiv1
\]
\end{remark}

\subsection{A dumbbell graph}

\label{sec:dumbbell}

Consider the graph shown in Figure~\ref{fig:dumbbell_secular}(b)
(``dumbbell'' graph).

Its real secular function (\ref{eq:Real_sec_func}) can be calculated
to be
\begin{multline}
  F_{R}\left(x_{1},x_{2},x_{3};\alpha_{1},\alpha_{2}\right)
  = \frac{16}{9}\cos x_{2} \Big(\cos\alpha_{1}\sin
    x_{3}+\cos\alpha_{2}\sin
    x_{1}-\sin\left(x_{1}+x_{3}\right)\Big) \\
  -\frac{8}{9}\sin x_{2}\Big(4\left(\cos\alpha_{1}-\cos
      x_{1}\right)\left(\cos\alpha_{2}-\cos x_{3}\right)-\sin
    x_{1}\sin x_{3}\Big).
  \label{eq:FR_dumbbell}
\end{multline}

Observe that by taking the limit $x_{2}\rightarrow0$ we recover the
secular function of the figure of 8 graph, (\ref{eq:FR_figure8}),
up to a factor of $\frac{8}{9}$, which demonstrates the result of
Lemma \ref{lem:contracting_an_edge}.  

For $\av=0$ we get for the secular function
\begin{multline}
  F_{R}\left(x_{1},x_{2},x_{3};0,0\right) =\frac{16}{9}\sin\frac{x_{1}}{2}\sin\frac{x_{3}}{2}
  \left(-\sin x_{2}\left(3\cos\frac{x_{1}-x_{3}}{2}-5\cos\frac{x_{1}+x_{3}}{2}\right) \right. \\
  \left. +4\cos x_{2}\sin\frac{x_{1}+x_{3}}{2}\right)
  \label{eq:FR_dumbbell_zero_alpha}
\end{multline}
and correspondingly the Hessian is
\begin{multline}
  H_{\alpha}F_{R}\left(x_{1},x_{2},x_{3};0,0\right) \\
  = \frac{8}{9}\begin{pmatrix}4\sin x_{2}\left(1-\cos x_{3}\right)-2\cos x_{2}\sin x_{3} & 0\\
    0 & 4\sin x_{2}\left(1-\cos x_{1}\right)-2\cos x_{2}\sin x_{1}
  \end{pmatrix}.
\end{multline}
 We may use (\ref{eq:FR_dumbbell_zero_alpha}) in order to extract
$x_{2}\left(x_{1},x_{3}\right)$ for points $\xv\in\Sigma^{g}$ and
thus get the following expressions in terms of $x_{1},x_{3}$ solely,
\begin{equation*}
  -\frac{H_{\alpha}\left(F\right)}{\nabla F_{R}\cdot\lv}
  = \frac
  {
    \begin{pmatrix}
      4\sin\frac{x_{3}}{2}\cos\frac{x_{1}}{2}C(x_3) & 0\\
      0 & 4\sin\frac{x_{1}}{2}\cos\frac{x_{3}}{2}C(x_1)
    \end{pmatrix}
  }
  {2\sin\frac{x_{1}}{2}\sin\frac{x_{3}}{2}
    \left(2C(x_3)l_1 + C(x_3)C(x_1)l_2 + 2C(x_1)l_3\right)},\\
\end{equation*}
where $C(x) = 5-3\cos x$ is never zero.  Therefore,
\begin{equation*}
  \sigma\left(\xv\right) = 
  \M\left(-\frac{H_{\alpha}\left(F\right)}{\nabla
  F_{R}\cdot\lv}\right)
  = \M
  \begin{pmatrix}
    \cot\frac{x_{1}}{2} & 0\\
    0 & \cot\frac{x_{3}}{2}
  \end{pmatrix}.
\end{equation*}
By the above, the local surplus functions are given by
\begin{align*}
\sigma^{\left(1\right)}\left(x_{1}\right)=\begin{cases}
0 & x_{1}\in\left(0,\pi\right)\\
1 & x_{1}\in\left(\pi,2\pi\right)
\end{cases}, & \quad\quad\sigma^{\left(2\right)}\left(x_{3}\right)=\begin{cases}
0 & x_{3}\in\left(0,\pi\right)\\
1 & x_{3}\in\left(\pi,2\pi\right)
\end{cases}.
\end{align*}

\begin{figure}
  (a)\includegraphics[scale=0.75]{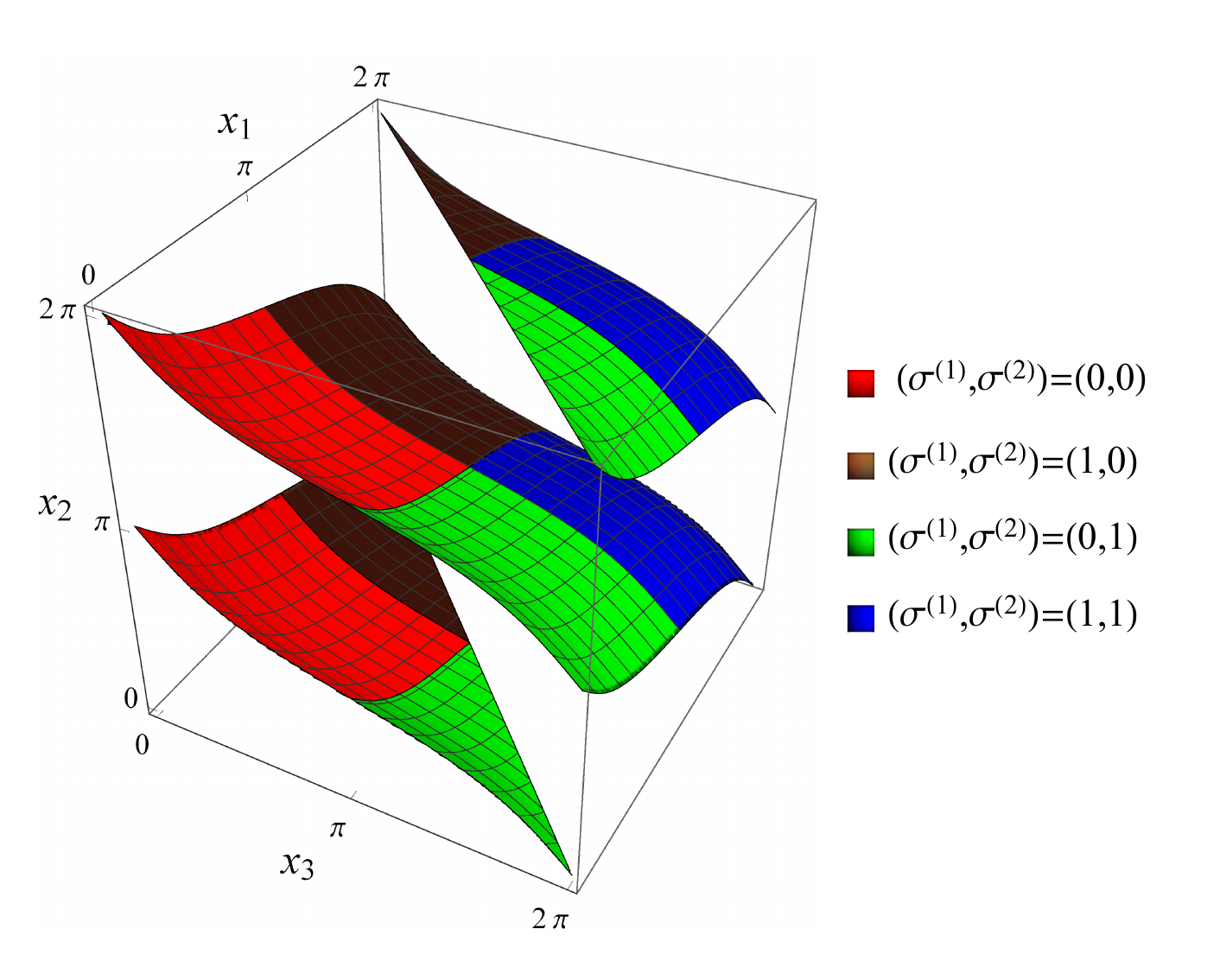} 
  (b)\includegraphics[scale=1.5]{dumbbell}
  \caption{The generic secular manifold $\Sigma^{g}$ for the
    ``dumbbell graph''.  It is depicted in four colors, according to
    all possible values for the local surpluses. (b) The ``dumbbell''
    graph with torus coordinates and magnetic fluxes marked on
    corresponding edges.}
\label{fig:dumbbell_secular}
\end{figure}

Figure~\ref{fig:dumbbell_secular}(a) shows the secular manifold of
the ``dumbbell'' graph with different local nodal surpluses indicated
by color. In Figure~\ref{fig:dumbbell_hist}, we give a normalized
histogram of the nodal surplus for the first $10^{6}$ eigenfunctions
calculated numerically for the rationally independent lengths $\lv=\left(\pi,e,1\right)$.
We compare it in the figure to the binomial distribution $\Bin\left(2,\frac{1}{2}\right)$
and find a perfect match according to the prediction of Theorem~\ref{thm:binomial_distr}.

\begin{figure}
\includegraphics[width=0.45\textwidth]{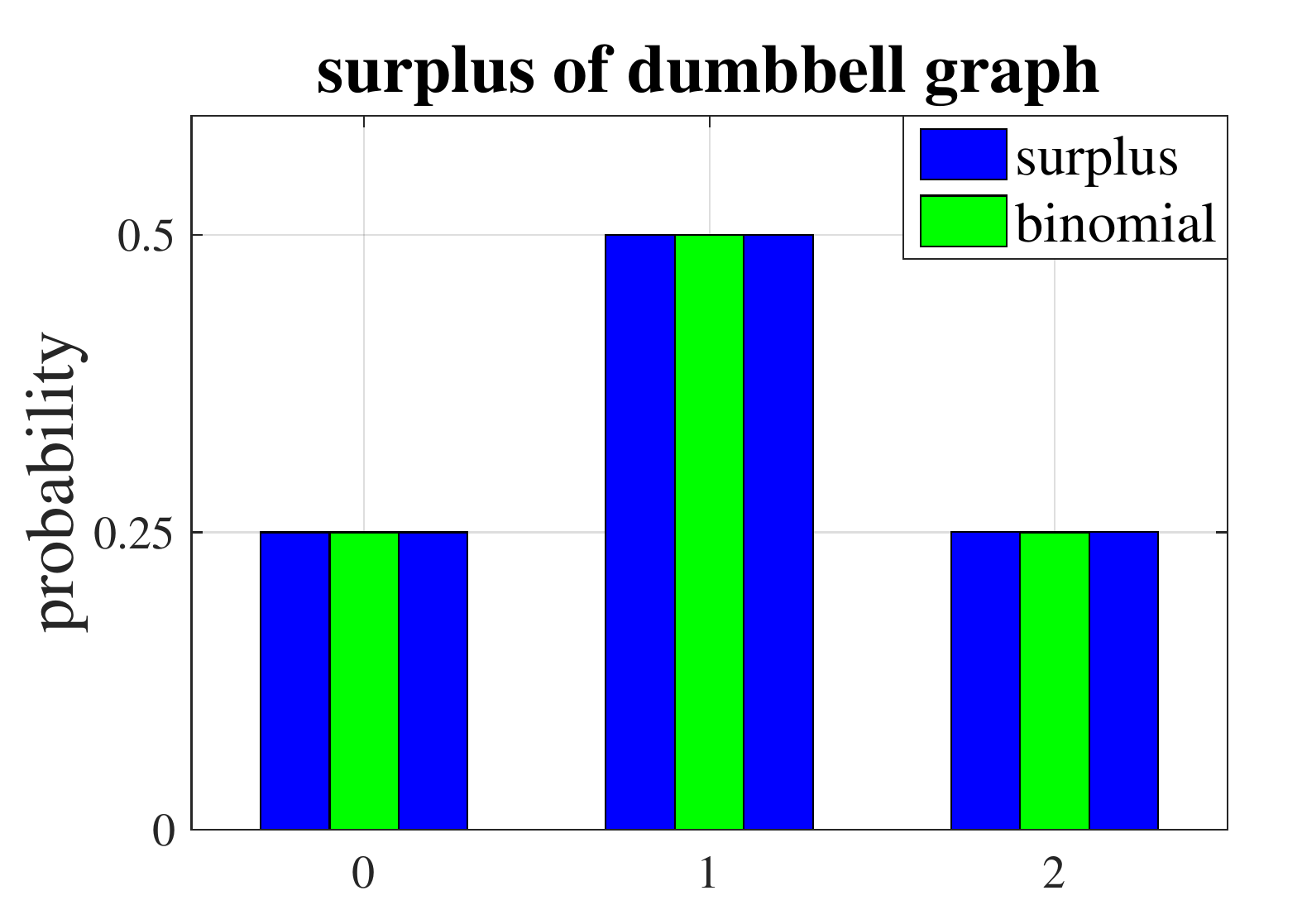} 
\caption{ The normalized histogram of the nodal surplus of the ``dumbbell''
graph calculated from the first $10^{6}$ eigenfunctions, compared
to the relevant binomial distribution.}
\label{fig:dumbbell_hist}
\end{figure}


\subsection{A $[1,2,2,1]$ pumpkin chain}

Consider the $[1,2,2,1]$ chain graph shown in Figure~\ref{fig:pumpkin1221_hist}(b).
In Figure~\ref{fig:pumpkin1221_hist}(a), we give a normalized histogram
of the nodal surplus for the first $10^{6}$ eigenfunctions calculated
numerically for the rationally independent lengths 
\begin{equation*}
  \lv=\left(\pi,e,1,\sqrt{2},\sqrt{3},\sqrt{5}\right).
\end{equation*}
We compare it in the figure to the binomial distribution $\Bin\left(2,\frac{1}{2}\right)$
and find that they match. This is in spite of the fact that this graph
does not satisfy the assumptions of Theorem~\ref{thm:binomial_distr},
as its cycles are not edge-separated. This numerical finding calls
for a further investigation.

\begin{figure}
  (a)\includegraphics[width=0.45\textwidth]{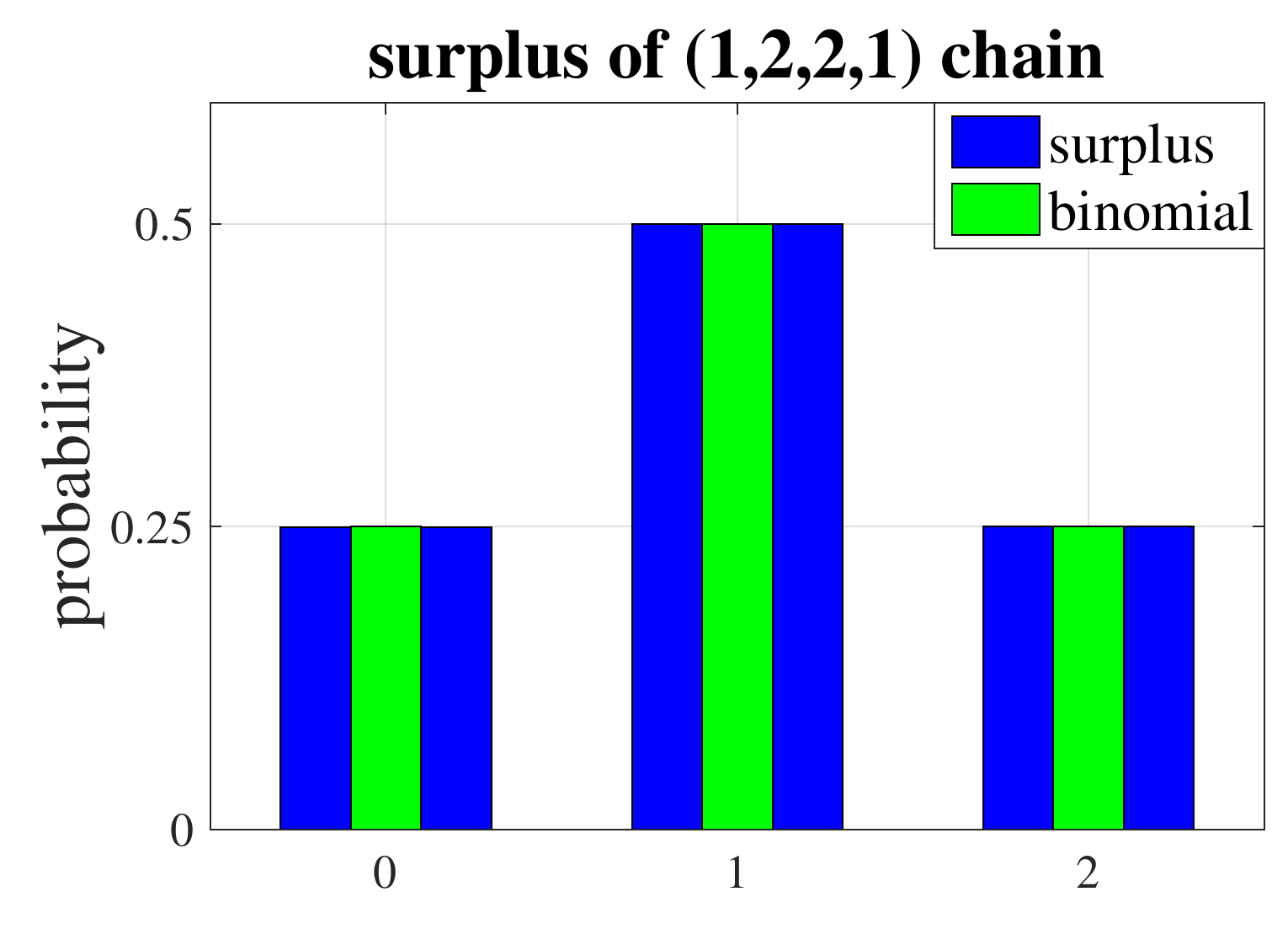}
  \qquad
  (b)\includegraphics[scale=1.5]{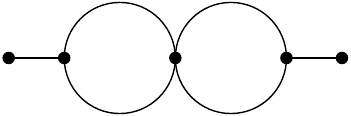} 
  \caption{(a) The normalized histogram of the nodal surplus of the
    $[1,2,2,1]$ pumpkin chain calculated from the first $10^{6}$
    eigenfunctions, compared to the relevant binomial
    distribution. (b) The $[1,2,2,1]$ pumpkin chain graph.}
\label{fig:pumpkin1221_hist}
\end{figure}


\subsection{A $[3,2,1]$ pumpkin chain}

\label{sec:321pumpkin}

Consider the $[3,2,1]$ chain graph shown in Figure~\ref{fig:pumpkin321_hist}(b).
In Figure~\ref{fig:pumpkin321_hist}(a), we show a normalized histogram
of the nodal surplus for the first $10^{6}$ eigenfunctions calculated
numerically for the rationally independent lengths $\lv=\left(\pi,e,1,\sqrt{2},\sqrt{3},\sqrt{5}\right)$.
We compare it in the figure to the binomial distribution $\Bin\left(3,\frac{1}{2}\right)$
as $\beta=3$ for this graph.

\begin{figure}
  (a) \includegraphics[width=0.45\textwidth]{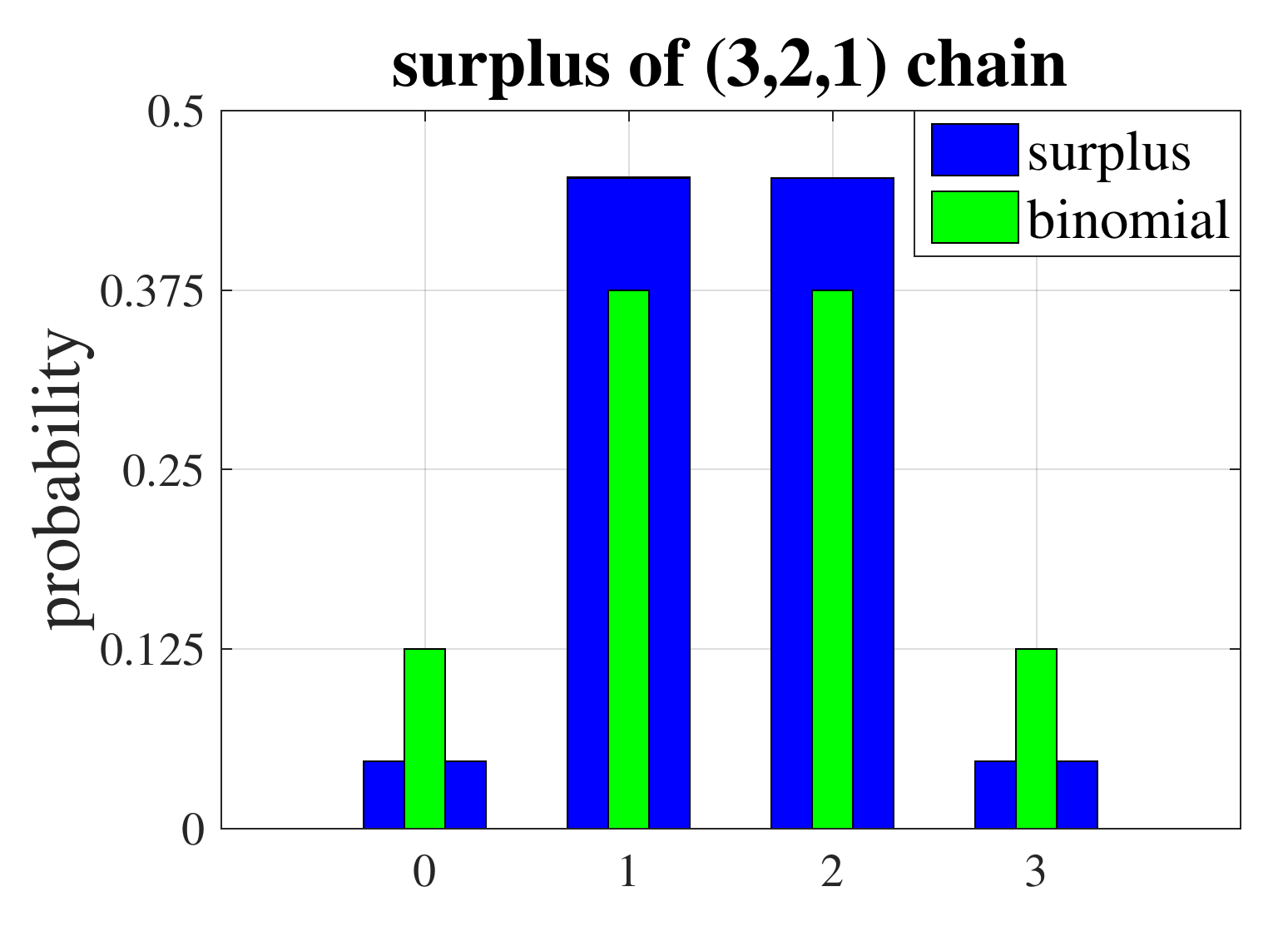}
  \qquad
  (b)\includegraphics[scale=1.5]{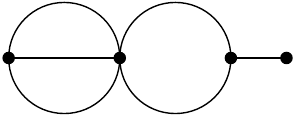}
  \caption{(a) The normalized histogram of the nodal surplus of the
    $[3,2,1]$ chain graph calculated from the first $10^{6}$
    eigenfunctions, compared to the relevant binomial
    distribution. (b) The $[3,2,1]$ chain graph.}
\label{fig:pumpkin321_hist}
\end{figure}

It is easy to notice that there is no match and the nodal surplus
probability of the $\left(3,2,1\right)$ chain graph is not binomial.
We further investigate this graph by examining the conditional probabilities
of its local surpluses. Note that this graph has two vertex-separated
blocks, of Betti numbers, $\beta^{(1)}=2,~\beta^{(2)}=1$. First,
we calculate numerically the conditional probability, $\Prc{\siloc{2}=s}{\siloc{1}}$
for different values of $\siloc{1}$ and see that it is not symmetric
(shown in Figure\,\ref{fig:/pumpkin321_conditional_prob_hist}(a)).
Then, we do the same for the other conditional probability, $\Prc{\siloc{1}=s}{\siloc{2}}$ in Figure\,\ref{fig:/pumpkin321_conditional_prob_hist}(b))
and once again find no symmetry. This demonstrates that the property
of independently symmetric local surpluses (see Theorem \ref{thm:indep_symmetric})
does not hold for this graph.

\begin{figure}
(a)\includegraphics[width=0.45\textwidth]{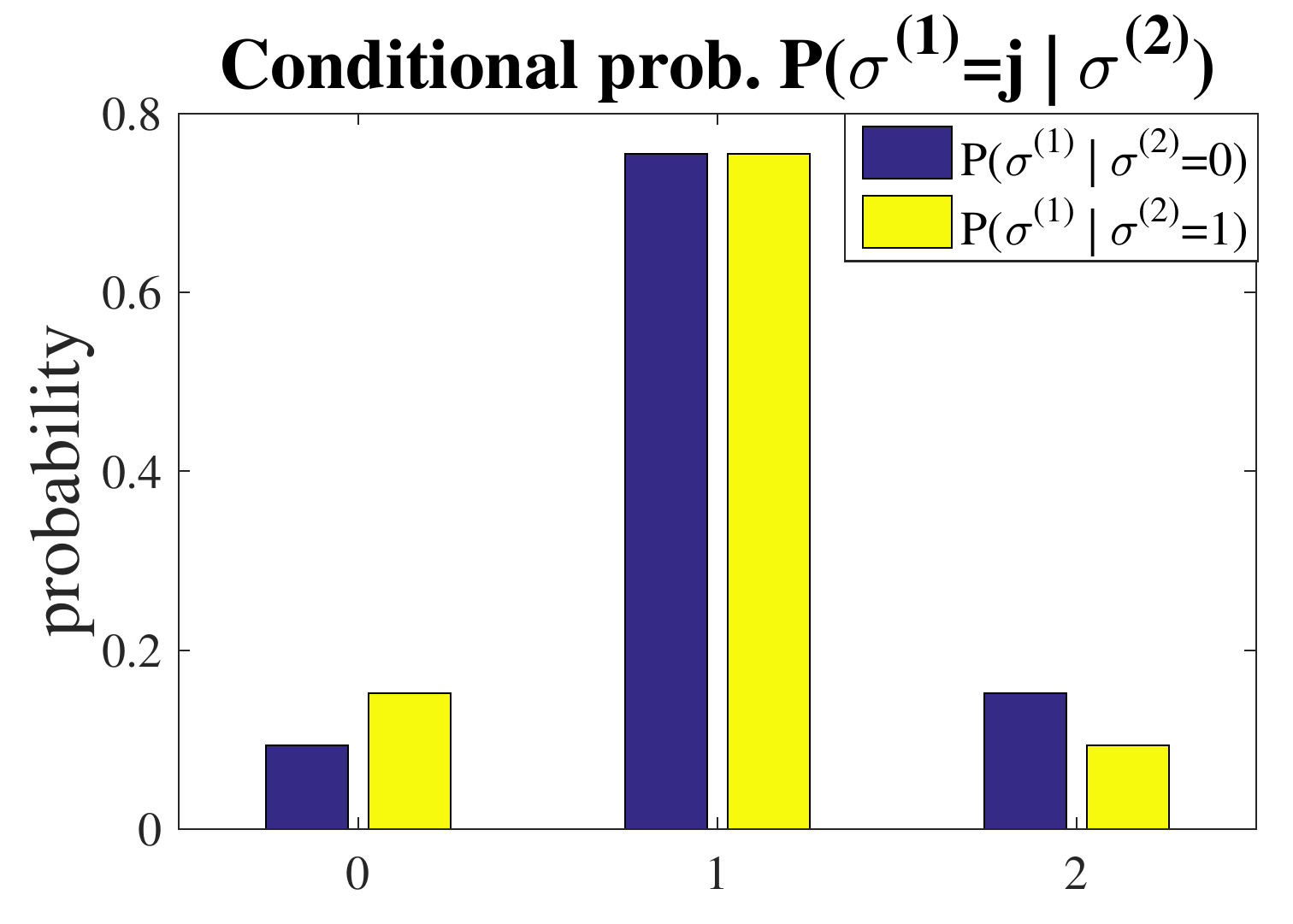}\hfill{}
(b)\includegraphics[width=0.45\textwidth]{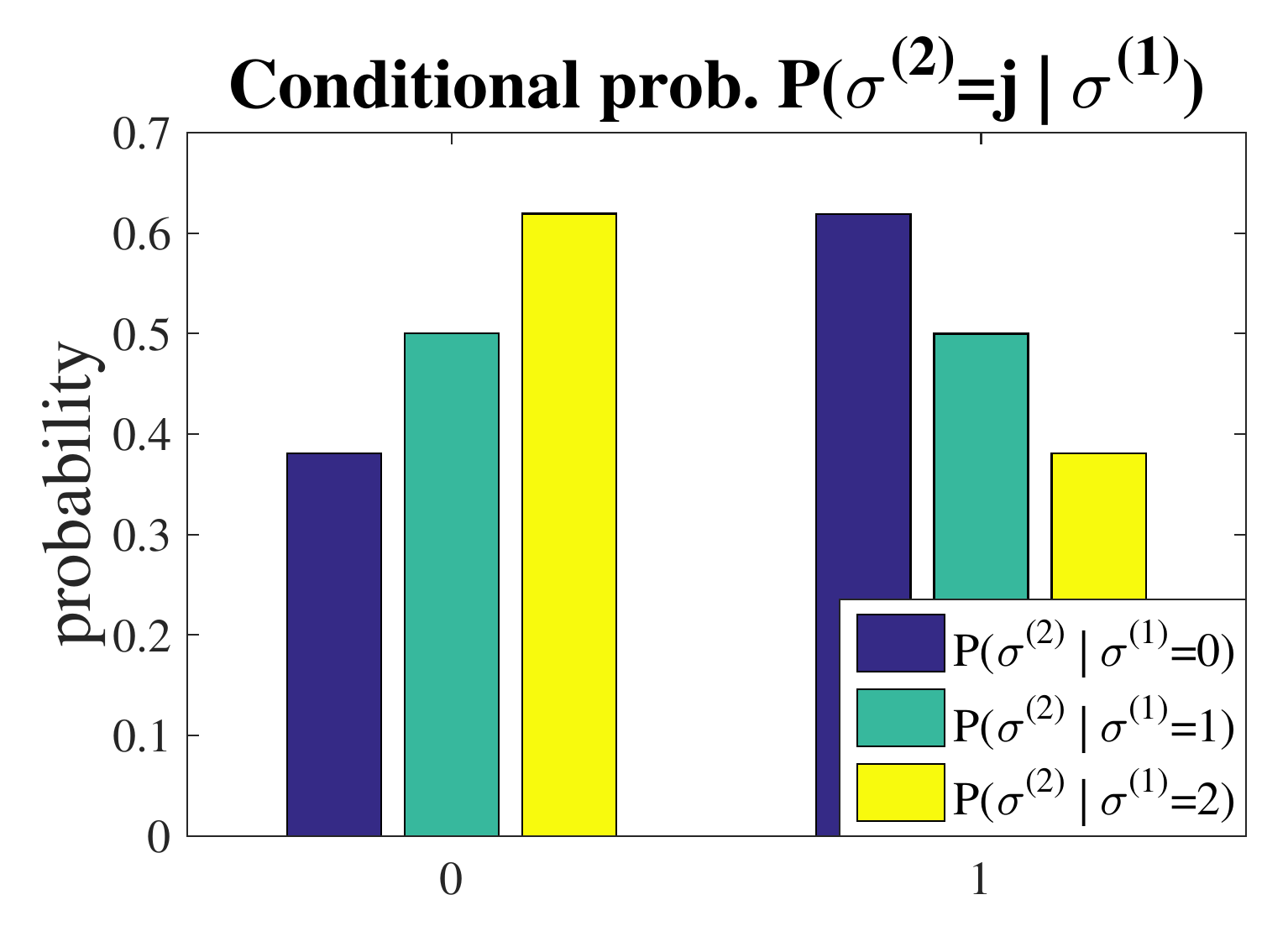}
\caption{The normalized histograms of the conditional probability of
  local surpluses for the graph in
  Figure~\ref{fig:appendix_examples}(d) calculated from the first
  $10^{6}$ eigenfunctions.  (a) $\Prc{\siloc{1}=j}{\siloc{2}}$, (b)
  $\Prc{\siloc{2}=j}{\siloc{1}}$ }
\label{fig:/pumpkin321_conditional_prob_hist}
\end{figure}


\section*{Acknowledgment}

The collaboration that made this project possible was supported, in
part, by the Binational Science Foundation Grant (Grant
No.~2016281).  GB was partially supported by NSF grant DMS-1410657. RB
and LA were supported by ISF (Grant No.~494/14). RB was supported by
Marie Curie Actions (Grant No.~PCIG13-GA-2013-618468).


\def\cprime{$'$}

\end{document}